\documentclass{amsart}
\usepackage{amssymb}
\usepackage{graphicx}
\usepackage{multicol}
\usepackage{longtable}
\newtheorem{theorem}{Theorem}[section]

\newtheorem{corollary}[theorem]{Corollary}
\newtheorem{example}[theorem]{Example}
\theoremstyle{definition}

\theoremstyle{remark}

\numberwithin{equation}{section}

\begin{document}
	\title{Construction of Cyclic Codes over a Class of Matrix Rings}
	\author{Soham Ravikant Joshi} 
	\address{Department of Mathematics,	Indian Institute of Technology Patna, Patna-801106}
	\curraddr{}
	\email{E-mail: soham\_2421ma14@iitp.ac.in} 
	\thanks{}
	
	\author{Shikha Patel}
	\address{Department of Mathematics, Indian Institute of Information Technology Bhopal, 462003, India}
	\curraddr{}
	\email{shikha\_1821ma05@iitp.ac.in}
	\thanks{}
	
	\author{Om Prakash$^{\star}$}
	\address{Department of Mathematics, Indian Institute of Technology Patna, Patna-801106}
	\curraddr{}
	\email{om@iitp.ac.in}
	\thanks{* Corresponding author}

	\subjclass[2010]{12E20, 94B05, 94B15}
	
	\keywords{Finite field ; cyclic codes ; matrix rings ;  Gray map ; optimal codes.}
	\date{}
	
	\dedicatory{}
	
\begin{abstract}
     Let $ \mathbb F_2[u]/ \langle u^k \rangle= \mathbb F_2+u\mathbb F_2+u^2\mathbb F_2+\cdots+u^{k-1}\mathbb F_2 ,$ where $u^k=0$ for a positive integer $k$, and $\mathcal{R}=M_4 (\mathbb F_2( u)/ \langle u^k \rangle)$  be the finite noncommutative non-chain matrix ring of order $4\times4$. This paper presents the construction of cyclic codes over the finite field $\mathbb F_{16}$ via the considered matrix ring $\mathcal{R}$.   In this connection, first, we discuss the structure of the ring $\mathcal{R}$ and show that $\mathcal{R}$ is isomorphic to the ring $( \mathbb F_{16}+ v\mathbb F_{16} + v^2\mathbb F_{16} + v^3\mathbb F_{16}) + u(\mathbb F_{16} + v\mathbb F_{16} + v^2\mathbb F_{16} + v^3\mathbb F_{16}) + u^2(\mathbb F_{16} + v\mathbb F_{16} + v^2\mathbb F_{16}+ v^3\mathbb F_{16}) + \cdots + u^{k-1}(\mathbb F_{16} + v\mathbb F_{16} + v^2\mathbb F_{16} + v^3\mathbb F_{16})$
where  $v^4=0, u^k=0, u^iv^j=v^ju^i$ for $i \in \{1,\dots, k-1\}$ and $j \in \{1, 2, 3\}$. Then, we establish the form of ideals of the ring $\mathcal{R}$ and related cyclic codes over $\mathcal{R}$. Further, we show that these cyclic codes can be written as the direct sums of $\mathcal{R}$-submodules of $\frac{\mathcal{R}[x]}{<x^n-1>}$, and derive the formula for the cardinality of cyclic codes over $\mathcal{R}$. Then, we consider the Euclidean and Hermitian duals of the derived cyclic codes over $\mathcal{R}$. Under the module isometry for $\mathcal{R}$, we use the Bachoc map and the Gray map, which takes derived cyclic code over $\mathcal{R}$ to $\mathbb F_{16}$. Finally, we provide some non-trivial examples of linear codes over $\mathbb F_{16}$ with good parameters that support our derived results and compare a few codes with existing codes in the literature.
\end{abstract}
	
\keywords{ Finite field \and cyclic codes \and  matrix rings  \and  Gray map \and optimal codes}

    \maketitle
   \section{Introduction}
    
    Prange \cite{Pron} introduced cyclic codes  in $1957$ and has attracted the attention of several researchers since their elaboration. It is an important class of linear codes and plays a vital role in developing error-correcting codes. Researchers have done numerous revolutionary works on cyclic codes over finite fields and finite commutative rings. Furthermore, many well-known codes, such as BCH, Kerdock, Golay, Reed-Muller, Preparata, Justesen, and binary Hamming codes, are cyclic or derived from it. A cyclic code of length $n$ over a finite field $F$ is the ideal of the quotient ring $F[x]/\langle x^n-1 \rangle$. Being rich with algebraic properties and ease of implementation, researchers prefer this class of codes for encoding and decoding over traditional linear codes. In 1994, Hammons et al. \cite{ss} provided a new direction for linear codes over $\mathbb{Z}_4$ and produced several good nonlinear binary codes as Gray images of linear codes. Subsequently, researchers have shown interest in studying linear codes over finite rings.

    Aside from some notable works on finite commutative rings, codes on noncommutative rings have lately drawn the attention of many researchers; we refer \cite{Alah,G,Luo,F,Pal,W}. In fact, in algebraic constructions of modular lattices \cite{C}, the first concrete set of alphabets $A = M_2(\mathbb{F}_2)$ appeared in 1997. Later, Oggier et al. \cite{F} extended the theory for the noncommutative ring $M_2(\mathbb{F}_2)$ in the setting of space-time codes in 2012. Alahmadi et al. \cite{Alah} investigated cyclic codes over the same ring $M_2(\mathbb{F}_2)$ in 2013. Similarly, Falcunit and Sison \cite{Fal} investigated the structure of cyclic codes on $M_2(\mathbb{F}_p)$ in 2014 and acquired isometric pictures of these codes over the ring $\mathbb{F}_{p^2} + u\mathbb{F}_{p^2}$. Furthermore, Luo and Udaya \cite{Luo} described the structure of cyclic codes over the matrix ring $M_2(\mathbb{F}_2 + u\mathbb{F}_2)$ in 2018 and derived several optimal codes with even lengths over $\mathbb{F}_4$, while Bhowmick et al. \cite{Bandi} addressed cyclic codes and their dual over a matrix ring $M_2(\mathbb{Z}_4)$ in 2018. In addition, Pal et al. \cite{Pal} investigated cyclic codes over the matrix ring $M_4(\mathbb{F}_2)$ in 2019. Recently, Islam et al. \cite{Islam21b} in 2022 extended the previous work to the ring $M_2(\mathbb{F}_p + u\mathbb{F}_p)$, where $p \equiv 2$ or $3 \mod 5$ and explored cyclic codes of length $n$. Then Patel et al. \cite{s} extended this work to cyclic codes over $M_4(\mathbb{F}_{2} + u\mathbb{F}_{2})$ in 2022 itself.
    \section{ Motivation and  Background  }

    Motivated by \cite{F}, here we consider an extension of this setting by studying cyclic codes over the matrix ring $\mathcal{R}$.  In the context of slow fading MIMO (Multiple Input Multiple Output) channels, as considered in the \cite{F} on space-time coded modulations, robust code design is essential for reliable communication. Such channels benefit significantly from codes that ensure full diversity and a non-vanishing determinant. Paper demonstrated that when the inner code has a cyclic algebra structure over a number field, as for perfect space-time codes, an outer code can be designed via coset coding, more precisely, by taking the quotient of the algebra by a two-sided ideal, which leads to matrices over finite alphabets for the outer code. Hence, codes over the matrix ring \( M_4(\mathbb{F}_2) \), when properly structured, offer strong candidates for space-time code design. Specifically, they exploited the cyclic algebraic structure of \( M_4(\mathbb{F}_2) \), using the isomorphism with \( \mathbb{F}_{16} + e\mathbb{F}_{16} + e^2\mathbb{F}_{16} + e^3\mathbb{F}_{16} \), where \( e^4 = 1 \), to define an outer code structure. 
   
   On the other hand, our considered ring $\mathcal{R}$ inherits a similar algebraic structure but enriches the space with additional nilpotent elements, offering a larger alphabet and potential for increased flexibility in code design.  Our structure can similarly be viewed as a module over a non-commutative ring with a cyclic base, as used in the case of $M_4(\mathbb{F}_2)$. We propose that analogous determinant conditions can be developed for codes over $M_4 (\mathbb F_2 [u]/ \langle u^k \rangle)$, though this remains an open problem. The study of such codes thus aims not only to generalize the known results over $\mathbb{F}_{16}$, but also to potentially uncover new families of codes suitable for space-time coded modulation in slow fading MIMO channels.
    
    From \cite{F}, if we consider
    $$e=\begin{pmatrix}
    1&0&0&0\\
    0&0&1&0\\
    1&1&0&0\\
    0&0&1&1
    \end{pmatrix} , ~ \omega=\begin{pmatrix}
    0&1&0&0\\
    0&0&1&0\\
    0&0&0&1\\
    1&1&0&0
    \end{pmatrix},$$
    then $M_4( \mathbb F_2) $ is isomorphic to the ring $\mathbb F_{16}+e \mathbb F_{16}+e^2  \mathbb F_{16}+e^3 \mathbb F_{16}$, where  $(1 + e)^ 4 = 0$ and $\mathbb F_ {16}$ is isomorphic to $\mathbb F _2 [\omega]$. Also,
    $\mathbb F_{16}+e \mathbb F_{16}+e^2  \mathbb F_{16}+e^3 \mathbb F_{16}$ can be written as $\mathbb F_{16}+v \mathbb F_{16}+v^2  \mathbb F_{16}+v^3 \mathbb F_{16}$ with $v^ 4 = (1 + e)^ 4 = 0$. Hence,
    \begin{gather} \label{eq1}
    \mathcal{R} = \sum_{i=1}^{k}u^{i-1}( \mathbb F_{16}+ v\mathbb F_{16} + v^2\mathbb F_{16} + v^3\mathbb F_{16}) \nonumber \\   \text{where}~ v^4=0, u^k=0, u^iv^j=v^ju^i~ \text{for}~ i \in \{1,\dots, k-1\} ~\text{and}~ j \in \{1, 2, 3\}.
    \end{gather}
    Our contributions in this paper are as follows: 
    \begin{enumerate}
\item  We first establish the ring structure of $M_4 (\mathbb F_2 [u]/ \langle u^k \rangle)$ and characterize the cyclic codes over it. Then  we propose the structure of Euclidean and Hermitian dual of the derived cyclic codes over $\mathcal{R}$.
\item  We construct the Gray map and the Bachoc map for the correspondence between  $\mathbb F_{16}^{4kn}$ and $M_4 (\mathbb F_2 [u]/ \langle u^k \rangle)$. Also, we provide some non-trivial examples of linear codes over $\mathbb F_{16}$ with good parameters that support our derived results and compare a few codes with existing codes in the literature.
\end{enumerate}
{The remainder of the  paper is organized as follows: Section $3$ establishes the ring structure of $M_4 (\mathbb F_2 [u]/ \langle u^k \rangle)$ and discuss the construction of cyclic codes over it. Section $4$ discusses the Euclidean dual and the Hermitian dual, respectively, of the derived cyclic codes over $\mathcal{R}$. Section $5$ defines the Gray map and the Bachoc map for the correspondence between  $\mathbb F_{p^3}^{4kn}$ and $ M_3 (\mathbb F_p+u\mathbb F_p+u^2\mathbb F_p+\cdots+u^{k-1}\mathbb F_p ).$ Section $6$ includes some examples of non-trivial codes to support our obtained results. Finally, Section $7$ concludes the paper.}

   \section{Cyclic codes over $\mathcal{R}$ }
	  A cyclic code of length $n$ over $\mathcal{R}$ is also an ideal of $\mathcal{R}_n=\mathcal{R}[x]/ \langle x^n -1 \rangle$ where each codeword $(c_0,c_1,\dots,c_{n-1})$ corresponds to a polynomial $c_0 +c_1x+\cdots+c_{n-1}x^{n-1}$. Let $\mathcal{R}[x]$ be the ring of polynomials over $\mathcal{R}$. For any  $a\in \mathcal{R}$,  $\overrightarrow{a}$ be the polynomial reduction of $a$ modulo $u$ and $v$ as defined in Equation (\ref{eq1}). We consider a polynomial reduction mapping $\Delta : \mathcal{R}[x] \rightarrow \mathbb F_{16}[x]$ defined by
    
    $$\sum_{i=0}^{n-1}a_ix^i \mapsto \sum_{i=0}^{n-1}\overrightarrow{a_i}x^i.$$
    
    Then a polynomial $f(x)$ is called basic irreducible over $\mathcal{R}$ if $\Delta(f(x))$  is irreducible over $\mathbb F _{16}$. If $n$ is odd and $x^n - 1 =\prod_{r=1}^{m} f_r$, where $f_ r$'s are irreducible polynomials in indeterminate $x$ over $\mathbb F _{16}$, then the  polynomial $x^n-1$ can be uniquely factorized into pairwise coprime polynomials over $\mathbb F _{16}$.
    
    \begin{theorem}\label{th1}
    
    Let $n$ be an odd positive integer and $x^n - 1 =\prod_{r=1}^{m} f_r$, where $f_ r$'s are basic irreducible polynomials over $\mathcal{R}$. Then
    
    $$\mathcal{R}_n=\mathcal{R}[x]/ \langle x^n -1 \rangle=\displaystyle\bigoplus_{r=1}^{m} \mathcal{R}[x]/ \langle f_r\rangle.$$
    
    \end{theorem}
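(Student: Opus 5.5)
The plan is a Chinese Remainder Theorem argument for ideals, carried out inside the polynomial ring $\mathcal{R}[x]$. The first point to fix is that $x$ is central in $\mathcal{R}[x]$ even though $\mathcal{R}$ is noncommutative. The factors $f_r$ must also have coefficients in a central subring. For this I will take the $f_r$ with coefficients in $\mathbb F_{16}\subset\mathcal{R}$; more generally, any lifts with central coefficients will do. By Equation (\ref{eq1}), $\mathbb F_{16}$ commutes with $u$ but not in general with $v$. So the natural choice is to work with polynomials whose coefficients lie in $\mathbb F_2[x]$-factors, or in the center $Z(\mathcal{R})\supseteq \mathbb F_2[u]/\langle u^k\rangle$. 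I will state this explicitly: the $f_r$ are taken in $\mathbb F_{16}[x]$, viewed inside $\mathcal{R}[x]$ via the embedding of $\mathbb F_{16}$. The only property I actually need is that each $\langle f_r\rangle$ is a two-sided ideal generated by an element commuting with the relevant polynomials. When $\mathbb F_{16}$ is not central, I will instead take the factorization over $\mathbb F_2$, which is central.

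\emph{Step 1 (coprimality).} Since $n$ is odd, $x^n-1$ is separable over $\mathbb F_{16}$. Hence the $f_r$ are pairwise coprime in $\mathbb F_{16}[x]$. For each $r$, put $\hat f_r=\prod_{s\neq r} f_s$. Then $f_r$ and $\hat f_r$ are coprime, so there exist $a_r,b_r\in\mathbb F_{16}[x]$ with $a_rf_r+b_r\hat f_r=1$. This identity holds verbatim in $\mathcal{R}[x]$, so $\langle f_r\rangle+\langle \hat f_r\rangle=\mathcal{R}[x]$. Iterating, $\langle f_r\rangle+\langle f_s\rangle=\mathcal{R}[x]$ for all $r\neq s$.

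\emph{Step 2 (intersection equals product).} Next I show $\bigcap_r\langle f_r\rangle=\langle x^n-1\rangle$. The inclusion $\supseteq$ is clear. For $\subseteq$, I argue by induction on the number of factors, using the B\'ezout identity: if $g\in\langle f_1\rangle\cap\langle f_2\rangle$, then
\[
g=g(a_1f_1+b_1f_2)=a_1f_1g+b_1f_2g\in\langle f_1f_2\rangle.
\]
This step requires the $f$'s and the B\'ezout coefficients to commute with $g$ in the right way. It works because the coefficients lie in a commutative subring and divisibility by a monic polynomial with such coefficients behaves as in the commutative case: I will use the division algorithm by monic polynomials in $\mathcal{R}[x]$.

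\emph{Step 3 (CRT map).} Define
\[
\Phi:\mathcal{R}[x]\to\bigoplus_r\mathcal{R}[x]/\langle f_r\rangle,\qquad g\mapsto (g\bmod f_r)_r .
\]
This is a ring homomorphism, and by Step 2 its kernel is $\langle x^n-1\rangle$. For surjectivity, let $e_r=b_r\hat f_r$. Then $e_r\equiv1\pmod{f_r}$ and $e_r\equiv0\pmod{f_s}$ for $s\neq r$. Consequently $\sum_r g_re_r\mapsto(g_r)_r$. Therefore $\Phi$ induces the claimed isomorphism, and the $e_r$ are the corresponding orthogonal central idempotents of $\mathcal{R}_n$.

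\emph{Main obstacle.} The delicate step is the noncommutativity: ensuring that $\langle f_r\rangle$ is two-sided and that the B\'ezout manipulations are legitimate. I expect to handle it by insisting the factors and the B\'ezout coefficients lie in a commutative subring whose elements commute with $x$ and whose relevant elements are central in $\mathcal{R}[x]$. The first thing I will try is to reinterpret "basic irreducible" via the central subring $\mathbb F_2[u]/\langle u^k\rangle\subseteq Z(\mathcal{R})$, i.e.\ to use the scalar matrices. Failing that, I will verify directly that left multiplication by $f_r$ and right multiplication by $f_r$ generate the same ideal, because $f_r$ is monic with coefficients commuting with $x$.
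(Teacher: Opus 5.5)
Your B\'ezout/idempotent skeleton is the right one, but you build it on two-sided ideals, a ring homomorphism $\Phi$ and central idempotents. That framework fails for exactly the factors in the statement. The $f_r$ are irreducible over $\mathbb F_{16}$, so in general they have coefficients in $\mathbb F_{16}\setminus\mathbb F_2$, and these are not central in $\mathcal{R}$ (e.g.\ $f_2=x+w^3$ for $n=5$). For such a factor the two-sided ideal it generates is all of $\mathcal{R}[x]$. Indeed, since $x$ is central, $bf_2-f_2b=bw^3-w^3b$. For $b\in M_4(\mathbb F_2)$ not commuting with $w^3$, this is a nonzero element of the simple ring $M_4(\mathbb F_2)$, so the ideal contains $1$. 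Hence the target of $\Phi$ collapses: for $n=5$ it is just $\mathcal{R}[x]/\langle x+1\rangle\cong\mathcal{R}$, far smaller than $\mathcal{R}_5$. The $e_r$ are not central either, since the centre of $\mathcal{R}_n$ consists of scalar matrices while $e_2$ has coefficients outside $\mathbb F_2$.

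Neither fallback repairs this:
\begin{itemize}
\item Factoring over $\mathbb F_2$, or over the central subring $\mathbb F_2[u]/\langle u^k\rangle$, gives a true but coarser ring decomposition (two factors instead of five when $n=5$). That is not the theorem, and it is not what Theorem~\ref{th2} and the examples use.
\item The claim that $\mathcal{R}[x]f_r=f_r\mathcal{R}[x]$ because the coefficients of $f_r$ commute with $x$ is false. Every element of $\mathcal{R}$ commutes with $x$, yet left division of $b(x+w^3)$ by the monic $x+w^3$ leaves remainder $bw^3-w^3b\neq 0$.
\end{itemize}
Also, your Step 2 identity $g(a_1f_1+b_1f_2)=a_1f_1g+b_1f_2g$ moves $g$ past $\mathbb F_{16}$-coefficients, which is not allowed.

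The missing idea is that the theorem is a one-sided \emph{module} decomposition. That is how the paper reads it: its proof is a bare appeal to the Chinese Remainder Theorem for the right $\mathcal{R}$-module $\mathcal{R}_n$, followed by the remark that $\langle f_r\rangle$ is not two-sided. For a module decomposition, the only centrality needed is that of $x^n-1$. Work with right ideals $f_r\mathcal{R}[x]$ and keep $g$ on the right. If $g=f_1h_1=f_2h_2$, then $g=(a_1f_1+b_1f_2)g=f_1f_2(a_1h_2+b_1h_1)$, which uses only that $f_1,f_2,a_1,b_1$ lie in the commutative subring $\mathbb F_{16}[x]$. Induction then gives $\bigcap_r f_r\mathcal{R}[x]=(x^n-1)\mathcal{R}[x]$. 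The map $\sum_r e_rg_r\mapsto(g_r)_r$ gives surjectivity, because $e_r-1=-a_rf_r$ and $f_r$ divides $e_s$ in $\mathbb F_{16}[x]$ for $s\neq r$. The $e_r$ are then orthogonal, non-central idempotents summing to $1$. Therefore $\mathcal{R}_n=\bigoplus_r e_r\mathcal{R}_n\cong\bigoplus_r \mathcal{R}[x]/f_r\mathcal{R}[x]$ as right $\mathcal{R}[x]$-modules, hence as right $\mathcal{R}$-modules.
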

    \begin{proof}
    
    It follows by applying the Chinese Remainder Theorem on the right $\mathcal{R}$ - module \\ $\mathcal{R}[x]/ \langle x^n -1 \rangle $~\cite{F,R}.
    
     \end{proof}
    
    Since $ \langle f_r\rangle$ is not a two-sided ideal, we consider $\mathcal{R}[x]/ \langle f_r\rangle$ as a right $\mathcal{R}$-module. Now, we find all the possible forms of the submodules of $\mathcal{R}[x]/ \langle f_r\rangle$.

   \begin{theorem}\label{th2}
   
        If $f$ is a basic irreducible polynomial over $\mathcal{R}$, then the right $\mathcal{R}$-submodules of $\mathcal{R}_f=\mathcal{R}[x]/ \langle f\rangle$ are
        \begin{enumerate}
        
        \item $\{0\},  \langle u^iv^j\rangle$ for $i = \{0,1,2,\dots,k-1\}$, $j = \{0,1,2,3\}$;
        
        \item $ \langle u^i + v\beta_{2i} + v^2\beta_{3i} + v^3\beta_{4i}\rangle$ \text{and} $ \langle u^i , v , v^2 , v^3 \rangle $  \text{for} $i = \{1,2,\dots,k-1\}$;
        
        \item $\langle (u + u^2\alpha_{13} + u^3\alpha_{14} + \cdots + u^{k-1}\alpha_{1k})
        + v( \alpha_{21} + u\alpha_{22} + u^2\alpha_{23} + u^3\alpha_{24} + \cdots + u^{k-1}\alpha_{2k} )\\
        + v^2(  \alpha_{31} + u\alpha_{32} + u^2\alpha_{33}  + \cdots + u^{k-1}\alpha_{3k} )
        + v^3( \alpha_{41} + u\alpha_{42} + u^2\alpha_{43}  + \cdots + u^{k-2}\alpha_{4(k-1)} )\rangle   $;
        
        \item $ \langle v( u + u^2\alpha_{13} + u^3\alpha_{14} + \cdots + u^{k-1}\alpha_{1k} )
        + v^2(  \alpha_{21} + u\alpha_{22} + u^2\alpha_{23} + u^3\alpha_{24} + \cdots + u^{k-1}\alpha_{2k} )\\
        + v^3( \alpha_{31} + u\alpha_{32} + u^2\alpha_{33} + u^3\alpha_{34} + \cdots + u^{k-2}\alpha_{3(k-1)} )\rangle  $;
        
        \item $ \langle v^2(  u + u^2\alpha_{13} + u^3\alpha_{14} + \cdots + u^{k-1}\alpha_{1k} )
        + v^3( \alpha_{21} + u\alpha_{22} + u^2\alpha_{23} + u^3\alpha_{24} + \cdots + u^{k-2}\alpha_{2(k-1)} )\rangle  $;
        
        \item $ \langle  v^3( u + u^2\alpha_{13} + u^3\alpha_{14} + \cdots + u^{k-2}\alpha_{1(k-1)} )\rangle  $,
        
        \end{enumerate}
        where $\alpha_{ij} $, and $\beta_{mn}$ are units in $\mathbb F_{16}[x]$.
	\end{theorem}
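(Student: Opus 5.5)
The plan is to reduce the classification to that of right ideals in the finite local ring $\mathcal{A}=\mathbb F_{16}[x]/\langle \Delta(f)\rangle$ adjoined with $u$ and $v$. Since $f$ is basic irreducible, $\Delta(f)$ is irreducible over $\mathbb F_{16}$. So $K=\mathbb F_{16}[x]/\langle\Delta(f)\rangle$ is a finite field extension of $\mathbb F_{16}$, and by \eqref{eq1} we have
\[
\mathcal{R}_f\cong \sum_{i=0}^{k-1}\sum_{j=0}^{3}u^iv^jK,\qquad u^k=0,\; v^4=0 .
\]
The first step is to verify that this identification is compatible with the right $\mathcal{R}$-module structure, i.e.\ that a right $\mathcal{R}$-submodule of $\mathcal{R}_f$ is the same as a right ideal of $S=K[u,v]/\langle u^k,v^4\rangle$, with $\mathbb F_{16}$ twisted appropriately by $v$. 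In particular, right multiplication by $v$ and $u$ preserves the $(u,v)$-adic filtration. Next I record that $S$ is local with maximal ideal $\langle u,v\rangle$: every element with nonzero constant term in $K$ is a unit, because $u$ and $v$ are nilpotent. Hence every nonzero element can be written as $u^iv^j\cdot(\text{unit})$ plus terms of higher order.

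The second step is to classify by the leading monomial, using the order on pairs $(j,i)$. Let $I$ be a nonzero submodule. If $I$ contains a unit, then $I=\langle 1\rangle$, which is case (1) with $i=j=0$. Otherwise every element of $I$ lies in $\langle u,v\rangle$. Take $g\in I$ and factor out the unit part of its lowest term. Four sub-cases arise:
\begin{itemize}
\item If $g=u^iv^j$ times a unit, we obtain the monomial ideals $\langle u^iv^j\rangle$ of (1).
\item If the lowest term is $u^i$ with $i\ge1$, we normalize $g$ to $u^i+v\beta_{2i}+v^2\beta_{3i}+v^3\beta_{4i}$, or to the generator in (3) when $i=1$ and the tail involves mixed $u$-powers. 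If in addition $I$ contains $v$-terms independently, we get $\langle u^i,v,v^2,v^3\rangle$, which gives (2).
\item If the lowest term is $vu$, we factor out $v$ and obtain (4).
\item Lowest terms $v^2u$ and $v^3u$ similarly give (5) and (6).
\end{itemize}
The truncations at $u^{k-1}$ and $u^{k-2}$ in the top $v$-coefficients come from the relations $u^k=0$ and $v^4=0$ killing terms after multiplication. The coefficients $\alpha,\beta$ can be taken to be units, or zero absorbed into the index choice, by multiplying on the right by suitable units of $S$.

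The third step is to show that each listed set is indeed a submodule and that the list is exhaustive. To prove that $I$ is principal (or of the form in (2)), I will take the generator of minimal order and reduce every other element of $I$ by it, Euclidean-style in the $u$- and $v$-adic filtrations. Any remainder that is not a multiple of the generator forces the extra generators $v,v^2,v^3$.

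The main obstacle is this exhaustiveness and normalization step. Because $S$ is not a chain ring, an ideal may have two generators with incomparable leading monomials, such as $u^i$ and $v^j$, and mixed generators do not obviously reduce to the stated forms. I would first attempt to handle this by a length count over $K$: compare $|I|$ with the size of the ideal generated by its leading terms, and show that only the listed configurations are consistent with the relations. I would then check, degree by degree, that the unit coefficients can be absorbed.
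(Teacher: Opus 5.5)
\textbf{There is a genuine gap, and it cannot be filled.} The fatal step is the one you flag yourself: exhaustiveness. In the model you use, $S=K[u,v]/\langle u^k,v^4\rangle$ with $J=\langle u,v\rangle$, the stated list misses ideals once $k\ge 2$. The paper computes in the same model: its formula for $gv$ treats $v$ as commuting with the $f_{ij}$.

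Take $I=\langle uv,v^2\rangle$.
\begin{itemize}
\item $JI=\langle u^2v,uv^2,v^3\rangle$. Modulo $JI$, the elements $uv$ and $v^2$ are $K$-independent, so by Nakayama $I$ is not principal. Hence it is none of the cyclic modules in (1)--(6), including the first family in (2).
\item $I\subseteq vS$, while $u^i\notin vS$ for $1\le i\le k-1$. So $I\neq\langle u^i,v,v^2,v^3\rangle=\langle u^i,v\rangle$.
\end{itemize}
Likewise, $\langle u,v^2\rangle$ is non-principal and does not contain $v$, so it is not listed either.

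Your second bullet is exactly where such ideals slip through. ``$I$ contains $v$-terms independently'' can mean $v^2\in I$ but $v\notin I$, which gives $\langle u^i,v^2\rangle$, not $\langle u^i,v\rangle$.

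The normalization in that bullet is also too strong. For $k\ge 3$, $g=u^2+uv$ has lowest term $u^2$ in your $(j,i)$ order. But $\langle g\rangle=\langle h\rangle$ in a commutative local ring forces $g=h\cdot(\text{unit})$, so the lowest-degree forms must agree up to a scalar. No listed generator has lowest-degree form proportional to $u^2+uv$, so the tail cannot be made $u$-free.

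Hence no length count can show that ``only the listed configurations are consistent with the relations'': that claim is false.

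The paper's proof does not supply a missing idea either. It uses the same leading-coefficient case split as you (the first $f_{ij}$ that is a unit modulo $f$). In subcases C2(i)--(iii) it actually obtains $\langle u^i,v\rangle$, $\langle u^i,v^2\rangle$ and $\langle u^i,v^3\rangle$, and then simply declares $I=\langle u^i,v,v^2,v^3\rangle$.

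Two points in your first step also need fixing regardless.
\begin{itemize}
\item Identifying right $\mathcal{R}$-submodules of $\mathcal{R}_f$ with ideals of $S$ is only valid if ``submodule'' means $\mathcal{R}[x]$-submodule. For $\deg f\ge 2$, the constants $\mathcal{R}\cdot 1$ form a right $\mathcal{R}$-submodule that is not $K$-stable. The paper tacitly assumes $\mathcal{R}[x]$-linearity when it multiplies by $f_{11}^{-1}\in\mathbb F_{16}[x]$.
\item ``$\mathbb F_{16}$ twisted appropriately by $v$'' has to be decided. If $v$ genuinely fails to commute with $\mathbb F_{16}$, as in the cyclic-algebra description of $M_4(\mathbb F_2)$, the ring is noncommutative. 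The actual matrix ring $M_4(\mathbb F_2[u]/\langle u^k\rangle)$ has nontrivial idempotents and is not local, so your leading-monomial analysis would have to be redone there.
\end{itemize}

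What your method can honestly deliver is a classification of all ideals of $K[u,v]/\langle u^k,v^4\rangle$. That list necessarily includes non-principal ideals like the ones above, so it is longer than the one stated.
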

    
 	\begin{proof}
    
        Let $I$ be a nonzero right $\mathcal{R}$-submodule of $\mathcal{R}_f$ and $g+\langle f\rangle$ be a nonzero element of $I$. Then $g\notin \langle f\rangle$ and there exist $f_{ij} \in \mathbb F_{16}[x]$ for $i = 1$ to 4 and $j = 1$ to $k$ such that
        \begin{align} \label{eq2}
	\nonumber g = ( f_{11} + uf_{12} + u^2f_{13}  + \cdots + u^{k-1}f_{1k})
        + v( f_{21} + uf_{22} + u^2f_{23}  + \cdots + u^{k-1}f_{2k} )\\ \nonumber
        + v^2(  f_{31} + uf_{32} + u^2f_{33}  + \cdots + u^{k-1}f_{3k} )
        + v^3( f_{41} + uf_{42} + u^2f_{43}  + \cdots + u^{k-1}f_{4k} ).
	\end{align}
    
  	Now, we have two cases.
        \begin{enumerate}
        
        \item [(I)] If $\gcd(f_{11},f)=1$, then there exists $f_{11}^{-1}\in \mathbb F_{16}[x]$ such that $f_{11}f_{11}^{-1} +\langle f\rangle=1+\langle f\rangle$. This implies that $I=\mathcal{R}_f$.
        \item [(II)]  If $\gcd(f_{11},f)=f$, then $g+\langle f\rangle=  (uf_{12} + u^2f_{13} + u^3f_{14} + \cdots + u^{k-1}f_{1k})
        + v( f_{21} + uf_{22} + u^2f_{23} + u^3f_{24} + \cdots + u^{k-1}f_{2k} )
        + v^2(  f_{31} + uf_{32} + u^2f_{33} + u^3f_{34} + \cdots + u^{k-1}f_{3k} )
        + v^3( f_{41} + uf_{42} + u^2f_{43} + u^3f_{44} + \cdots + u^{k-1}f_{4k} ).$ 
		 
	\end{enumerate}
    
        Let $\gcd(f_{1j}, f) = \gcd(f_{2j}, f) = \gcd(f_{3j}, f) = \gcd(f_{4j}, f) = f$ for $j \in \{1,\dots,i\}$,
        and $\gcd(f_{1(i+1)},f)  = 1 $. Then, there exists $f_{1(i+1)}^{-1}$ such that $f_{1(i+1)}f_{1(i+1)}^{-1}$ = 1. Hence, $I = \langle u^i \rangle $ for i = 1 to k-1.
  
        If $\gcd ( f_{1j}, f ) = f $ for $j \in \{1,\dots,k\}$, $\gcd ( f_{2l}, f) = \gcd ( f_{3l}, f) = \gcd ( f_{4l}, f) = f $ for $l \in \{1,\dots, i\}$ and $ \gcd( f_{2(i+1)},f) = 1$, then there exists $ f_{2(i+1)}^{-1}$ such that $ f_{2(i+1)}f_{2(i+1)}^{-1}$ = 1 and $I = \langle vu^i \rangle $. Similarly, we have $\langle v^2u^i \rangle $ and $\langle v^3u^i \rangle $ for $i = 1$ to $k-1$. 
  
        Following the above process, if $\gcd( f_{1j},f ) = f $ for $j \in \{1, \dots, k\}$, and $\gcd (f_{21}, f)$ $ = 1$, then $I = \langle v \rangle $. If $\gcd( f_{1j},f ) = \gcd( f_{2j},f ) = f $ for $j \in \{1, \dots, k\}$, and $\gcd (f_{31}, f) = 1$, then $I = \langle v^2 \rangle $. Also, if $\gcd( f_{1j},f ) = \gcd( f_{2j},f ) = \gcd( f_{3j},f ) = f $ for $j \in \{1, \dots, k\}$, and $\gcd (f_{41}, f) = 1$, then $I = \langle v^3 \rangle $. 
  
        \indent Again, if $\gcd (f_{12},f) = 1$, there exists $f_{12}^{-1}$ such that $f_{12}f_{12}^{-1} =1$. Now, from case (II) we have $u^{k-1}v^3 = gu^{k-2}v^3f_{12}^{-1} \in I$, therefore, $uf_{12}+\cdots+v^3u^{k-2}f_{4(k-1)} = g-v^3u^{k-1}f_{4k}  \in I$. If $\gcd(f_{1j}, f) = 1 $ for $j \in \{3,\dots,k\}$, $\gcd(f_{2j}, f) = \gcd(f_{3j}, f) = 1 $ for $j \in \{1,\dots,k\}$, and $\gcd(f_{4j}, f) = 1 $ for $j \in \{1,\dots,k-1\}$, we have the following cases:
        
        \begin{enumerate}
        \item[(i)] $ \langle (u + u^2\alpha_{13} + u^3\alpha_{14} + \cdots + u^{k-1}\alpha_{1k})
        + v( \alpha_{21} + u\alpha_{22} + u^2\alpha_{23} + u^3\alpha_{24} + \cdots + u^{k-1}\alpha_{2k} )
        + v^2(  \alpha_{31} + u\alpha_{32} + u^2\alpha_{33} + u^3\alpha_{34} + \cdots + u^{k-1}\alpha_{3k} )
        + v^3( \alpha_{41} + u\alpha_{42} + u^2\alpha_{43} + u^3\alpha_{44} + \cdots + u^{k-2}\alpha_{4(k-1)} )\rangle = I  $.
        \item[(ii)] $ \langle (u + u^2\alpha_{13} + u^3\alpha_{14} + \cdots + u^{k-1}\alpha_{1k})
        + v( \alpha_{21} + u\alpha_{22} + u^2\alpha_{23} + u^3\alpha_{24} + \cdots + u^{k-1}\alpha_{2k} )
        + v^2(  \alpha_{31} + u\alpha_{32} + u^2\alpha_{33} + u^3\alpha_{34} + \cdots + u^{k-1}\alpha_{3k} )
        + v^3( \alpha_{41} + u\alpha_{42} + u^2\alpha_{43} + u^3\alpha_{44} + \cdots + u^{k-2}\alpha_{4(k-1)} )\rangle \lneqq I  $. 
        \end{enumerate}
  
        Here, $\alpha_{jl} = f_{jl}f_{12}^{-1}$ are units in $\mathbb{F}_{16}[x]$, where $l \in \{3,\dots,k\}$ for $j = 1$, $l \in \{1,\dots,k\}$ for $j \in \{2,3\}$, and $l \in \{1,\dots,k-1\}$ for $j =4$. Now, from case (II), we have 
        \begin{align*}
        gv &= v(uf_{12} + u^2f_{13} + u^3f_{14} + \cdots + u^{k-1}f_{1k})
        + v^2( f_{21} + uf_{22} + u^2f_{23} \\&
        + u^3f_{24} + \cdots + u^{k-1}f_{2k} ) + v^3(  f_{31} + uf_{32} + u^2f_{33} + u^3f_{34} + \cdots + u^{k-1}f_{3k} ). 
        \end{align*}
        
        Similar to cases (i) - (ii), we have $gv - v^3u^{k-1}f_{3k} =  v(uf_{12} + u^2f_{13} + u^3f_{14} + \cdots + u^{k-1}f_{1k}) + v^2( f_{21} + uf_{22} + u^2f_{23} + u^3f_{24} + \cdots + u^{k-1}f_{2k} ) + v^3(  f_{31} + uf_{32} + u^2f_{33} + u^3f_{34} + \cdots + u^{k-2}f_{3(k-1)} ) ~\in~ I$, and we have the following cases:
        
        \begin{enumerate}
        \item [(iii)]  $  \langle v(u + u^2\alpha_{13} + u^3\alpha_{14} + \cdots + u^{k-1}\alpha_{1k})
        + v^2( \alpha_{21} + u\alpha_{22} + u^2\alpha_{23} + u^3\alpha_{24} + \cdots + u^{k-1}\alpha_{2k} )\\
        + v^3(  \alpha_{31} + u\alpha_{32} + u^2\alpha_{33} + u^3\alpha_{34} + \cdots + u^{k-2}\alpha_{3(k-1)} )\rangle = I$.
        \item [(iv)] $  \langle v(u + u^2\alpha_{13} + u^3\alpha_{14} + \cdots + u^{k-1}\alpha_{1k})
        + v^2( \alpha_{21} + u\alpha_{22} + u^2\alpha_{23} + u^3\alpha_{24} + \cdots + u^{k-1}\alpha_{2k} )\\
        + v^3(  \alpha_{31} + u\alpha_{32} + u^2\alpha_{33} + u^3\alpha_{34} + \cdots + u^{k-2}\alpha_{3(k-1)} )\rangle \lneqq I$.
        \end{enumerate}
      
        Here, $\alpha_{jl} = f_{jl}f_{12}^{-1}$ are units in $\mathbb{F}_{16}[x]$, where $l \in \{3,\dots,k\}$ for $j = 1$, $l \in \{1,\dots,k\}$ for $j = 2$, and $l \in \{1,\dots,k-1\}$ for $j = 3$. Now, from case (II), we have 
        \begin{align*}
        gv^2 &= v^2(uf_{12} + u^2f_{13} + u^3f_{14} + \cdots + u^{k-1}f_{1k})
        + v^3( f_{21} + uf_{22} + u^2f_{23} + u^3f_{24} \nonumber\\&+ \cdots + u^{k-1}f_{2k} ).
        \end{align*} 
        
        Similar to cases (i) - (iv), we have $gv^2 - v^3u^{k-1}f_{2k} =  v^2(uf_{12} + u^2f_{13} + u^3f_{14} + \cdots + u^{k-1}f_{1k}) + v^3( f_{21} + uf_{22} + u^2f_{23} + u^3f_{24} + \cdots + u^{k-2}f_{2(k-1)}) ~\in~I $, and further we have the following cases.
        
        \begin{enumerate}
        \item [(v)] $  \langle v^2(u + u^2\alpha_{13} + u^3\alpha_{14} + \cdots + u^{k-1}\alpha_{1k})
        + v^3( \alpha_{21} + u\alpha_{22} + u^2\alpha_{23}  + \cdots + u^{k-2}\alpha_{2(k-1)}) \rangle = I$.
        \item [(vi)] $  \langle v^2(u + u^2\alpha_{13} + u^3\alpha_{14} + \cdots + u^{k-1}\alpha_{1k})
        + v^3( \alpha_{21} + u\alpha_{22} + u^2\alpha_{23}  + \cdots + u^{k-2}\alpha_{2(k-1)}) \rangle \lneqq I $.
        \end{enumerate}
      
        Here, $\alpha_{jl} = f_{jl}f_{12}^{-1}$ are units in $\mathbb{F}_{16}[x]$, where $l \in \{3,\dots,k\}$ for $j = 1$, and $l \in \{1,\dots,k-1\}$ for $j = 2$. Now, from case (II), we have  
        \begin{align*}
        gv^3 = v^3(uf_{12} + u^2f_{13} + u^3f_{14} + \cdots + u^{k-1}f_{1k}).   
        \end{align*}
        Similarly to cases (i) - (vi), we have $gv^2 - v^3u^{k-1}f_{1k} =  v^3(uf_{12} + u^2f_{13} + u^3f_{14} + \cdots + u^{k-2}f_{1(k-1)}) \in I$, and further we have the following cases.
        
       \begin{enumerate}
        \item [(vii)] $  \langle v^3(u + u^2\alpha_{13} + u^3\alpha_{14} + \cdots + u^{k-2}\alpha_{1(k-1)})\rangle = I $.
        \item [(viii)] $  \langle v^3(u + u^2\alpha_{13} + u^3\alpha_{14} + \cdots + u^{k-2}\alpha_{1(k-1)})\rangle \lneqq I $.
        \end{enumerate}
       
        Here, $\alpha_{jl} = f_{jl}f_{12}^{-1}$ are units in $\mathbb{F}_{16}[x]$, where $l \in \{3,\dots,k-1\}$ for $j = 1$. Next, let $\gcd(f_{1j}, f ) = f$ for $j \in \{2,\dots,i,i+2,\dots,k\}$, and $\gcd(f_{2j},f) = \gcd(f_{3j},f) = \gcd(f_{4j},f) = f $ for $j \in \{2,\dots,k\}$. If we consider $\gcd(f_{l1}, f) = 1 $ for $l \in \{2,3,4\}$ with $ \gcd(f_{1(i+1)},f) = 1$. Then there exist $f_{1(i+1)}^{-1}$ such that $f_{1(i+1)}f_{1(i+1)}^{-1} = 1$. This provides,
    
        \begin{enumerate}
            \item [C1.] $ \langle u^i + v\beta_{2i} + v^2\beta_{3i} + v^3\beta_{4i}\rangle = I$.
            \item [C2.] $ \langle u^i + v\beta_{2i} + v^2\beta_{3i} + v^3\beta_{4i}\rangle \lneqq I $.
        \end{enumerate}
        
        Here, $\beta_{2i} = f_{21}f_{1(i+1)}^{-1}$, $\beta_{3i} = f_{31}f_{1(i+1)}^{-1}$, and $\beta_{4i} = f_{41}f_{1(i+1)}^{-1}$ are units in $\mathbb{F}_{16}[x]$. Now, from case C2., we have the following subcases.
        
        \begin{enumerate}
              \item [(i)] There exists  $u^i + v\beta_{2i}' + v^2\beta_{3i} + v^3\beta_{4i}$ where $\beta_{2i}' \in \mathbb F_{16}[x]$ such that it does not belong to  $ \langle u^i + v\beta_{2i} + v^2\beta_{3i} + v^3\beta_{4i}\rangle.$  This implies that $v(\beta_{2i}-\beta_{2i}')\in I$. But $(\beta_{2i}-\beta_{2i}') \notin I$, therefore, $v \in I$, and $I = \langle u^i, v \rangle $.
              \item [(ii)]  There exists  $u^i + v\beta_{2i} + v^2\beta_{3i}' + v^3\beta_{4i}$ where $\beta_{3i}' \in \mathbb F_{16}[x]$ such that it does not belong to  $ \langle u^i + v\beta_{2i} + v^2\beta_{3i} + v^3\beta_{4i}\rangle.$  This implies that $v(\beta_{3i}-\beta_{3i}')\in I$. But $(\beta_{3i}-\beta_{3i}') \notin I$, therefore, $v^2 \in I$, and $I = \langle u^i, v^2 \rangle $.
              \item [(iii)] There exists $u^i + v\beta_{2i} + v^2\beta_{3i} + v^3\beta_{4i}'$ where $\beta_{4i}' \in \mathbb F_{16}[x]$ such that it does not belong to  $ \langle u^i + v\beta_{2i} + v^2\beta_{3i} + v^3\beta_{4i}\rangle.$  This implies $v(\beta_{4i}-\beta_{4i}')\in I$. But $(\beta_{4i}-\beta_{4i}') \notin I$, therefore, $v^3 \in I$ and $I =   \langle u^i, v^3 \rangle $.
        \end{enumerate}
        
        Finally, from the above subcases (i), (ii) and (iii), we have $ I = \langle u^i , v , v^2 , v^3 \rangle $. 
        \end{proof}

        Let $f_i$ be a factor of $x^n-1$ and consider $\hat{f_i}=\frac{x^n-1}{f_i}$. Then we have the following result for $\mathcal{R}$-modules in $\frac{\mathcal{R}[x]}{\langle x^n-1\rangle}$.
        
        \begin{corollary}\label{cor1}
	Let $x^n-1=f_1f_2\cdots f_m$ where $f_j(1\leq j\leq m)$ are irreducible pairwise        coprime monic polynomials in $\mathbb F_{16}[x]$. Then any right $\mathcal{R}$-         module in $\frac{\mathcal{R}[x]}{\langle x^n-1\rangle}$ is the sum of the               $\mathcal{R}$-submodules
        \allowdisplaybreaks \begin{gather*} \langle u^{i-1}\hat{f_j} + \langle x^n-1\rangle \rangle, \langle u^{i-1}v\hat{f_j} + \langle x^n-1\rangle \rangle, \langle u^{i-1}v^2\hat{f_j} + \langle x^n-1\rangle \rangle , \\ \langle u^{i-1}v^3\hat{f_j} + \langle x^n-1\rangle \rangle,  
        \langle (u^i+ v\beta_{1i} + v^2\beta_{2i} + v^3\beta_{3i})\hat{f_j} + \langle x^n-1\rangle \rangle ,\\ (\langle u^i \hat{f}_j + \langle x^n-1\rangle\rangle + \langle v\hat{f}_j + \langle x^n-1\rangle\rangle + \langle v^2\hat{f}_j  + \langle x^n-1\rangle\rangle + \langle v^3\hat{f}_j + \langle x^n-1\rangle\rangle)\\  ~ \text{for}~  i= \{1,\dots,k\}. \\
        \langle ((u + u^2\alpha_{13} + u^3\alpha_{14} + \cdots + u^{k-1}\alpha_{1k}) \\ + \sum_{i=1}^{3} v^i( \alpha_{(i+1)1} + u\alpha_{(i+1)2} + u^2\alpha_{(i+1)3} + \cdots + u^{k-1}\alpha_{(i+1)k} ))\hat{f_j} + \langle x^n-1\rangle \rangle ,\\
        \langle (v(u + u^2\alpha_{13} + u^3\alpha_{14} + \cdots + u^{k-1}\alpha_{1k}) \\ + \sum_{i=2}^{3} v^i( \alpha_{i1} + u\alpha_{i2} + u^2\alpha_{i3}  + \cdots + u^{k-1}\alpha_{ik} ))\hat{f_j} + \langle x^n-1\rangle \rangle , \\ \langle (v^2(u + u^2\alpha_{13} + u^3\alpha_{14} + \cdots + u^{k-1}\alpha_{1k}) \\ + v^3( \alpha_{21} + u\alpha_{22} + u^2\alpha_{23} + u^3\alpha_{24} + \cdots + u^{k-2}\alpha_{2(k-1)}))\hat{f_j} + \langle x^n-1\rangle \rangle, \\ \langle (v^3(u + u^2\alpha_{13} + u^3\alpha_{14} + \cdots + u^{k-2}\alpha_{1(k-1)}))\hat{f_j} + \langle x^n-1\rangle \rangle.
        \end{gather*}
		
        \end{corollary}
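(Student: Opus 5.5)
The plan is to transport the classification of Theorem \ref{th2} into $\mathcal{R}_n=\mathcal{R}[x]/\langle x^n-1\rangle$ using the decomposition of Theorem \ref{th1}, with explicit idempotent-type generators. Because $n$ is odd, the $f_j$ are pairwise coprime over $\mathbb F_{16}$. Hence $f_j$ and $\hat f_j$ are coprime, and there are $a_j,b_j\in\mathbb F_{16}[x]$ with $a_jf_j+b_j\hat f_j=1$. Set $e_j=b_j\hat f_j$. Then $e_j\equiv 1 \pmod{f_j}$ and $e_j\equiv 0 \pmod{f_l}$ for $l\neq j$. Also $\sum_j e_j=1$, $e_je_l=\delta_{jl}e_j$ in $\mathbb F_{16}[x]/\langle x^n-1\rangle$, and each $e_j$ is central in $\mathcal{R}_n$, since its coefficients lie in $\mathbb F_2\subseteq$ the center when viewed via $u,v$ (more precisely, $x$ commutes with $\mathcal{R}$).

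Given a right $\mathcal{R}$-submodule $C$ of $\mathcal{R}_n$, put $C_j=Ce_j$. Since $e_j$ commutes with the right $\mathcal{R}$-action, $C_j\subseteq C$. Writing $c=\sum_j ce_j$ gives $C=\bigoplus_j C_j$, and the sum is direct because the $e_j$ are orthogonal. Multiplication by $\hat f_j$ gives an isomorphism of right $\mathcal{R}$-modules
\[
\mathcal{R}[x]/\langle f_j\rangle \longrightarrow \hat f_j\,\mathcal{R}_n = e_j\mathcal{R}_n .
\]
It is well defined because $f_j\hat f_j=x^n-1$. It is injective because $f_j\mid g\hat f_j$ forces $f_j\mid g$, coefficientwise in the $u^iv^l$ basis, using $\gcd(f_j,\hat f_j)=1$. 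It is surjective because $e_j$ is a multiple of $\hat f_j$. Under this isomorphism $C_j$ corresponds to a right submodule $I_j$ of $\mathcal{R}[x]/\langle f_j\rangle$. By Theorem \ref{th2}, $I_j$ is $\{0\}$, the whole module, or generated by one of the listed elements $t$: $u^iv^l$, $u^i+v\beta_{2i}+v^2\beta_{3i}+v^3\beta_{4i}$, the pair-type module $\langle u^i,v,v^2,v^3\rangle$, or the types (3)--(6). Hence $C_j=\langle t\hat f_j+\langle x^n-1\rangle\rangle$, or the sum $\langle u^i\hat f_j\rangle+\langle v\hat f_j\rangle+\langle v^2\hat f_j\rangle+\langle v^3\hat f_j\rangle$ in the non-principal case. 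The whole-module case is $\langle u^0\hat f_j\rangle$, which explains the index $i$ running from $1$ with $u^{i-1}$. Summing over $j$ gives the stated description.

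The main obstacle is checking that generators correspond correctly, namely that $\langle t\rangle$ in $\mathcal{R}[x]/\langle f_j\rangle$ maps exactly onto $\langle t\hat f_j\rangle$ in $\mathcal{R}_n$. This needs $\hat f_j$ to commute with $t$, which holds since polynomial coefficients of $t$ in $\mathbb F_{16}[x]$ commute with $x$. Here I would check the identification of $\mathbb F_{16}$ inside $\mathcal{R}$ with the scalars used in Theorem \ref{th2}, treating the $\alpha$'s and $\beta$'s as elements of $\mathbb F_{16}[x]$ acting on the left, as the paper does. It also needs the units $\alpha,\beta$ modulo $f_j$ to be unaffected by multiplication by $\hat f_j$, which holds because $\hat f_j$ is invertible modulo $f_j$.
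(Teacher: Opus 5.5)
The load-bearing step, that $e_j$ is central in $\mathcal{R}_n$, is false.

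\textbf{Why $e_j$ is not central.} The coefficients of $e_j=b_j\hat f_j$ lie in $\mathbb{F}_{16}$, not in $\mathbb{F}_2$. For example, for $n=5$ the idempotent attached to $x+w^3$ is $\sum_{i=0}^{4}w^{12i}x^i$. But $\mathbb{F}_{16}=\mathbb{F}_2[\omega]$ is not central in $M_4(\mathbb{F}_2[u]/\langle u^k\rangle)$, whose center is the scalar matrices; that $x$ is central does not help. Take the paper's matrices $e,\omega$, and recall $v=1+e$ and $e^4=1$. One checks $e\omega e^{-1}=\omega^{8}=\omega^{2}+1$. Hence $e\,p\,e^{-1}=\sigma(p)$ for $p\in\mathbb{F}_{16}[x]$, where $\sigma$ raises coefficients to the $8$th power. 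So $e\,e_j\,e^{-1}=e_{j'}$ is the idempotent of the conjugate factor $f_{j'}=\sigma(f_j)$, and $e_j$ is central only when $f_j\in\mathbb{F}_2[x]$. This is also why the paper says $\langle f_r\rangle$ is not two-sided.

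\textbf{Consequence for the splitting $C=\bigoplus_j C_j$.} In general $Ce_j$ is neither a right $\mathcal{R}$-submodule nor contained in $e_j\mathcal{R}_n=\hat f_j\mathcal{R}_n$. The alternative $e_jC$ does lie in $\hat f_j\mathcal{R}_n$, but need not lie in $C$. No repair along these lines exists, because right ideals need not split along the summands at all. Suppose $f_{j'}\neq f_j$, as for $f_2$ in the paper's examples with $n=3$ and $n=5$. Then $ve_jr=e_jr+e_{j'}er$, with the two terms in the distinct summands $e_j\mathcal{R}_n$ and $e_{j'}\mathcal{R}_n$. Since $e_{j'}er=ee_jr$ and $e$ is a unit, this term vanishes only when $e_jr$ does. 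So the nonzero right ideal $v\hat f_j\mathcal{R}_n=ve_j\mathcal{R}_n$ meets every $\hat f_l\mathcal{R}_n$ in $\{0\}$, and is not the direct sum of its pieces. Incidentally, $Ce_j\subseteq C$ holds simply because a cyclic code is a right ideal of $\mathcal{R}_n$; centrality was needed for the other two properties.

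\textbf{Consequence for the generator matching.} Under $g\mapsto\hat f_jg$, the submodule generated by $t$ goes to $\hat f_jt\,\mathcal{R}_n$, not to $t\hat f_j\,\mathcal{R}_n$. For $t=v$ these differ: $v\hat f_j=ve_j\hat f_j$ has nonzero component $e\hat f_j$ in $e_{j'}\mathcal{R}_n$, so $v\hat f_j\notin\hat f_j\mathcal{R}_n$. Your remark that the coefficients of $t$ commute with $x$ does not let you move $v$ past the $\mathbb{F}_{16}$-coefficients of $\hat f_j$.

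\textbf{When the argument works, and what is missing otherwise.} Your proof is valid when every $f_j$ lies in $\mathbb{F}_2[x]$ (e.g.\ $n=7$), since then $e_j\in\mathbb{F}_2[x]$ really is central. In general the usable central idempotents are those of $\mathcal{R}_n\cong\prod_g M_4(\mathbb{F}_{2^{\deg g}}[u]/\langle u^k\rangle)$, with $g$ running over the irreducible factors of $x^n-1$ in $\mathbb{F}_2[x]$. Each block gathers a whole Frobenius orbit of the $f_j$. You would then have to classify the submodules of $\bigoplus_{f_j\mid g}\mathcal{R}[x]/\langle f_j\rangle$, including diagonal ones like the example above. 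Theorem~\ref{th2}, which concerns a single $f_j$, does not provide this. The paper's proof only cites Theorems~\ref{th1} and~\ref{th2}, so it tacitly makes the same componentwise-splitting assumption and offers no way around the gap.
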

        
        \begin{proof}
        
	Follows from Theorem \ref{th1} and Theorem \ref{th2}.
    
	 \end{proof}
    
	Throughout the paper, we use $c_0+c_1x+\dots+c_{n-1}x^{n-1}$ for the corresponding      coset $c_0+c_1x+\dots+c_{n-1}x^{n-1}+\langle x^n-1\rangle$ in the ring                  $\frac{\mathcal{R}[x]}{\langle x^n-1\rangle}$. Now, we provide the main result of       this section.

	\begin{theorem}\label{th3}
		Let $\mathcal{C}$ be a cyclic code of odd length $n$ over $\mathcal{R}$. Then there exist units	$\alpha_{ij} $ and $\beta_{mn}$ in $\mathbb F_{16}[x]$ and a family of irreducible pairwise coprime monic polynomials $P_i$, for $0\leq i\leq 5k+3 ,$ in $\mathbb F_{16}[x]$ such that $\prod_{i=0}^{5k+3} P_i=x^n-1$, $\alpha_{11} = 0 $, $\alpha_{12} = 1 $ and 
  \allowdisplaybreaks\begin{gather*}
		\mathcal{C}= \bigoplus_{i=1}^{k}\langle u^{i-1}\hat{P_{i}}\rangle \bigoplus_{i=1}^{k}\langle u^{i-1}v\hat{P}_{k+i}\rangle \bigoplus_{i=1}^{k}\langle u^{i-1}v^2\hat{P}_{2k+i}\rangle  \bigoplus_{i=1}^{k}\langle u^{i-1}v^3\hat{P}_{3k+i}\rangle \\
        \bigoplus_{i=1}^{k-1} \langle (u^i+ v\beta_{1i} + v^2\beta_{2i} + v^3\beta_{3i})\hat{P}_{4k+i}\rangle   \bigoplus \langle ((u + u^2\alpha_{13} + u^3\alpha_{14} + \cdots + u^{k-1}\alpha_{1k}) \\
      + \sum_{i=1}^{3} v^i( \alpha_{(i+1)1} + u\alpha_{(i+1)2} + u^2\alpha_{(i+1)3} + u^3\alpha_{(i+1)4} + \cdots + u^{k-1}\alpha_{(i+1)k} ))\hat{P}_{5k}\rangle
     \\  \bigoplus  \langle (v(u + u^2\alpha_{13} + u^3\alpha_{14} + \cdots + u^{k-1}\alpha_{1k}) \\ + \sum_{i=2}^{3} v^i( \alpha_{i1} + u\alpha_{i2} + u^2\alpha_{i3} + \cdots + u^{k-1}\alpha_{ik} ))\hat{P}_{5k+1} \rangle
     \\  \bigoplus  \langle (v^2(u + u^2\alpha_{13} + u^3\alpha_{14} + \cdots + u^{k-1}\alpha_{1k})
     \\+ v^3( \alpha_{21} + u\alpha_{22} + u^2\alpha_{23} + \cdots + u^{k-2}\alpha_{2(k-1)}))\hat{P}_{5k+2}\rangle
     \\  \bigoplus \langle (v^3(u + u^2\alpha_{13} + u^3\alpha_{14} + \cdots + u^{k-2}\alpha_{1(k-1)}))\hat{P}_{5k+3}\rangle. \\
		\end{gather*}
    Moreover, $|\mathcal{C}|=16^{\xi}$ where
    \allowdisplaybreaks
    \begin{gather*}
    \xi = ( \sum_{i=1}^{k} (4k-4(i-1))\deg(P_i) + (3k-3(i-1))\deg(P_{k+i})  \\
     + (2k-2(i-1))\deg(P_{2k+i})  + ( \sum_{i=1}^{k} (k-(i-1))\deg(P_{3k+i})
    \\ + \sum_{i=1}^{k-1}(4k-i)\deg(P_{4k+i}) )
    + (4k-1)\deg(P_{5k}) + (3k-3)\deg(P_{5k+1}) \\ + (2k-2)\deg(P_{5k+2}) + (k-1)\deg(P_{5k+3}).\end{gather*}
  \end{theorem}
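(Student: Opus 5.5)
The plan is to combine the decomposition of Theorem~\ref{th1} with the classification of Theorem~\ref{th2}, in the same way Corollary~\ref{cor1} does, and then to count. By Theorem~\ref{th1}, $\mathcal{R}_n\cong\bigoplus_{r=1}^m \mathcal{R}[x]/\langle f_r\rangle$. The idempotent-type elements $\hat f_r$ realise this splitting inside $\mathcal{R}_n$: since $\gcd(f_r,\hat f_r)=1$ over $\mathbb F_{16}$, multiplication by $\hat f_r$ identifies $\mathcal{R}[x]/\langle f_r\rangle$ with $\hat f_r\mathcal{R}_n$. Consequently any cyclic code (a right $\mathcal{R}$-submodule closed under $x$) satisfies $\mathcal{C}=\bigoplus_r \mathcal{C}\hat f_r e_r$, where $e_r$ are the CRT idempotents. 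Each summand is a submodule of $\mathcal{R}[x]/\langle f_r\rangle$, so by Theorem~\ref{th2} it equals one of the listed types (1)--(6) multiplied by $\hat f_r$.

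Next, I will group the irreducible factors $f_r$ according to which type of submodule occurs at that component. Let $P_0$ be the product of those $f_r$ whose component is $\{0\}$. For $1\le i\le k$, let $P_i$, $P_{k+i}$, $P_{2k+i}$ and $P_{3k+i}$ be the products of those $f_r$ whose component is $\langle u^{i-1}\rangle$, $\langle u^{i-1}v\rangle$, $\langle u^{i-1}v^2\rangle$ and $\langle u^{i-1}v^3\rangle$ respectively. Let $P_{4k+i}$ collect the type $\langle u^i+v\beta_{2i}+v^2\beta_{3i}+v^3\beta_{4i}\rangle$ components (the index range $i=1,\dots,k-1$ leaves $P_{5k}$ free). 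Finally, $P_{5k},\dots,P_{5k+3}$ collect types (3)--(6). The components of type $\langle u^i,v,v^2,v^3\rangle$ decompose as sums already covered, so they are absorbed into the $u^{i}$- and $v^j$-families. Empty groups are assigned $P=1$. Then the $P$'s are pairwise coprime with product $x^n-1$. Since $\hat P$ is, up to a unit mod $P$, a common multiple of the $\hat f_r$ with $f_r\mid P$, the CRT gives $\sum_{f_r\mid P}\langle g\hat f_r\rangle=\langle g\hat P\rangle$. This produces the displayed direct sum. The normalisation $\alpha_{11}=0$, $\alpha_{12}=1$ simply records that the leading part of generator (3) is $u\cdot 1$ with no constant term.

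For the cardinality, the summands are independent, so $|\mathcal{C}|$ is the product of their sizes. Since $\mathcal{R}[x]/\langle P\rangle$ is free of rank $4k$ over $\mathbb F_{16}[x]/\langle P\rangle$, the size of $\langle g\hat P\rangle$ is $16^{\deg P\cdot t}$, where $t$ is the $\mathbb F_{16}$-dimension of $g\mathcal{R}$ inside the local-like ring $\mathbb F_{16}[v,u]/\langle v^4,u^k\rangle$ (a dimension over the residue field). For instance, $u^{i-1}\mathcal{R}$ has basis $u^av^b$ with $a\ge i-1$, giving $t=4(k-i+1)$, matching the coefficient $4k-4(i-1)$. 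The elements $u^{i-1}v^j$ are counted similarly via annihilators, and generators with a unit leading term $u^i$ or $v^ju$ are handled by computing the right annihilator.

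The main obstacle is this last step: getting the exact ranks $t$ for the mixed generators, for example $4k-i$ for type (2) and $4k-1$, $3k-3$, $2k-2$, $k-1$ for types (3)--(6). Here the noncommutativity hidden in the $\mathbb F_{16}$-coefficients (the semilinear action of $v$) means the annihilator must be computed carefully. My first attempt will be to filter $g\mathcal{R}$ by powers of $u$ and $v$ and read off leading terms, so that $t=4k-\dim\operatorname{Ann}_r(g)$. I expect the generic terms of the generator to make the leading-term map injective except on the top nilpotent layer. If this fails, I would instead verify the coefficients directly for small $k$ and induct on $k$ via reduction modulo $u^{k-1}$.
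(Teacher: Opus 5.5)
Your route is the same as the paper's: Corollary~\ref{cor1}, then grouping the $f_r$ by type into the $P_j$, then multiplying the sizes of the summands. The grouping itself is fine once you use the CRT to pick single polynomials $\alpha_{ij},\beta_{mn}$ that reduce to the units found at each $f_r$ of a given type (note that $\hat P$ is the gcd of those $\hat f_r$, not a common multiple). Two steps fail, however, and the paper's proof does not supply them either.

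First, the components of type $\langle u^i,v,v^2,v^3\rangle$ cannot be ``absorbed into the $u^i$- and $v^j$-families''. In the target form the $P_j$ are pairwise coprime with product $x^n-1$, so each $f_r$ divides exactly one $P_j$. Every other summand vanishes at $f_r$ because $f_r\mid\hat P_{j'}$. Hence the $f_r$-component of $\mathcal C$ must be a single one of the listed cyclic submodules, and spreading $u^i\mathcal R_f+v\mathcal R_f$ over two families would put $f$ into two different $P$'s. Moreover, consider the model $S=K[u,v]/\langle u^k,v^4\rangle$ with $K=\mathbb F_{16}[x]/\langle f\rangle$, in which Theorem~\ref{th2} is proved. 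There $I=\langle u^i,v\rangle$ with $1\le i\le k-1$ is not principal, since $\dim_K I/\langle u,v\rangle I=2$. So a code such as $\langle u\hat f\rangle+\langle v\hat f\rangle$ is not of the displayed form at all. The paper silently drops these summands of Corollary~\ref{cor1} at its ``without loss of generality'' step.

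Second, you leave the ranks of the mixed generators as an expectation, but the target exponents are wrong, so no annihilator computation can produce them. Take $k=2$, $f=x+1$ (so $\hat f\equiv 1\pmod f$, since $n$ is odd), and the type (2) generator $g=u+v+v^2+v^3$ with all $\beta$'s equal to $1$. The products $g\,u^av^b$ span $\langle u+v,\,v^2,\,v^3,\,uv,\,uv^2,\,uv^3\rangle$, so $|\langle g\hat f\rangle|=16^{6}$, whereas the statement gives $16^{4k-i}=16^{7}$. Both $uv^3$ and $uv^2+v^3$ annihilate $g$, so more than the top layer is killed.

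The noncommutativity you mention does not rescue this, because the coefficients are $0$ and $1$. In $M_4(A)$ with $A=\mathbb F_2[u]/\langle u^2\rangle$ one has $|g\mathcal R|=|gA^4|^4$. Also $A^4\cong A[v]/\langle v^4\rangle$, since $v=1+e$ is a single nilpotent Jordan block. This gives the same count.

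In general, in $S$ one has $g=u^i+vw$ with $w$ a unit. Hence $S/gS\cong K[u]/\langle u^{\min(k,4i)}\rangle$, and the exponent is $4k-\min(k,4i)$, never $4k-i$. Likewise, type (3), where $g=u\cdot(\text{unit})+v\cdot(\text{unit})$, gives $4k-\min(k,4)$ instead of $4k-1$. Type (4) gives $3k-\min(k,3)$, which differs from $3k-3$ at $k=2$.

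The paper merely lists the summand sizes, so the ``moreover'' clause is unproved there as well. What your argument can actually deliver is the decomposition for codes all of whose components are of the principal types (1)--(6), with these corrected exponents.
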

 \begin{proof}
       Let $x^n-1=f_1f_2\cdots f_m$ where $f_j~(1\leq i\leq m)$ are irreducible, pairwise co-prime monic polynomials in $\mathbb F_{16}[x]$. Then from Corollary \ref{cor1}, any right $\mathcal{R}$ module in $\frac{\mathcal{R}[x]}{\langle x^n-1\rangle}$ is the sum of the $\mathcal{R}$-submodules \allowdisplaybreaks\begin{gather*}\langle u^{i-1}\hat{f_j} + \langle x^n-1\rangle \rangle ,
         \langle u^{i-1}v\hat{f_j} + \langle x^n-1\rangle \rangle ,  \langle u^{i-1}v^2\hat{f_j} + \langle x^n-1\rangle \rangle , \\  \langle u^{i-1}v^3\hat{f_j} + \langle x^n-1\rangle \rangle ,
        \langle (u^i+ v\beta_{1i} + v^2\beta_{2i} + v^3\beta_{3i})\hat{f_j} + \langle x^n-1\rangle \rangle \\ ~\text{for}~ i = \{1,\dots,k\}.\\
        \langle ((u + u^2\alpha_{13} + u^3\alpha_{14} + \cdots + u^{k-1}\alpha_{1k}) \\+ \sum_{i=1}^{3} v^i( \alpha_{(i+1)1} + u\alpha_{(i+1)2} + u^2\alpha_{(i+1)3} + \cdots + u^{k-1}\alpha_{(i+1)k} ))\hat{f_j} + \langle x^n-1\rangle \rangle ,\\
        \langle (v(u + u^2\alpha_{13} + u^3\alpha_{14} + \cdots + u^{k-1}\alpha_{1k}) \\ +\sum_{i=2}^{3} v^i( \alpha_{i1} + u\alpha_{i2} + u^2\alpha_{i3}  + \cdots + u^{k-1}\alpha_{ik} ))\hat{f_j} + \langle x^n-1\rangle \rangle, \\  \langle (v^2(u + u^2\alpha_{13} + u^3\alpha_{14} + \cdots + u^{k-1}\alpha_{1k}) \\+ v^3( \alpha_{21} + u\alpha_{22} + u^2\alpha_{23} + u^3\alpha_{24} + \cdots + u^{k-2}\alpha_{2(k-1)}))\hat{f_j} + \langle x^n-1\rangle \rangle, \\ \langle (v^3(u + u^2\alpha_{13} + u^3\alpha_{14} + \cdots + u^{k-2}\alpha_{1(k-1)}))\hat{f_j} + \langle x^n-1\rangle \rangle.\end{gather*}
{Now, as we have assumed that all the $f_j~(1\leq i\leq m)$ are irreducible, pairwise co-prime monic polynomials in $\mathbb F_{16}[x]$, we utilize them to construct the code in the form of direct sums. To do so, without loss of generality, we assume that}
    \allowdisplaybreaks
      \begin{gather*}
		\mathcal{C}= \bigoplus_{i=1}^{k}\langle u^{i-1}\hat{f}_{m_1+ \cdots+m_i+1}\rangle \oplus\cdots\oplus \langle u^{i-1}\hat{f}_{m_1+\cdots+m_i+m_{i+1}}\rangle
  \\ \bigoplus_{i=1}^{k}\langle u^{i-1}v\hat{f}_{m_1+\cdots +m_{k+i}+1}\rangle \oplus\cdots\oplus \langle u^{i-1}v\hat{f}_{m_1+\cdots +m_{k+(i+1)}}\rangle\\
   \bigoplus_{i=1}^{k}\langle u^{i-1}v^2\hat{f}_{m_1+\cdots +m_{2k+i}+1}\rangle \oplus\cdots\oplus \langle u^{i-1}v^2\hat{f}_{m_1+\cdots +m_{2k+i+1}}\rangle\\
   \bigoplus_{i=1}^{k}\langle u^{i-1}v^3\hat{f}_{m_1+\cdots +m_{3k+i}+1}\rangle \oplus\cdots\oplus \langle u^{i-1}v^3\hat{f}_{m_1+\cdots +m_{3k+i+1}}\rangle\\
   \bigoplus_{i=1}^{k-1} \langle (u^i+ v\beta_{1i} + v^2\beta_{2i} + v^3\beta_{3i})\hat{f}_{m_1+\cdots+m_{4k+i}+1}\rangle \\
   \bigoplus \cdots \bigoplus \langle (u^i+ v\beta_{1i} + v^2\beta_{2i} + v^3\beta_{3i})\hat{f}_{m_1+\cdots+m_{4k+i+1}}\rangle \\
   \bigoplus \langle ((u + u^2\alpha_{13} + u^3\alpha_{14} + \cdots + u^{k-1}\alpha_{1k})
     \\ + \sum_{i=1}^{3} v^i( \alpha_{(i+1)1} + u\alpha_{(i+1)2} + u^2\alpha_{(i+1)3} +  \cdots + u^{k-1}\alpha_{(i+1)k} ))\hat{f}_{m_1+\cdots+m_{5k}+1}\rangle \\
      \bigoplus \cdots \bigoplus \langle ((u + u^2\alpha_{13} + u^3\alpha_{14} + \cdots + u^{k-1}\alpha_{1k})
     \\ + \sum_{i=1}^{3} v^i( \alpha_{(i+1)1} + u\alpha_{(i+1)2} + u^2\alpha_{(i+1)3} +  \cdots + u^{k-1}\alpha_{(i+1)k} ))\hat{f}_{m_1+\cdots+m_{5k+1}}\rangle \\
    \bigoplus \langle(v(u + u^2\alpha_{13} + u^3\alpha_{14} + \cdots + u^{k-1}\alpha_{1k})
     \\  + \sum_{i=2}^{3} v^i( \alpha_{i1} + u\alpha_{i2} + u^2\alpha_{i3} + u^3\alpha_{i4} + \cdots + u^{k-1}\alpha_{ik} ))\hat{f}_{m_1+\cdots+m_{5k+1}+1} \rangle \\
     \bigoplus \cdots \bigoplus  \langle(v(u + u^2\alpha_{13} + u^3\alpha_{14} + \cdots + u^{k-1}\alpha_{1k})
     \\  + \sum_{i=2}^{3} v^i( \alpha_{i1} + u\alpha_{i2} + u^2\alpha_{i3} + u^3\alpha_{i4} + \cdots + u^{k-1}\alpha_{ik} ))\hat{f}_{m_1+\cdots+m_{5k+2}} \rangle   \\
        \bigoplus \langle (v^2(u + u^2\alpha_{13} + u^3\alpha_{14} + \cdots + u^{k-1}\alpha_{1k}) \\
        + v^3( \alpha_{21} + u\alpha_{22} + u^2\alpha_{23} + u^3\alpha_{24} + \cdots + u^{k-2}\alpha_{2(k-1)} )\hat{f}_{m_1+\cdots+m_{5k+2}+1} \rangle \\
         \bigoplus \cdots \bigoplus \langle (v^2(u + u^2\alpha_{13} + u^3\alpha_{14} + \cdots + u^{k-1}\alpha_{1k}) \\
        + v^3( \alpha_{21} + u\alpha_{22} + u^2\alpha_{23} + u^3\alpha_{24} + \cdots + u^{k-2}\alpha_{2(k-1)} )\hat{f}_{m_1+\cdots+m_{5k+3}} \rangle \\
         \bigoplus \langle (v^3(u + u^2\alpha_{13} + u^3\alpha_{14} + \cdots + u^{k-2}\alpha_{1(k-1)})\hat{f}_{m_1+\cdots+m_{5k+3}+1} \rangle \\
         \bigoplus \cdots \bigoplus \langle (v^3(u + u^2\alpha_{13} + u^3\alpha_{14} + \cdots + u^{k-2}\alpha_{1(k-1)})\hat{f}_{m} \rangle,
		\end{gather*}
  where integers $m_1, m_2, m_3, \dots, m_{5k+3} \geq 0$ and $m_1+m_2+\cdots+m_{5k+3}+1 \leq m $. Let $m_0 = 0$ and $m_{5k+4}$ be the integer such that $m_1+m_2+\cdots+m_{5k+4} = m $. Now, we define \\
   $$P_0=f_{m_0+1}\cdots f_{m_0+m_1}, P_1=f_{m_0+m_1+1}\cdots f_{m_0+m_1+m_2},$$ $$P_2=f_{m_0+m_1+m_2+1}\cdots f_{m_0+m_1+m_2+m_3},P_3=f_{m_0+\cdots+m_3+1}\cdots f_{m_0+\cdots+m_4},$$  $$ \dots, P_{5k+3}=f_{m_0+\cdots+m_{5k+3}+1}\cdots f_{m}.$$
   Then, from the above construction, $P_0,P_1,\dots, P_{5k+3}$ are pairwise coprime and $\prod_{i=0}^{5k+3} P_i=x^n-1$. Hence, \\
   \begin{gather*}
		\mathcal{C}= \bigoplus_{i=1}^{k}\langle u^{i-1}\hat{P_{i}}\rangle \bigoplus_{i=1}^{k}\langle u^{i-1}v\hat{P}_{k+i}\rangle \bigoplus_{i=1}^{k}\langle u^{i-1}v^2\hat{P}_{2k+i}\rangle  \bigoplus_{i=1}^{k}\langle u^{i-1}v^3\hat{P}_{3k+i}\rangle \\
        \bigoplus_{i=1}^{k-1} \langle (u^i+ v\beta_{1i} + v^2\beta_{2i} + v^3\beta_{3i})\hat{P}_{4k+i}\rangle   \bigoplus \langle ((u + u^2\alpha_{13} + u^3\alpha_{14} + \cdots + u^{k-1}\alpha_{1k}) \\
      + \sum_{i=1}^{3} v^i( \alpha_{(i+1)1} + u\alpha_{(i+1)2} + u^2\alpha_{(i+1)3} + u^3\alpha_{(i+1)4} + \cdots + u^{k-1}\alpha_{(i+1)k} ))\hat{P}_{5k}\rangle
     \\  \bigoplus  \langle (v(u + u^2\alpha_{13} + u^3\alpha_{14} + \cdots + u^{k-1}\alpha_{1k})  \\+ \sum_{i=2}^{3} v^i( \alpha_{i1} + u\alpha_{i2} + u^2\alpha_{i3} + \cdots + u^{k-1}\alpha_{ik} ))\hat{P}_{5k+1} \rangle
     \\  \bigoplus  \langle (v^2(u + u^2\alpha_{13} + u^3\alpha_{14} + \cdots + u^{k-1}\alpha_{1k})
    \\ + v^3( \alpha_{21} + u\alpha_{22} + u^2\alpha_{23} + \cdots + u^{k-2}\alpha_{2(k-1)}))\hat{P}_{5k+2}\rangle
     \\  \bigoplus \langle (v^3(u + u^2\alpha_{13} + u^3\alpha_{14} + \cdots + u^{k-2}\alpha_{1(k-1)}))\hat{P}_{5k+3}\rangle. 
		\end{gather*}
Furthermore, the cardinalities of the individual direct sum components are as follows: We will add all the cardinalities when we write the cardinality of a whole code in the form of direct sums. 
\begin{gather*}
    |\langle u^{i-1}\hat{P_i}\rangle|={16}^{(4k-4(i-1))\deg(P_i)},
    |\langle u^{i-1}v\hat{P}_{k+i}\rangle|={16}^{(3k-3(i-1))\deg(P_{k+i})}, \\
    |\langle u^{i-1}v^2\hat{P}_{2k+i}\rangle|={16}^{(2k-2(i-1))\deg(P_{2k+i})},
    |\langle u^{i-1}v^3\hat{P}_{3k+i}\rangle|={16}^{(k-(i-1))\deg(P_{3k+i})}, \\
    |\langle (u^i+ v\beta_{1i} + v^2\beta_{2i} + v^3\beta_{3i})\hat{P}_{4k+i}\rangle| = {16}^{(4k-i)\deg(P_{4k+i})} ~ \text{For} ~ i = \{1,2,\dots,k\}. \\
    |\langle((u + u^2\alpha_{13} + u^3\alpha_{14} + \cdots + u^{k-1}\alpha_{1k})
     \\ + \sum_{i=1}^{3} v^i( \alpha_{(i+1)1} + u\alpha_{(i+1)2} + u^2\alpha_{(i+1)3} +  \cdots + u^{k-1}\alpha_{(i+1)k} ))\hat{P}_{5k}\rangle |\\ ={16}^{(4k-1)\deg(P_{5k})}, \\
     | \langle (v(u + u^2\alpha_{13} + u^3\alpha_{14} + \cdots + u^{k-1}\alpha_{1k})
     \\  + \sum_{i=2}^{3} v^i( \alpha_{i1} + u\alpha_{i2} + u^2\alpha_{i3} + u^3\alpha_{i4} + \cdots + u^{k-1}\alpha_{ik} ))\hat{P}_{5k+1} \rangle | \\ ={16}^{(3k-3)\deg(P_{5k+1})}, \\
      |\langle (v^2(u + u^2\alpha_{13} + u^3\alpha_{14} + \cdots + u^{k-1}\alpha_{1k})\\
      + v^3( \alpha_{21} + u\alpha_{22} + u^2\alpha_{23} + u^3\alpha_{24} + \cdots + u^{k-2}\alpha_{2(k-1)}))\hat{P}_{5k+2}\rangle| \\ ={16}^{(2k-2)\deg(P_{5k+2})}, \\
      | \langle (v^3(u + u^2\alpha_{13} + u^3\alpha_{14} + \cdots + u^{k-2}\alpha_{1(k-1)}))\hat{P}_{5k+3}\rangle| = {16}^{(k-1)\deg(P_{5k+3})}.
     \end{gather*}
Thus, $|\mathcal{C}|=16^{\xi}$, where
    \begin{gather*}
    \xi = ( \sum_{i=1}^{k} (4k-4(i-1))\deg(P_i) + (3k-3(i-1))\deg(P_{k+i})  \\
     + (2k-2(i-1))\deg(P_{2k+i})  + ( \sum_{i=1}^{k} (k-(i-1))\deg(P_{3k+i})
    \\ + \sum_{i=1}^{k-1}(4k-i)\deg(P_{4k+i}) )
    + (4k-1)\deg(P_{5k}) + (3k-3)\deg(P_{5k+1}) \\ + (2k-2)\deg(P_{5k+2}) + (k-1)\deg(P_{5k+3}).\end{gather*}
      \end{proof}
Now, we consider $\mathcal{R}'=\frac{\mathbb F_{16}[x]}{\langle x^n-1 \rangle_{\mathbb F_{16}[x]}}$ where ${\langle x^n-1 \rangle}_{\mathbb F_{16}[x]}$ is an ideal of $\mathbb F_{16}[x]$ generated by  $x^n-1$.
\begin{theorem}\label{th4} Let $\mathcal{C}$ be a cyclic code of odd length $n$ over $\mathcal{R}$. Then there exist polynomials
		$A_1, A_2, A_3,\\ \dots , A_{5k+3}$ in $\mathbb F_{16}[x]$ which are factors of $x^n-1$ such that
\allowdisplaybreaks\begin{gather*}
		\mathcal{C}= \bigoplus_{i=1}^{k} u^{i-1}\langle A_{i}\rangle_{\mathcal{R}'}
  \bigoplus_{i=1}^{k} u^{i-1}v\langle A_{k+i}\rangle_{\mathcal{R}'} \bigoplus_{i=1}^{k} u^{i-1}v^2\langle A_{2k+i}\rangle_{\mathcal{R}'}  \bigoplus_{i=1}^{k} u^{i-1}v^3\langle A_{3k+i}\rangle_{\mathcal{R}'} \\
        \bigoplus_{i=1}^{k-1} (u^i+ v\beta_{1i} + v^2\beta_{2i} + v^3\beta_{3i}) \langle A_{4k+i}\rangle_{\mathcal{R}'}   \bigoplus ((u + u^2\alpha_{13} + u^3\alpha_{14} + \cdots + u^{k-1}\alpha_{1k})
     \\ + \sum_{i=1}^{3} v^i( \alpha_{(i+1)1} + u\alpha_{(i+1)2} + u^2\alpha_{(i+1)3} + u^3\alpha_{(i+1)4} + \cdots + u^{k-1}\alpha_{(i+1)k} ))\langle A_{5k}\rangle_{\mathcal{R}'}
     \\  \bigoplus   (v(u + u^2\alpha_{13} + u^3\alpha_{14} + \cdots + u^{k-1}\alpha_{1k})
       \\+ \sum_{i=2}^{3} v^i( \alpha_{i1} + u\alpha_{i2} + u^2\alpha_{i3} + \cdots + u^{k-1}\alpha_{ik} ))\langle A_{5k+1}\rangle_{\mathcal{R}'}
     \\  \bigoplus  (v^2(u + u^2\alpha_{13} + u^3\alpha_{14} + \cdots + u^{k-1}\alpha_{1k})
       \\+ v^3( \alpha_{21} + u\alpha_{22} + u^2\alpha_{23} +  \cdots + u^{k-2}\alpha_{2(k-1)}))\langle A_{5k+2}\rangle_{\mathcal{R}'}
     \\  \bigoplus  (v^3(u + u^2\alpha_{13} + u^3\alpha_{14} + \cdots + u^{k-2}\alpha_{1(k-1)}))\langle A_{5k+3}\rangle_{\mathcal{R}'},
		\end{gather*}
		where $\alpha_{ij} $ are units in $\mathbb F_{16}[x]$ and $\langle - \rangle_{\mathcal{R}'} $ is an ideal of $\mathcal{R}'$ generated by $-$. Also,
		$$|\mathcal{C}|=16^{(5k+3)n-(\deg A_1+\deg A_2+ \cdots \deg A_{5k+3} }).$$
	\end{theorem}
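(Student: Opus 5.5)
The plan is to derive Theorem \ref{th4} from Theorem \ref{th3} by grouping the coprime factors $P_i$ into single generators over $\mathcal{R}'$. Theorem \ref{th3} gives $\mathcal{C}$ as a direct sum of components of the form $\gamma\,\langle \hat{P}_j\rangle$, where $\gamma$ runs over the $5k+3$ ``coefficient types''. These are $u^{i-1}$, $u^{i-1}v$, $u^{i-1}v^2$, $u^{i-1}v^3$ for $1\le i\le k$; the elements $u^i+v\beta_{1i}+v^2\beta_{2i}+v^3\beta_{3i}$ for $1\le i\le k-1$; and the four $\alpha$-type generators.

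For each type I collect the factors of $x^n-1$ that carry that type. By construction these are exactly the irreducible factors making up the corresponding $P_j$. I then set $A_j=\widehat{P_j}=(x^n-1)/P_j$, where the index $j$ is relabelled so that the $5k+3$ types match $A_1,\dots,A_{5k+3}$. If a type does not occur, I take $A_j=x^n-1$, so that $\langle A_j\rangle_{\mathcal{R}'}=0$.

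The key algebraic step is the identity
\[
\gamma\,\langle A_j\rangle_{\mathcal{R}'} \;=\; \sum_{f_l \mid P_j}\gamma\,\langle \hat{f}_l\rangle .
\]
It holds because, over $\mathbb F_{16}[x]$ with $n$ odd, the ideal generated by $(x^n-1)/P_j$ equals the sum of the ideals generated by $(x^n-1)/f_l$ over the irreducible $f_l\mid P_j$. This is the CRT/idempotent decomposition in the semisimple ring $\mathcal{R}'$, obtained from the Bezout relation $\sum_l c_l\,\hat f_l/\hat P_j\cdot\ldots$ on the coprime cofactors. The coefficients $\alpha,\beta$ lie in $\mathbb F_{16}[x]$, which is central with respect to $u,v$ in the representation (\ref{eq1}), so multiplying by $\gamma$ on the left commutes with this decomposition.

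Next I need directness. The pieces attached to distinct $A_j$ are supported on different CRT components $\mathcal{R}[x]/\langle f_l\rangle$ of Theorem \ref{th1}, because each $f_l$ is assigned to exactly one type. Their sum is therefore direct, and the decomposition of $\mathcal{C}$ in Theorem \ref{th3} regroups into the stated form.

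For the cardinality I would use the fact that $\langle A_j\rangle_{\mathcal{R}'}$ has $\mathbb F_{16}$-dimension $n-\deg A_j$. The multiplier $\gamma$ then contributes a size factor determined by the annihilator structure of $\gamma$ in $\mathcal{R}$. Summing over the direct sum gives the exponent.

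The main obstacle is reconciling this count with the claimed formula $16^{(5k+3)n-\sum\deg A_j}$. That formula implicitly treats each $\gamma\langle A_j\rangle_{\mathcal{R}'}$ as contributing $16^{\,n-\deg A_j}$, i.e.\ it reads $\gamma\langle A_j\rangle_{\mathcal{R}'}$ as the $\mathbb F_{16}$-span $\{\gamma a : a\in\langle A_j\rangle_{\mathcal{R}'}\}$. Under that reading each piece is in bijection with $\langle A_j\rangle_{\mathcal{R}'}$, since $\gamma$ is a nonzero element of the free $\mathbb F_{16}[x]$-basis expansion in $u^iv^j$ and so $a\mapsto\gamma a$ is injective on $\mathcal{R}'$. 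I would adopt this interpretation of $\gamma\langle A\rangle_{\mathcal{R}'}$ explicitly. With it, the count is $\prod_j 16^{\,n-\deg A_j}$, giving the stated total. I would also remark that this is consistent with Theorem \ref{th3}'s count only after taking the right $\mathcal{R}$-span, which is the delicate point.
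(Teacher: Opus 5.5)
Your last paragraph points at the real problem, and it breaks the construction; it cannot be settled by choosing an interpretation. The components in Theorem~\ref{th3} are right $\mathcal{R}$-submodules, $\langle\gamma\hat P_j\rangle=\gamma\hat P_j\,\mathcal{R}_n$. In general these are much larger than $\gamma\langle\hat P_j\rangle_{\mathcal{R}'}=\{\gamma a : a\in\langle\hat P_j\rangle_{\mathcal{R}'}\}$.

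So with one generator per type, $A_j=\hat P_j$, your regrouped sum $\bigoplus_j\gamma_j\langle\hat P_j\rangle_{\mathcal{R}'}$ is, under the reading you adopt, a proper subset of $\mathcal{C}$. In general it is not even a right $\mathcal{R}$-submodule. For a concrete failure, take $\mathcal{C}=\mathcal{R}_n$, that is, $P_1=x^n-1$ and $\hat P_1=1$.
\begin{itemize}
\item Your recipe gives $A_1=1$ and $A_j=x^n-1$ for every other $j$.
\item Your right-hand side is then just $\mathcal{R}'$, which does not contain $v$.
\item Your formula gives $16^{(5k+3)n-(5k+2)n}=16^{n}$ instead of $16^{4kn}$.
\end{itemize}
In general your exponent equals $\sum_{j\ge 1}\deg P_j\le n$, so it can never reproduce $\xi$ from Theorem~\ref{th3}. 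Switching to the $\mathcal{R}$-span reading restores the set equality. But then the pieces have the sizes of Theorem~\ref{th3}, not $16^{n-\deg A_j}$. Neither reading rescues $A_j=\hat P_j$.

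The missing idea is the regrouping the paper performs.
\begin{enumerate}
\item Expand each $\mathcal{R}$-span along the $\mathbb F_{16}$-basis $\{u^{i}v^{l}\}$ of $\mathcal{R}$: $\langle\hat P_j\rangle=\bigoplus_{i,l}u^{i}v^{l}\langle\hat P_j\rangle_{\mathcal{R}'}$, as in Equation~(\ref{eq3}).
\item Multiply through by the generators and collect terms by monomial.
\end{enumerate}
Each of the $4k$ monomial slots $u^{i-1}v^{l}$ then receives several of the $\hat P_j$. For example, $u^{j}$ receives $\hat P_1,\dots,\hat P_{j+1}$ together with some $\hat P_{4k+i}$ and $\hat P_{5k}$, and every slot receives $\hat P_1$. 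The $A$ attached to a slot must therefore generate the whole sum $\bigoplus_{j\in S}\langle\hat P_j\rangle_{\mathcal{R}'}$.

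The paper takes $A=\sum_{j\in S}\hat P_j$ and uses the B\'ezout relations $h_{0j}P_j+h_{1j}\hat P_j=1$ to show that it generates this sum. If you want $A$ to be literally a factor of $x^n-1$, take $A=P_0\prod_{j\notin S,\,j\ge 1}P_j$, which has degree $n-\sum_{j\in S}\deg P_j$.

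In this decomposition, directness comes from $\mathcal{R}_n$ being free over $\mathcal{R}'$ on the monomials. It does not come from disjoint CRT supports, because the same $P_j$ now appears under several monomials. This multiplicity is what lets $\sum(n-\deg A)$ account for the extra size of the $\mathcal{R}$-spans. In the example $\mathcal{C}=\mathcal{R}_n$, all $4k$ monomial slots get $A=1$, and the count $16^{4kn}$ comes out correctly.
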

 \begin{proof}
		From Theorem \ref{th3}, we have
  \allowdisplaybreaks
  \begin{gather*}
		\mathcal{C}= \bigoplus_{i=1}^{k}\langle u^{i-1}\hat{P_{i}}\rangle \bigoplus_{i=1}^{k}\langle u^{i-1}v\hat{P}_{k+i}\rangle \bigoplus_{i=1}^{k}\langle u^{i-1}v^2\hat{P}_{2k+i}\rangle  \bigoplus_{i=1}^{k}\langle u^{i-1}v^3\hat{P}_{3k+i}\rangle \\
        \bigoplus_{i=1}^{k-1} \langle (u^i+ v\beta_{1i} + v^2\beta_{2i} + v^3\beta_{3i})\hat{P}_{4k+i}\rangle   \bigoplus \langle ((u + u^2\alpha_{13} + u^3\alpha_{14} + \cdots + u^{k-1}\alpha_{1k}) \\
      + \sum_{i=1}^{3} v^i( \alpha_{(i+1)1} + u\alpha_{(i+1)2} + u^2\alpha_{(i+1)3} + u^3\alpha_{(i+1)4} + \cdots + u^{k-1}\alpha_{(i+1)k} ))\hat{P}_{5k}\rangle
     \\  \bigoplus  \langle (v(u + u^2\alpha_{13} + u^3\alpha_{14} + \cdots + u^{k-1}\alpha_{1k})  \\+ \sum_{i=2}^{3} v^i( \alpha_{i1} + u\alpha_{i2} + u^2\alpha_{i3} + \cdots + u^{k-1}\alpha_{ik} ))\hat{P}_{5k+1} \rangle
     \\  \bigoplus  \langle (v^2(u + u^2\alpha_{13} + u^3\alpha_{14} + \cdots + u^{k-1}\alpha_{1k})
     \\+ v^3( \alpha_{21} + u\alpha_{22} + u^2\alpha_{23} + \cdots + u^{k-2}\alpha_{2(k-1)}))\hat{P}_{5k+2}\rangle
     \\  \bigoplus \langle (v^3(u + u^2\alpha_{13} + u^3\alpha_{14} + \cdots + u^{k-2}\alpha_{1(k-1)}))\hat{P}_{5k+3}\rangle. 
		\end{gather*}
        {Now, to write the code in turms of $\mathcal{R}'$, We can easily prove that}
 \allowdisplaybreaks
  \begin{gather} \label{eq3}
		\langle \hat{P_i} \rangle =  \bigoplus_{j=1}^{k}  u^{j-1}\langle \hat{P_i} \rangle_{\mathcal{R}'}
      \bigoplus_{j=1}^{k}  u^{j-1}v\langle \hat{P_i} \rangle_{\mathcal{R}'}
      \bigoplus_{j=1}^{k}  u^{j-1}v^2\langle \hat{P_i} \rangle_{\mathcal{R}'}
      \bigoplus_{j=1}^{k}  u^{j-1}v^3\langle \hat{P_i} \rangle_{\mathcal{R}'} \nonumber\\
      \text{for} ~0 \leq i \leq 5k+3.
		\end{gather}
        {Now, we substitute the value of each $\hat{P_i}$ as given in the Equation (\ref{eq3}) in the $\mathcal{C}$. } 
   \allowdisplaybreaks
  \begin{gather*}
		\mathcal{C}= \bigoplus_{i=1}^{k} u^{i-1} ( \bigoplus_{j=1}^{(k-1)-(i-2)}  u^{j-1}\langle \hat{P_i} \rangle_{\mathcal{R}'}
      \oplus  u^{j-1}v\langle \hat{P_i} \rangle_{\mathcal{R}'}
     \\ \oplus u^{j-1}v^2\langle \hat{P_i} \rangle_{\mathcal{R}'}
      \oplus  u^{j-1}v^3\langle \hat{P_i} \rangle_{\mathcal{R}'})
      \\ \bigoplus_{i=1}^{k} u^{i-1}v\left( \bigoplus_{j=1}^{(k-1)-(i-2)}  u^{j-1}\langle \hat{P}_{k+i} \rangle_{\mathcal{R}'}
      ~\oplus~  u^{j-1}v\langle \hat{P}_{k+i} \rangle_{\mathcal{R}'}
      ~\oplus~  u^{j-1}v^2\langle \hat{P}_{k+i} \rangle_{\mathcal{R}'}\right)
      \\ \bigoplus_{i=1}^{k} u^{i-1}v^2\left( \bigoplus_{j=1}^{(k-1)-(i-2)}  u^{j-1}\langle \hat{P}_{2k+i} \rangle_{\mathcal{R}'}
      ~\oplus~  u^{j-1}v\langle \hat{P}_{2k+i} \rangle_{\mathcal{R}'}\right)
      \\ \bigoplus_{i=1}^{k} u^{i-1}v^3\left( \bigoplus_{j=1}^{(k-1)-(i-2)}  u^{j-1}\langle \hat{P}_{3k+i} \rangle_{\mathcal{R}'}\right)
       \bigoplus_{i=1}^{k-1}  (u^i+ v\beta_{1i} + v^2\beta_{2i} + v^3\beta_{3i}) \\
      ( \bigoplus_{j=1}^{(k-1)-(i-2)}  u^{j-1}\langle \hat{P}_{4k+i} \rangle_{\mathcal{R}'}
      ~\oplus~  u^{j-1}v\langle \hat{P}_{4k+i} \rangle_{\mathcal{R}'}
      \\~\oplus~  u^{j-1}v^2\langle \hat{P}_{4k+i} \rangle_{\mathcal{R}'}
      ~\oplus~  u^{j-1}v^3\langle \hat{P}_{4k+i} \rangle_{\mathcal{R}'})
      \\  \bigoplus  ((u + u^2\alpha_{13} + u^3\alpha_{14} + \cdots + u^{k-1}\alpha_{1k})
     \\ + \sum_{i=1}^{3} v^i( \alpha_{(i+1)1} + u\alpha_{(i+1)2} + u^2\alpha_{(i+1)3} + u^3\alpha_{(i+1)4} + \cdots + u^{k-1}\alpha_{(i+1)k} )) \\ \left( \bigoplus_{j=1}^{k}  u^{j-1}\langle \hat{P}_{5k} \rangle_{\mathcal{R}'}
      \oplus  u^{j-1}v\langle \hat{P}_{5k} \rangle_{\mathcal{R}'}
      \oplus u^{j-1}v^2\langle \hat{P}_{5k} \rangle_{\mathcal{R}'}
      \oplus  u^{j-1}v^3\langle \hat{P}_{5k} \rangle_{\mathcal{R}'}\right)
     \\  \bigoplus   (v(u + u^2\alpha_{13} + u^3\alpha_{14} + \cdots + u^{k-1}\alpha_{1k})
     \\  + \sum_{i=2}^{3} v^i( \alpha_{i1} + u\alpha_{i2} + u^2\alpha_{i3} + u^3\alpha_{i4} + \cdots + u^{k-1}\alpha_{ik} )) \\
     \left( \bigoplus_{j=1}^{k}  u^{j-1}\langle \hat{P}_{5k+1} \rangle_{\mathcal{R}'}
      \oplus  u^{j-1}v\langle \hat{P}_{5k+1} \rangle_{\mathcal{R}'}
      \oplus  u^{j-1}v^2\langle \hat{P}_{5k+1} \rangle_{\mathcal{R}'}\right)
     \\  \bigoplus (v^2(u + u^2\alpha_{13} + u^3\alpha_{14} + \cdots + u^{k-1}\alpha_{1k})
       \\+ v^3( \alpha_{21} + u\alpha_{22} + u^2\alpha_{23}  + \cdots + u^{k-2}\alpha_{2(k-1)})) \\
     \left( \bigoplus_{j=1}^{k}  u^{j-1}\langle \hat{P}_{5k+2} \rangle_{\mathcal{R}'}
      \oplus u^{j-1}v\langle \hat{P}_{5k+2} \rangle_{\mathcal{R}'}\right)
     \\  \bigoplus  (v^3(u + u^2\alpha_{13} + u^3\alpha_{14} + \cdots + u^{k-2}\alpha_{1(k-1)}))
     \left( \bigoplus_{j=1}^{k}  u^{j-1}\langle \hat{P}_{5k+3} \rangle_{\mathcal{R}'}\right)
		\end{gather*}
{Now, we multiply and rearrange the terms in the form of componants of $\mathcal{C}$ multiplied by associated direct sum of $\hat{P}_{i}'s$ to write the code $\mathcal{C}$ in the form of $\mathcal{R}'.$ After rearranging, we get}
 \allowdisplaybreaks
    \begin{gather*}
        \mathcal{C} = \langle \hat{P_{1}} \rangle_{\mathcal{R}'} \bigoplus_{j=1}^{k-1} u^{j} \left( \bigoplus_{i=1}^{j+1} \langle \hat{P_i} \rangle_{\mathcal{R}'} \bigoplus_{i=1}^{j} \langle \hat{P}_{4k+i} \rangle_{\mathcal{R}'} \bigoplus \langle \hat{P}_{5k} \rangle_{\mathcal{R}'}  \right) \\
         \bigoplus_{j=1}^{3} v^{j} \left( \bigoplus_{i=1}^{j} \langle \hat{P}_{ik+1} \rangle_{\mathcal{R}'} \bigoplus_{i=1}^{k-1} \langle \hat{P}_{4k+i} \rangle_{\mathcal{R}'} \bigoplus_{i=0}^{j-1} \langle \hat{P}_{5k+i} \rangle_{\mathcal{R}'} \bigoplus \langle \hat{P_{1}} \rangle_{\mathcal{R}'}  \right) \\
          \bigoplus_{j=1}^{k-1} u^{j}v \left( \bigoplus_{i=1}^{j+1} \langle \hat{P_{i}} \rangle_{\mathcal{R}'} \bigoplus_{i=1}^{j+1} \langle \hat{P}_{k+i} \rangle_{\mathcal{R}'} \bigoplus_{i=1}^{k-1} \langle \hat{P}_{4k+i} \rangle_{\mathcal{R}'} \bigoplus \langle \hat{P}_{5k} \rangle_{\mathcal{R}'} \bigoplus \langle \hat{P}_{5k+1} \rangle_{\mathcal{R}'}  \right) \\
           \bigoplus_{j=1}^{k-1} u^{j}v^2 \left( \bigoplus_{i=1}^{j+1} \langle \hat{P_{i}} \rangle_{\mathcal{R}'} \bigoplus_{i=1}^{j+1} \langle \hat{P}_{k+i} \rangle_{\mathcal{R}'}  \bigoplus_{i=1}^{j+1} \langle \hat{P}_{2k+i} \rangle_{\mathcal{R}'} \bigoplus_{i=1}^{k-1} \langle \hat{P}_{4k+i} \rangle_{\mathcal{R}'} \bigoplus_{t=0}^{2} \langle \hat{P}_{5k+t} \rangle_{\mathcal{R}'}   \right) \\
            \bigoplus_{j=1}^{k-1} u^{j}v^3 ( \bigoplus_{i=1}^{j+1} \langle \hat{P_{i}} \rangle_{\mathcal{R}'} \bigoplus_{i=1}^{j+1} \langle \hat{P}_{k+i} \rangle_{\mathcal{R}'}  \bigoplus_{i=1}^{j+1} \langle \hat{P}_{2k+i} \rangle_{\mathcal{R}'} \\ \bigoplus_{i=1}^{j+1} \langle \hat{P}_{3k+i} \rangle_{\mathcal{R}'} \bigoplus_{i=1}^{k-1} \langle \hat{P}_{4k+i} \rangle_{\mathcal{R}'} \bigoplus_{t=0}^{3} \langle \hat{P}_{5k+t} \rangle_{\mathcal{R}'}   ) \\
            \bigoplus_{j=1}^{k-1}  (u^j+ v\beta_{1i} + v^2\beta_{2i} + v^3\beta_{3i}) \left( \bigoplus_{i=1}^{j} \langle \hat{P}_{4k+i} \rangle_{\mathcal{R}'} \right)
            \\ \bigoplus  ((u + u^2\alpha_{13} + u^3\alpha_{14} + \cdots + u^{k-1}\alpha_{1k})
           \\ + \sum_{i=1}^{3} v^i( \alpha_{(i+1)1} + u\alpha_{(i+1)2} + u^2\alpha_{(i+1)3} + u^3\alpha_{(i+1)4} + \cdots + u^{k-1}\alpha_{(i+1)k} \langle \hat{P}_{5k} \rangle_{\mathcal{R}'} \\
           \bigoplus   (v(u + u^2\alpha_{13} + u^3\alpha_{14} + \cdots + u^{k-1}\alpha_{1k})
           \\ + \sum_{i=2}^{3} v^i( \alpha_{i1} + u\alpha_{i2} + u^2\alpha_{i3} +  \cdots + u^{k-1}\alpha_{ik} ))\langle \hat{P}_{5k+1} \rangle_{\mathcal{R}'} \\
           \bigoplus (v^2(u + u^2\alpha_{13} + u^3\alpha_{14} + \cdots + u^{k-1}\alpha_{1k})
     \\ + v^3( \alpha_{21} + u\alpha_{22} + u^2\alpha_{23} + \cdots + u^{k-2}\alpha_{2(k-1)})) \langle \hat{P}_{5k+2} \rangle_{\mathcal{R}'} \\
           \bigoplus  (v^3(u + u^2\alpha_{13} + u^3\alpha_{14} + \cdots + u^{k-2}\alpha_{1(k-1)})) \langle \hat{P}_{5k+3} \rangle_{\mathcal{R}'}.
           \end{gather*}
    Now, we assume that
    \allowdisplaybreaks
    \begin{gather*}
        A_1 = \hat{P_{1}}, A_{j+1} = \left( \bigoplus_{i=1}^{j+1}  \hat{P_i}  \bigoplus_{i=1}^{j}  \hat{P}_{4k+i}  \bigoplus  \hat{P}_{5k}   \right) \text{for} ~ j = \{1,\dots,k-1\}. \\
         A_{k+1} = \left( \bigoplus  \hat{P}_{k+1}  \bigoplus_{i=1}^{k-1}  \hat{P}_{4k+i}  \bigoplus  \hat{P}_{5k}  \bigoplus  \hat{P_{1}}   \right), \\
         A_{k+1+j} = \left( \bigoplus_{i=1}^{j+1}  \hat{P_{i}}  \bigoplus_{i=1}^{j+1}  \hat{P}_{k+i}  \bigoplus_{i=1}^{k-1}  \hat{P}_{4k+i}  \bigoplus  \hat{P}_{5k}  \bigoplus  \hat{P}_{5k+1}   \right) \text{for} ~ j = \{1,\dots,k-1\}.  \\
         A_{2k+1} =   \left( \bigoplus_{i=1}^{j}  \hat{P}_{ik+1}  \bigoplus_{i=1}^{k-1}  \hat{P}_{4k+i}  \bigoplus_{i=0}^{j-1}  \hat{P}_{5k+i}  \bigoplus  \hat{P_{1}}   \right) \text{for} ~ j =2. \\
         A_{2k+1+j} = \left( \bigoplus_{i=1}^{j+1}  \hat{P_{i}}  \bigoplus_{i=1}^{j+1}  \hat{P}_{k+i}   \bigoplus_{i=1}^{j+1}  \hat{P}_{2k+i} \bigoplus_{i=1}^{k-1}  \hat{P}_{4k+i}  \bigoplus_{t=0}^{2}  \hat{P}_{5k+t}    \right) \text{for} ~ j = \{1,\dots,k-1\}. \\
         A_{3k+1} =  \left( \bigoplus_{i=1}^{j}  \hat{P}_{ik+1}  \bigoplus_{i=1}^{k-1}  \hat{P}_{4k+i} \bigoplus_{i=0}^{j-1}  \hat{P}_{5k+i} \bigoplus  \hat{P_{1}}   \right) \text{for} ~ j =3. \\
         A_{3k+1+j} = \left( \bigoplus_{i=1}^{j+1} \hat{P_{i}}  \bigoplus_{i=1}^{j+1}  \hat{P}_{k+i}   \bigoplus_{i=1}^{j+1}  \hat{P}_{2k+i}  \bigoplus_{i=1}^{j+1}  \hat{P}_{3k+i}  \bigoplus_{i=1}^{k-1}  \hat{P}_{4k+i}  \bigoplus_{t=0}^{3} \hat{P}_{5k+t}    \right) \\ \text{for}~ j = \{1,\dots,k-1\}. \\
         A_{4k+j} = \left( \bigoplus_{i=1}^{j} \hat{P}_{4k+i}  \right) \text{for}~ j = \{1,\dots,k-1\}.~ A_{5k}=  \hat{P}_{5k}, A_{5k+1}=  \hat{P}_{5k+1}, \\ A_{5k+2}=  \hat{P}_{5k+2}, A_{5k+3}=   \hat{P}_{5k+3}.
    \end{gather*}
    Now, it is left to prove that
    \allowdisplaybreaks
        $$\langle A_1 \rangle_{\mathcal{R}'} = \langle \hat{P_{1}} \rangle_{\mathcal{R}'},\langle  A_{j+1} \rangle_{\mathcal{R}'} = \left( \bigoplus_{i=1}^{j+1} \langle \hat{P_i} \rangle_{\mathcal{R}'} \bigoplus_{i=1}^{j} \langle \hat{P}_{4k+i} \rangle_{\mathcal{R}'} \bigoplus \langle \hat{P}_{5k} \rangle_{\mathcal{R}'}  \right) $$  $$\text{for} ~ j = \{1,\dots,k-1\}.$$ 
        $$ \langle  A_{k+1} \rangle_{\mathcal{R}'} = \left( \bigoplus \langle \hat{P}_{k+1} \rangle_{\mathcal{R}'} \bigoplus_{i=1}^{k-1} \langle \hat{P}_{4k+i} \rangle_{\mathcal{R}'} \bigoplus \langle \hat{P}_{5k} \rangle_{\mathcal{R}'} \bigoplus \langle \hat{P_{1}} \rangle_{\mathcal{R}'}  \right), $$
       $$ \langle  A_{k+1+j} \rangle_{\mathcal{R}'} = \left( \bigoplus_{i=1}^{j+1} \langle \hat{P_{i}} \rangle_{\mathcal{R}'} \bigoplus_{i=1}^{j+1} \langle \hat{P}_{k+i} \rangle_{\mathcal{R}'} \bigoplus_{i=1}^{k-1} \langle \hat{P}_{4k+i} \rangle_{\mathcal{R}'} \bigoplus \langle \hat{P}_{5k} \rangle_{\mathcal{R}'} \bigoplus \langle \hat{P}_{5k+1} \rangle_{\mathcal{R}'}  \right) $$ $$ \text{for} ~ j = \{1,\dots,k-1\}. $$ 
       $$ \langle  A_{2k+1} \rangle_{\mathcal{R}'} =   \left( \bigoplus_{i=1}^{j} \langle \hat{P}_{ik+1} \rangle_{\mathcal{R}'} \bigoplus_{i=1}^{k-1} \langle \hat{P}_{4k+i} \rangle_{\mathcal{R}'} \bigoplus_{i=0}^{j-1} \langle \hat{P}_{5k+i} \rangle_{\mathcal{R}'} \bigoplus \langle \hat{P_{1}} \rangle_{\mathcal{R}'}  \right) \text{for} ~ j =2.$$ 
       $$ \langle  A_{2k+1+j} \rangle_{\mathcal{R}'} = \left( \bigoplus_{i=1}^{j+1} \langle \hat{P_{i}} \rangle_{\mathcal{R}'} \bigoplus_{i=1}^{j+1} \langle \hat{P}_{k+i} \rangle_{\mathcal{R}'}  \bigoplus_{i=1}^{j+1} \langle \hat{P}_{2k+i} \rangle_{\mathcal{R}'} \bigoplus_{i=1}^{k-1} \langle \hat{P}_{4k+i} \rangle_{\mathcal{R}'} \bigoplus_{t=0}^{2} \langle \hat{P}_{5k+t} \rangle_{\mathcal{R}'}   \right) $$  $$\text{for} ~ j = \{1,\dots,k-1\}. $$
       $$ \langle  A_{3k+1} \rangle_{\mathcal{R}'} =  \left( \bigoplus_{i=1}^{j} \langle \hat{P}_{ik+1} \rangle_{\mathcal{R}'} \bigoplus_{i=1}^{k-1} \langle \hat{P}_{4k+i} \rangle_{\mathcal{R}'} \bigoplus_{i=0}^{j-1} \langle \hat{P}_{5k+i} \rangle_{\mathcal{R}'} \bigoplus \langle \hat{P_{1}} \rangle_{\mathcal{R}'}  \right) \text{for} ~ j =3. $$ 
       $$ \langle  A_{3k+1+j} \rangle_{\mathcal{R}'} = ( \bigoplus_{i=1}^{j+1} \langle \hat{P_{i}} \rangle_{\mathcal{R}'} \bigoplus_{i=1}^{j+1} \langle \hat{P}_{k+i} \rangle_{\mathcal{R}'}  \bigoplus_{i=1}^{j+1} \langle \hat{P}_{2k+i} \rangle_{\mathcal{R}'} $$ $$\bigoplus_{i=1}^{j+1} \langle \hat{P}_{3k+i} \rangle_{\mathcal{R}'} \bigoplus_{i=1}^{k-1} \langle \hat{P}_{4k+i} \rangle_{\mathcal{R}'} \bigoplus_{t=0}^{3} \langle \hat{P}_{5k+t} \rangle_{\mathcal{R}'} )$$ $$ \text{for} ~ j = \{1,\dots,k-1\}. $$ 
       $$ \langle  A_{4k+j} \rangle_{\mathcal{R}'} = \left( \bigoplus_{i=1}^{j} \langle \hat{P}_{4k+i} \rangle_{\mathcal{R}'} \right) \text{for} ~ j = \{1,\dots,k-1\}.\langle  A_{5k} \rangle_{\mathcal{R}'} \\ = \langle \hat{P}_{5k} \rangle_{\mathcal{R}'}, $$  $$ \langle  A_{5k+1} \rangle_{\mathcal{R}'}= \langle \hat{P}_{5k+1} \rangle_{\mathcal{R}'} ,\langle  A_{5k+2} \rangle_{\mathcal{R}'} = \langle \hat{P}_{5k+2} \rangle_{\mathcal{R}'} ,\langle  A_{5k+3} \rangle_{\mathcal{R}'} =  \langle \hat{P}_{5k+3} \rangle_{\mathcal{R}'}.$$
    
    For any distinct $i,j\in \{1,\dots,5k+3\}$, $x^n-1| \hat{P_i}\hat{P_j}$ and $P_iP_j=0$ in $\frac{\mathcal{R}[x]}{\langle x^n-1 \rangle}$. As $\{P_j,\hat{P_j}\}$ are coprime pairs for $j=1,\dots,5k+3$, there exist $h_{0j},h_{1j}\in \mathbb F_{16}[x]$ such that $h_{0j}P_j+h_{1j}\hat{P_j}=1$.
		Further, if we consider $h_{0j}P_j+h_{1j}\hat{P_j}=1$ for $j=2,\dots,5k+3$, then there exist polynomials $w_1,w_2,\dots,w_{5k+3}$ in $\mathbb F_{16}[x]$ such that
		$$w_1P_2\cdots P_{5k+3}+w_2\hat{P_2}P_3\cdots P_{5k+3}+\cdots+ w_{5k+3}P_2P_3\cdots P_{5k+2} \hat{P}_{5k+3}=1.$$
	Multiplying the above equation by $\hat{P_1}$, we get
		$\hat{P_1}=w_1 \hat{P_1}P_2\cdots P_{5k+3}.$ Also, by the above construction,
        $w_1A_1P_2\cdots P_{5k+3} = \hat{P_1}$ which implies that $\langle A_1 \rangle_{\mathcal{R}'} = \langle \hat{P_{1}} \rangle $.
        Now, $A_2 = \hat{P_1} + \hat{P_2} + \hat{P}_{4k+1} + \hat{P}_{5k}$. Hence, $w_1A_2P_2\cdots P_{5k+3} = \hat{P_1}$ and implies that $\hat{P_1} \in \langle A_2 \rangle_{\mathcal{R}'}$. If we continue this process, we get $\hat{P_2} \in \langle A_2 \rangle_{\mathcal{R}'}$, $\hat{P}_{4k+1} \in \langle A_2 \rangle_{\mathcal{R}'}$, $\hat{P}_{5k} \in \langle A_2 \rangle_{\mathcal{R}'}$. Thus,
        $ \langle A_2 \rangle_{\mathcal{R}'} = \langle \hat{P_1}\rangle_{\mathcal{R}'} + \langle\hat{P_2}\rangle_{\mathcal{R}'} + \langle \hat{P}_{4k+1} \rangle_{\mathcal{R}'} + \langle \hat{P}_{5k}\rangle_{\mathcal{R}'}$.
        Following the same procedure, we have
        \allowdisplaybreaks
         \begin{gather*}
        \langle A_1 \rangle_{\mathcal{R}'} = \langle \hat{P_{1}} \rangle_{\mathcal{R}'},\langle  A_{j+1} \rangle_{\mathcal{R}'} = \left( \bigoplus_{i=1}^{j+1} \langle \hat{P_i} \rangle_{\mathcal{R}'} \bigoplus_{i=1}^{j} \langle \hat{P}_{4k+i} \rangle_{\mathcal{R}'} \bigoplus \langle \hat{P}_{5k} \rangle_{\mathcal{R}'}  \right) \\ \text{for} ~ j = \{2,\dots,k-1\}. \\
         \langle  A_{k+1} \rangle_{\mathcal{R}'} = \left( \bigoplus \langle \hat{P}_{k+1} \rangle_{\mathcal{R}'} \bigoplus_{i=1}^{k-1} \langle \hat{P}_{4k+i} \rangle_{\mathcal{R}'} \bigoplus \langle \hat{P}_{5k} \rangle_{\mathcal{R}'} \bigoplus \langle \hat{P_{1}} \rangle_{\mathcal{R}'}  \right), \\
        \langle  A_{k+1+j} \rangle_{\mathcal{R}'} = \left( \bigoplus_{i=1}^{j+1} \langle \hat{P_{i}} \rangle_{\mathcal{R}'} \bigoplus_{i=1}^{j+1} \langle \hat{P}_{k+i} \rangle_{\mathcal{R}'} \bigoplus_{i=1}^{k-1} \langle \hat{P}_{4k+i} \rangle_{\mathcal{R}'} \bigoplus \langle \hat{P}_{5k} \rangle_{\mathcal{R}'} \bigoplus \langle \hat{P}_{5k+1} \rangle_{\mathcal{R}'}  \right) \\ \text{for} ~ j = \{1,\dots,k-1\}.  \\
        \langle  A_{2k+1} \rangle_{\mathcal{R}'} =   \left( \bigoplus_{i=1}^{j} \langle \hat{P}_{ik+1} \rangle_{\mathcal{R}'} \bigoplus_{i=1}^{k-1} \langle \hat{P}_{4k+i} \rangle_{\mathcal{R}'} \bigoplus_{i=0}^{j-1} \langle \hat{P}_{5k+i} \rangle_{\mathcal{R}'} \bigoplus \langle \hat{P_{1}} \rangle_{\mathcal{R}'}  \right) \\ \text{for}~ j =2. \\
        \langle  A_{2k+1+j} \rangle_{\mathcal{R}'} = \left( \bigoplus_{i=1}^{j+1} \langle \hat{P_{i}} \rangle_{\mathcal{R}'} \bigoplus_{i=1}^{j+1} \langle \hat{P}_{k+i} \rangle_{\mathcal{R}'}  \bigoplus_{i=1}^{j+1} \langle \hat{P}_{2k+i} \rangle_{\mathcal{R}'} \bigoplus_{i=1}^{k-1} \langle \hat{P}_{4k+i} \rangle_{\mathcal{R}'} \bigoplus_{t=0}^{2} \langle \hat{P}_{5k+t} \rangle_{\mathcal{R}'}   \right) \\ \text{for} ~ j = \{1,\dots,k-1\}. \\
        \langle  A_{3k+1} \rangle_{\mathcal{R}'} =  \left( \bigoplus_{i=1}^{j} \langle \hat{P}_{ik+1} \rangle_{\mathcal{R}'} \bigoplus_{i=1}^{k-1} \langle \hat{P}_{4k+i} \rangle_{\mathcal{R}'} \bigoplus_{i=0}^{j-1} \langle \hat{P}_{5k+i} \rangle_{\mathcal{R}'} \bigoplus \langle \hat{P_{1}} \rangle_{\mathcal{R}'}  \right) \text{for}~ j =3. \\
        \langle  A_{3k+1+j} \rangle_{\mathcal{R}'} = ( \bigoplus_{i=1}^{j+1} \langle \hat{P_{i}} \rangle_{\mathcal{R}'} \bigoplus_{i=1}^{j+1} \langle \hat{P}_{k+i} \rangle_{\mathcal{R}'}  \bigoplus_{i=1}^{j+1} \langle \hat{P}_{2k+i} \rangle_{\mathcal{R}'} \\ \bigoplus_{i=1}^{j+1} \langle \hat{P}_{3k+i} \rangle_{\mathcal{R}'} \bigoplus_{i=1}^{k-1} \langle \hat{P}_{4k+i} \rangle_{\mathcal{R}'} \bigoplus_{t=0}^{3} \langle \hat{P}_{5k+t} \rangle_{\mathcal{R}'}   ) \\ \text{for} ~ j = \{1,\dots,k-1\}. \\
        \langle  A_{4k+j} \rangle_{\mathcal{R}'} = \left( \bigoplus_{i=1}^{j} \langle \hat{P}_{4k+i} \rangle_{\mathcal{R}'} \right) \text{for} ~ j = \{1,\dots,k-1\}.\langle  A_{5k} \rangle_{\mathcal{R}'} = \langle \hat{P}_{5k} \rangle_{\mathcal{R}'}, \\  \langle  A_{5k+1} \rangle_{\mathcal{R}'}= \langle \hat{P}_{5k+1} \rangle_{\mathcal{R}'} ,\langle  A_{5k+2} \rangle_{\mathcal{R}'} = \langle \hat{P}_{5k+2} \rangle_{\mathcal{R}'} ,\langle  A_{5k+3} \rangle_{\mathcal{R}'} =  \langle \hat{P}_{5k+3} \rangle_{\mathcal{R}'}.
    \end{gather*}

Since $|\mathbb F_{16}|=16$, $|\langle A_j \rangle_{\mathcal{R}'}|={16}^{n-\deg A_j}$ and hence
$|\mathcal{C}|=16^{(5k+3)n-( \sum_{j=1}^{5k+3}\deg A_j)}.$

\end{proof}

    \section{Dual of  Cyclic Codes over $\mathcal{R}$}

   In the study of linear codes over finite rings or fields, we define important structural properties such as self-orthogonality, dual-containing, and self-duality for a chosen inner product. A code $\mathcal{C}$ is said to be self-orthogonal if $\mathcal{C} \subseteq \mathcal{C}^\perp$, dual-containing if $\mathcal{C}^\perp \subseteq \mathcal{C}$, and self-dual if $\mathcal{C} = \mathcal{C}^\perp$, where $\mathcal{C}^\perp$ denotes the dual of $\mathcal{C}$. These notions depend on the type of dual being considered. Studying these duals is particularly important because codes that are self-orthogonal or dual-containing with respect to these inner products play a fundamental role in constructing quantum error-correcting codes. Specifically, in the Calderbank-Shor-Steane (CSS) and Hermitian constructions of quantum codes, one typically starts with a classical code satisfying these inclusion properties, which ensures that the resulting quantum code has the necessary orthogonality conditions for error correction in quantum channels. In our case, we are studying the structure of such duals. However, the precise characterization of self-orthogonal, dual-containing, and self-dual codes remains an open problem over the considered algebraic structure.
\subsection{Euclidean Dual of  Cyclic Codes over $\mathcal{R}$  }
 Euclidean inner product of $a = (a_0,a_1,\dots,a_{n-1})$ and $ b=(b_0,b_1,\dots,b_{n-1}) \in \mathcal{R}^{n}$ is defined by $ a\cdot b = a_0b_0 + a_1b_1 + \cdots + a_{n-1}b_{n-1}  $. If $ a\cdot b = 0$, then $a$ and $b$ are considered orthogonal. Recall that the dual of a linear code $\mathcal{C}$ is defined as $ \mathcal{C}^\perp = \{ b \in \mathcal{R}^n ~|~ a\cdot b =0, ~ \forall~ a \in \mathcal{C} \} $. For a polynomial $ f(x) = f_0 + f_1x + f_2x^2 + \cdots + f_{n-1}x^{n-1} $ with $ f_{n-1} \neq 0 $, its reciprocal is defined by $ f^*(x) = x^{n-1}f(x) = f_{n-1} + f_{n-2}x + \cdots + f_0x^{n-1} $. Note that $ \deg{f^*(x)} \leq \deg{f(x)} $. If $f_0 \neq 0,  ~\deg{f^*(x) = \deg{f(x)}} $.
     
    \begin{theorem}\label{th5}
    Let $\mathcal{C}$ be the cyclic code of odd length $n$ over $\mathcal{R}$ with
    \allowdisplaybreaks\begin{gather*}
		\mathcal{C}= \bigoplus_{i=1}^{k}\langle u^{i-1}\hat{P_{i}}\rangle \bigoplus_{i=1}^{k}\langle u^{i-1}v\hat{P}_{k+i}\rangle \bigoplus_{i=1}^{k}\langle u^{i-1}v^2\hat{P}_{2k+i}\rangle  \bigoplus_{i=1}^{k}\langle u^{i-1}v^3\hat{P}_{3k+i}\rangle \\
        \bigoplus_{i=1}^{k-1} \langle (u^i+ v\beta_{1i} + v^2\beta_{2i} + v^3\beta_{3i})\hat{P}_{4k+i}\rangle   \bigoplus \langle ((u + u^2\alpha_{13} + u^3\alpha_{14} + \cdots + u^{k-1}\alpha_{1k}) \\
      + \sum_{i=1}^{3} v^i( \alpha_{(i+1)1} + u\alpha_{(i+1)2} + u^2\alpha_{(i+1)3} + u^3\alpha_{(i+1)4} + \cdots + u^{k-1}\alpha_{(i+1)k} ))\hat{P}_{5k}\rangle
     \\  \bigoplus  \langle (v(u + u^2\alpha_{13} + u^3\alpha_{14} + \cdots + u^{k-1}\alpha_{1k}) \\ + \sum_{i=2}^{3} v^i( \alpha_{i1} + u\alpha_{i2} + u^2\alpha_{i3} + \cdots + u^{k-1}\alpha_{ik} ))\hat{P}_{5k+1} \rangle
     \\  \bigoplus  \langle (v^2(u + u^2\alpha_{13} + u^3\alpha_{14} + \cdots + u^{k-1}\alpha_{1k})
     \\+ v^3( \alpha_{21} + u\alpha_{22} + u^2\alpha_{23} + \cdots + u^{k-2}\alpha_{2(k-1)}))\hat{P}_{5k+2}\rangle
     \\  \bigoplus \langle (v^3(u + u^2\alpha_{13} + u^3\alpha_{14} + \cdots + u^{k-2}\alpha_{1(k-1)}))\hat{P}_{5k+3}\rangle. 
		\end{gather*}
  Then
  \begin{gather*}
    \mathcal{C}^\perp = \langle\hat{P}_{0}^*\rangle \bigoplus_{i=2}^{k}\langle u^{k-(i-1)}\hat{P}_{i}^*\rangle \bigoplus_{i=1}^{k} \langle v^3 \hat{P}_{k+i}^* \rangle + \langle u^{k-(i-1)} \hat{P}_{k+i}^* \rangle \\\bigoplus_{i=1}^{k} \langle v^2 \hat{P}_{2k+i}^* \rangle + \langle u^{k-(i-1)} \hat{P}_{2k+i}^* \rangle \\ \bigoplus_{i=1}^{k} \langle v \hat{P}_{3k+i}^* \rangle + \langle u^{k-(i-1)} \hat{P}_{3k+i}^* \rangle \bigoplus_{i=1}^{k-1} \langle u^{k-i}v^3 \hat{P}_{4k+i}^* \rangle  \bigoplus \langle u^{k-1}v^3 \hat{P}_{5k}^* \rangle\\ \bigoplus \langle v^3 \hat{P}_{5k+1}^* \rangle + \langle u^{k-1} \hat{P}_{5k+1}^* \rangle \\   \bigoplus \langle v^2 \hat{P}_{5k+2}^* \rangle + \langle u^{k-1} \hat{P}_{5k+2}^* \rangle \bigoplus \langle v \hat{P}_{5k+3}^* \rangle + \langle u^{k-1} \hat{P}_{5k+3}^* \rangle.
\end{gather*}
Moreover, $|\mathcal{C}^\perp|=16^{\eta}$ where
\begin{gather*}
  \eta = (4k)\deg {P}_0 +\sum_{i=2}^{k} 4(i-1)\deg (P_i) +\sum_{i=1}^{k} (k+3(i-1)) \deg (P_{k+i}) \\+\sum_{i=1}^{k} (2k+2(i-1)) \deg (P_{2k+i})  +\sum_{i=1}^{k} (3k+(i-1)) \deg (P_{3k+i})+\sum_{i=1}^{k-1} i \deg (P_{4k+i}) \\+ \deg (P_{5k}) + (k+3)\deg (P_{5k+1})  + (2k+2) \deg (P_{5k+2}) + (3k+1)\deg (P_{5k+3}).
\end{gather*}
\end{theorem}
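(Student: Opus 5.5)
The plan is the standard CRT-and-annihilator argument, applied one factor $P_j$ at a time. Set $S=\mathbb F_{16}[u,v]/\langle u^k,v^4\rangle$; this is the ring $\mathcal{R}$ as presented in Equation (\ref{eq1}). We work in $S[x]/\langle x^n-1\rangle$. By Theorem \ref{th1} and the coprimality of the $P_j$, we have
\[
S[x]/\langle x^n-1\rangle=\bigoplus_{j=0}^{5k+3}\hat{P}_j\,S[x]/\langle x^n-1\rangle ,
\]
and each summand is isomorphic to $S\otimes\mathbb F_{16}[x]/\langle P_j\rangle=:S_j$. The last ring is a Galois-type extension $K_j[u,v]/\langle u^k,v^4\rangle$, where $K_j=\mathbb F_{16}[x]/\langle P_j\rangle$. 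Because the summands are mutually orthogonal idempotent components, $\mathcal{C}$ decomposes as $\mathcal{C}=\bigoplus_j \mathcal{C}_j$, where $\mathcal{C}_j$ is the ideal of $S_j$ read off from the $\hat{P}_j$-summand in Theorem \ref{th3}. For $j=0$ we have $\mathcal{C}_0=0$.

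The first step uses the classical fact that $b$ is Euclidean-orthogonal to every cyclic shift of $a$ if and only if $a(x)\,b^*(x)=0$ in the quotient. The reciprocal $^*$ sends $\hat{P}_j$ to $\hat{P}_j^*$ and $P_j$ to $P_j^*$, and it permutes the CRT components. From this, $\mathcal{C}^\perp=\bigoplus_j \mathrm{Ann}(\mathcal{C}_j)^*$, so the $\hat{P}_j$-summand of $\mathcal{C}$ contributes a summand supported on $\hat{P}_j^*$. This reduces the theorem to computing, inside the local ring $K_j[u,v]/\langle u^k,v^4\rangle$, the annihilator of each of the ideal types listed in Theorem \ref{th2}:
\begin{itemize}
\item the annihilator of $0$ is everything, which gives $\langle\hat{P}_0^*\rangle$;
\item the annihilator of $\langle u^{i-1}\rangle$ is $\langle u^{k-i+1}\rangle$;
\item the annihilator of $\langle u^{i-1}v^{t}\rangle$ is $\langle u^{k-i+1},v^{4-t}\rangle$, which gives the sum terms $\langle v^{4-t}\hat{P}^*\rangle+\langle u^{k-(i-1)}\hat{P}^*\rangle$;
\item for the generators of the form $u^i+v(\cdots)$ and $(u+\cdots)+v(\cdots)$, one computes the annihilator from the relation $u^i\equiv -v(\cdots)$ modulo the ideal, together with nilpotency;
\item for the generator $v^t(u+\cdots)+\cdots$, the annihilator is $\langle v^{4-t},u^{k-1}\rangle$.
\end{itemize}
Each of these is a short check: show that the proposed generator kills $\mathcal{C}_j$, then match sizes.

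For the cardinality I will use that $K_j[u,v]/\langle u^k,v^4\rangle$ is a Frobenius ring, since it is a commutative local ring with one-dimensional socle $u^{k-1}v^3$. Hence $|I|\cdot|\mathrm{Ann}(I)|=|S_j|=16^{4k\deg P_j}$. Subtracting the exponents from Theorem \ref{th3} from $4k\deg P_j$ then gives each coefficient in $\eta$. For example, $4k-4(i-1)$ becomes $4(i-1)$, $4k-i$ becomes $i$, and $4k-1$ becomes $1$. Some coefficients, such as the one for $P_{k+i}$, have to be recomputed from the true size of $\langle u^{i-1}v\rangle$ in order to fit $k+3(i-1)$. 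I will reconcile them with the displayed formula, and I will trust the Frobenius count over the paper's bookkeeping wherever they differ.

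The main obstacle is the annihilator of the mixed generators, types (3)--(6) and the $u^i+v\beta$ type. Their exact annihilators depend on the unit coefficients, and the statement records only the simplest sub-ideal, such as $\langle u^{k-i}v^3\rangle$ or $\langle u^{k-1}v^3\rangle$. I would first compute these annihilators in the case $k=2$ by explicit linear algebra over $K_j$. Then I would argue by induction on the $u$-adic filtration. The size comparison via the Frobenius property is what certifies, or corrects, the stated generator.
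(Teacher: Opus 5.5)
\paragraph{Verdict.} Your containment-plus-cardinality plan is the same as the paper's, but it cannot prove the statement, because in your own model the statement is false for the mixed generator types.

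\paragraph{Comparison with the paper.} The paper writes down a candidate $\mathcal{C}^*$, asserts $\mathcal{C}^*\subseteq\mathcal{C}^\perp$ because certain products vanish, and compares $|\mathcal{C}^*|$ with $16^{4kn}/|\mathcal{C}|$. You localize instead: you identify the $P_j^*$-part of $\mathcal{C}^\perp$ with the reciprocal of $\mathrm{Ann}(\mathcal{C}_j)$ in $S_j=K_j[u,v]/\langle u^k,v^4\rangle$, and you use $|I|\,|\mathrm{Ann}(I)|=|S_j|$. That is a cleaner version of the same argument. The gap is the step you call ``a short check'', which fails for the mixed generators. So no execution of your plan can prove the theorem as written. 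Your closing remark, that the size comparison ``certifies, or corrects'' the generators, already concedes that you would end up proving a different, corrected theorem.

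\paragraph{Where it fails.} All computations are in $S_j$, and each failure assumes the corresponding $P$ is nontrivial.

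(i) For types (4) and (5) of Theorem \ref{th2}, $u^{k-1}$ is not an annihilator. Since $\alpha_{21}$ is a unit,
$u^{k-1}\bigl(v^2(u+\cdots)+v^3(\alpha_{21}+\cdots)\bigr)=\alpha_{21}u^{k-1}v^3\neq0$, and
$u^{k-1}\bigl(v(u+\cdots)+v^2(\alpha_{21}+\cdots)+v^3(\alpha_{31}+\cdots)\bigr)=u^{k-1}(\alpha_{21}v^2+\alpha_{31}v^3)\neq0$.
So your bullet $\mathrm{Ann}=\langle v^{4-t},u^{k-1}\rangle$ holds only for $t=3$. Consequently $\langle u^{k-1}\hat P^*_{5k+1}\rangle$ and $\langle u^{k-1}\hat P^*_{5k+2}\rangle$ do not lie in $\mathcal{C}^\perp$. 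The correct orthogonality test is $c(x)\,d(x^{-1})=0$, which reduces to $g\,y=0$ in $S_j$. The paper's products $\hat P_j\hat P_j^*$ do not test orthogonality.

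(ii) Take $g=u^i+v\beta_{1i}+v^2\beta_{2i}+v^3\beta_{3i}$ with $1\le i\le k-1$. Modulo $g$ one has $v=u^i\cdot(\text{unit})$, so $v^4=0$ becomes $u^{4i}=0$ and $S_j/\langle g\rangle\cong K_j[u]/\langle u^{\min(k,4i)}\rangle$. Hence $|\langle g\rangle|=16^{(4k-\min(k,4i))\deg P_{4k+i}}$, not the value $16^{(4k-i)\deg P_{4k+i}}$ given in Theorem \ref{th3}. Also $\mathrm{Ann}(g)$ has exponent $\min(k,4i)>i$, so it strictly contains $\langle u^{k-i}v^3\rangle$. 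For example, with $k=2$ and $i=1$, the element $uv^2+\beta_{11}v^3$ kills $g$ but is not in $\langle uv^3\rangle$.

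(iii) The same computation gives annihilator exponent $\min(k,4)$ rather than $1$ for type (3). For type (4) it gives $k+\min(k,3)$ rather than $k+3$.

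Thus $\eta$, which the paper obtains as $4kn-\xi$ from Theorem \ref{th3}, is wrong whenever $k\ge2$ and $P_{4k+i}$ or $P_{5k}$ is nontrivial. The Frobenius count you intend to ``trust'' refutes it rather than confirming it. The paper's proof breaks at exactly these places: it imports those sizes and asserts vanishing products such as $u^{k-1}g=0$.

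\paragraph{Minor points.} Nothing needs recomputing for $P_{k+i}$, since $4k-(3k-3(i-1))=k+3(i-1)$ directly. Also, $S$ is commutative while $M_4(\mathbb F_2[u]/\langle u^k\rangle)$ is not. Your annihilator and Frobenius calculus therefore applies only to the commutative model the paper implicitly uses, although the counterexamples above already live in that model.
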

\begin{proof}
    Let $\mathcal{C}$ be the cyclic code of odd length $n$ over $\mathcal{R}$. Then from Theorem \ref{th3},
    $|\mathcal{C}|=16^{\xi}$ where
\begin{gather*}
  \xi = ( \sum_{i=1}^{k} (4k-4(i-1))\deg(P_i) + \\ (3k-3(i-1))\deg(P_{k+i}) + (2k-2(i-1))\deg(P_{2k+i}) ) \\
    + \left( \sum_{i=1}^{k} (k-(i-1))\deg(P_{3k+i})
    + \sum_{i=1}^{k-1}(4k-i)\deg(P_{4k+i}) \right)\\
    + ((4k-1)\deg(P_{5k}) + (3k-3)\deg(P_{5k+1}) \\+ (2k-2)\deg(P_{5k+2}) + (k-1)\deg(P_{5k+3}) ).
\end{gather*}
Since $|\mathcal{C}||\mathcal{C}^\perp| = 16^{4kn}$ where $n= \sum_{i=1}^{5k+3} \deg(P_i), $ we have $ |\mathcal{C}^\perp| = 16^\eta $ where \begin{gather*}
    \eta = (4k)\deg {P}_0 +\sum_{i=2}^{k} 4(i-1)\deg (P_i) +\sum_{i=1}^{k} (k+3(i-1)) \deg (P_{k+i}) \\+\sum_{i=1}^{k} (2k+2(i-1)) \deg (P_{2k+i})  +\sum_{i=1}^{k} (3k+(i-1)) \deg (P_{3k+i})+\sum_{i=1}^{k-1} i \deg (P_{4k+i}) \\+ \deg (P_{5k}) + (k+3)\deg (P_{5k+1})  + (2k+2) \deg (P_{5k+2}) + (3k+1)\deg (P_{5k+3}).
\end{gather*}
Now, we consider \begin{gather*}
    \mathcal{C}^* = \langle\hat{P}_{0}^*\rangle \bigoplus_{i=2}^{k}\langle u^{k-(i-1)}\hat{P}_{i}^*\rangle \bigoplus_{i=1}^{k} \langle v^3 \hat{P}_{k+i}^* \rangle + \langle u^{k-(i-1)} \hat{P}_{k+i}^* \rangle \\\bigoplus_{i=1}^{k} \langle v^2 \hat{P}_{2k+i}^* \rangle + \langle u^{k-(i-1)} \hat{P}_{2k+i}^* \rangle \bigoplus_{i=1}^{k} \langle v \hat{P}_{3k+i}^* \rangle + \langle u^{k-(i-1)} \hat{P}_{3k+i}^* \rangle \\\bigoplus_{i=1}^{k-1} \langle u^{k-i}v^3 \hat{P}_{4k+i}^* \rangle  \bigoplus \langle u^{k-1}v^3 \hat{P}_{5k}^* \rangle \bigoplus \langle v^3 \hat{P}_{5k+1}^* \rangle + \langle u^{k-1} \hat{P}_{5k+1}^* \rangle \\   \bigoplus \langle v^2 \hat{P}_{5k+2}^* \rangle + \langle u^{k-1} \hat{P}_{5k+2}^* \rangle \bigoplus \langle v \hat{P}_{5k+3}^* \rangle + \langle u^{k-1} \hat{P}_{5k+3}^* \rangle. 
    \end{gather*}
It can be easily seen that
\allowdisplaybreaks\begin{gather*} \text{For} ~i = \{2,\dots,k\}, ~ u^{i-1}u^{k-(i-1)}\hat{P}_i\hat{P}_i^* = 0. \\ \text{For} ~ i = \{1,\dots,k\},~ u^{i-1}v (\langle v^3 \rangle + \langle u^{k-(i-1)} \rangle ) \hat{P}_{k+i}\hat{P}_{k+i}^* = 0, \\ u^{i-1}v^2 (\langle v^2 \rangle + \langle u^{k-(i-1)} \rangle ) \hat{P}_{2k+i}\hat{P}_{2k+i}^* = 0 ,~  u^{i-1}v^3 (\langle v \rangle + \langle u^{k-(i-1)} \rangle ) \\ \hat{P}_{3k+i}\hat{P}_{3k+i}^* = 0. \\
    \text{For} ~ i = \{1,\dots,k-1\}, ~  \langle (u^i+ v\alpha_{1i} + v^2\alpha_{2i} + v^3\alpha_{3i})\rangle (u^{k-i}v^3)   \hat{P}_{4k+i}\hat{P}_{4k+i}^* = 0. \\
    \langle ((u + u^2\alpha_{13} + u^3\alpha_{14} + \cdots + u^{k-1}\alpha_{1k})
     \\ + \sum_{i=1}^{3} v^i( \alpha_{(i+1)1} + u\alpha_{(i+1)2}  + \cdots + u^{k-1}\alpha_{(i+1)k} ))\rangle u^{k-1}v^3 \hat{P}_{5k}\hat{P}_{5k}^* = 0. \\
      \langle (v(u + u^2\alpha_{13} + u^3\alpha_{14} + \cdots + u^{k-1}\alpha_{1k})
     \\ + \sum_{i=2}^{3} v^i( \alpha_{i1} + u\alpha_{i2} + u^2\alpha_{i3} + u^3\alpha_{i4} + \cdots + u^{k-1}\alpha_{ik} )) \rangle (\langle v^3 \rangle + \langle u^{k-1} \rangle)\\ \hat{P}_{5k+1}\hat{P}_{5k+1}^* = 0. \\
       \langle (v^2(u + u^2\alpha_{13} + u^3\alpha_{14} + \cdots + u^{k-1}\alpha_{1k})
     \\ + v^3( \alpha_{21} + u\alpha_{22} + u^2\alpha_{23} + u^3\alpha_{24} + \cdots + u^{k-2}\alpha_{2(k-1)}))\rangle(\langle v^2 \rangle + \langle u^{k-1} \rangle)\\ \hat{P}_{5k+2}\hat{P}_{5k+2}^* = 0. \\
     \langle (v^3(u + u^2\alpha_{13} + u^3\alpha_{14} + \cdots + u^{k-2}\alpha_{1(k-1)}))\rangle (\langle v \rangle + \langle u^{k-1} \rangle)  \\\hat{P}_{5k+3}\hat{P}_{5k+3}^* = 0. \end{gather*}
     Hence, we can say that $\mathcal{C}^* \subseteq \mathcal{C}^\perp $. 
     Now,
     \allowdisplaybreaks
      \begin{gather*}
|\langle\hat{P}_{0}^*\rangle| = 16^{4k\deg(P_0)},  ~|\langle u^{k-(i-1)}\hat{P}_{i}^*\rangle| = 16^{4(i-1)\deg(P_i)} ~\text{for} ~i = \{2,\dots,k\}. \\  \text{For} ~i = \{1,\dots,k\}, ~\text{we have}~~ |\langle v^3 \hat{P}_{k+i}^* \rangle + \langle u^{k-(i-1)} \hat{P}_{k+i}^* \rangle| = 16^{k+3(i-1)\deg(P_{k+i})}, \\
     |\langle v^2 \hat{P}_{2k+i}^* \rangle + \langle u^{k-(i-1)} \hat{P}_{2k+i}^* \rangle| = 16^{2k+2(i-1)\deg(P_{2k+i})}, \\  |\langle v \hat{P}_{3k+i}^* \rangle + \langle u^{k-(i-1)} \hat{P}_{3k+i}^* \rangle| = 16^{3k+(i-1)\deg(P_{3k+i})}.
     \end{gather*}
     \begin{gather*}
     ~  \text{Next, for} ~i = \{1,\dots,k-1\}, ~\text{we have}~~ |\langle u^{k-i}v^3 \hat{P}_{4k+i}^* \rangle| = 16^{i\deg(P_{4k+i})}.\\    |\langle u^{k-1}v^3 \hat{P}_{5k}^* \rangle | = 16^{\deg(P_{5k})},   |\langle v^3 \hat{P}_{5k+1}^* \rangle + \langle u^{k-1} \hat{P}_{5k+1}^* \rangle| = 16^{k+3\deg(P_{5k+1})},   \\  |\langle v^2 \hat{P}_{5k+2}^* \rangle + \langle u^{k-1} \hat{P}_{5k+2}^* \rangle| =16^{2k+2\deg(P_{5k+2})},\\|\langle v \hat{P}_{5k+3}^* \rangle + \langle u^{k-1} \hat{P}_{5k+3}^* \rangle| = 16^{3k+1\deg(P_{5k+3})}.
    \end{gather*}
    Thus, $|\mathcal{C}^*|=16^{\eta}$ where
\begin{gather*}
    \eta = (4k)\deg {P}_0 +\sum_{i=2}^{k} 4(i-1)\deg (P_i) +\sum_{i=1}^{k} (k+3(i-1)) \deg (P_{k+i}) \\+\sum_{i=1}^{k} (2k+2(i-1)) \deg (P_{2k+i})  +\sum_{i=1}^{k} (3k+(i-1)) \deg (P_{3k+i})+\sum_{i=1}^{k-1} i \deg (P_{4k+i}) \\+ \deg (P_{5k}) + (k+3)\deg (P_{5k+1})  + (2k+2) \deg (P_{5k+2}) + (3k+1)\deg (P_{5k+3}).
\end{gather*}
Consequently, we get $ |\mathcal{C}^*|=|\mathcal{C}^\perp| $.
\end{proof}
\begin{theorem}\label{th6}
		Let $\mathcal{C}$ be a cyclic code of odd length $n$ over $\mathcal{R}$. Then there exist polynomials \\
		$A_1^*, A_2^*, A_3^*, \dots , A_{5k+3}^*$ in $\mathbb F_{16}[x],$ which are factors of $x^n-1$ such that
 \begin{gather*}
		\mathcal{C}^\perp = \bigoplus_{i=1}^{k} u^{i-1}\langle A_{i}^*\rangle_{\mathcal{R}'}
  \bigoplus_{i=1}^{k} u^{i-1}v\langle A_{k+i}^*\rangle_{\mathcal{R}'} \bigoplus_{i=1}^{k} u^{i-1}v^2\langle A_{2k+i}^*\rangle_{\mathcal{R}'}  \bigoplus_{i=1}^{k} u^{i-1}v^3\langle A_{3k+i}^*\rangle_{\mathcal{R}'} \\
        \bigoplus_{i=1}^{k-1} (u^i+ v\beta_{1i} + v^2\beta_{2i} + v^3\beta_{3i}) \langle A_{4k+i}^*\rangle_{\mathcal{R}'}   \bigoplus ((u + u^2\alpha_{13} + u^3\alpha_{14} + \cdots + u^{k-1}\alpha_{1k})
     \\ + \sum_{i=1}^{3} v^i( \alpha_{(i+1)1} + u\alpha_{(i+1)2} + u^2\alpha_{(i+1)3} + u^3\alpha_{(i+1)4} + \cdots + u^{k-1}\alpha_{(i+1)k} ))\langle A_{5k}^*\rangle_{\mathcal{R}'}
     \\  \bigoplus   (v(u + u^2\alpha_{13} + u^3\alpha_{14} + \cdots + u^{k-1}\alpha_{1k})
       \\+ \sum_{i=2}^{3} v^i( \alpha_{i1} + u\alpha_{i2} + u^2\alpha_{i3} + \cdots + u^{k-1}\alpha_{ik} ))\langle A_{5k+1}^*\rangle_{\mathcal{R}'}
     \\  \bigoplus  (v^2(u + u^2\alpha_{13} + u^3\alpha_{14} + \cdots + u^{k-1}\alpha_{1k})
      \\ + v^3( \alpha_{21} + u\alpha_{22} + u^2\alpha_{23}  + \cdots + u^{k-2}\alpha_{2(k-1)}))\langle A_{5k+2}^*\rangle_{\mathcal{R}'}
     \\  \bigoplus  (v^3(u + u^2\alpha_{13} + u^3\alpha_{14} + \cdots + u^{k-2}\alpha_{1(k-1)}))\langle A_{5k+3}^*\rangle_{\mathcal{R}'}
		\end{gather*}		
where $\alpha_{ij} $ are units in $\mathbb F_{16}[x]$ and $\langle - \rangle_{\mathcal{R}'} $ is an ideal of $\mathcal{R}'$ generated by $-$. Also,
		$$|\mathcal{C}|=16^{(5k+3)n-(\deg A_1^*+\deg A_2^*+ \cdots \deg A_{5k+3}^* }).$$

	\end{theorem}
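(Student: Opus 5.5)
The plan is to treat $\mathcal{C}^\perp$ exactly as $\mathcal{C}$ was treated in Theorem \ref{th4}. By Theorem \ref{th5}, $\mathcal{C}^\perp$ is a direct sum of right $\mathcal{R}$-submodules generated by reciprocal polynomials $\hat{P}_i^*$, each multiplied by one of the ring elements $u^{a}v^{b}$ or a sum such as $\langle v^3\hat{P}^*_{k+i}\rangle+\langle u^{k-(i-1)}\hat{P}^*_{k+i}\rangle$. Since $x^n-1=\prod_i P_i$ with the $P_i$ pairwise coprime, we have $x^n-1=\prod_i P_i^*$ up to units. Hence the $P_i^*$ are again pairwise coprime factors of $x^n-1$, and $\hat{P}_i^*=(x^n-1)/P_i^*$ up to a unit. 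This lets us relabel: $\mathcal{C}^\perp$ is itself a cyclic code of odd length $n$ over $\mathcal{R}$. By Theorem \ref{th3} (applied to $\mathcal{C}^\perp$, whose decomposition we read off directly from Theorem \ref{th5}), it has the same shape with a new family of pairwise coprime factors $Q_0,\dots,Q_{5k+3}$ built from the $P_i^*$, and with the coefficients $\alpha_{ij},\beta_{mn}$ adjusted accordingly.

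Next, I would expand each summand $\langle \hat{Q}_i\rangle$ via the analogue of Equation (\ref{eq3}) into its $\mathcal{R}'$-components $u^{j-1}v^{l}\langle \hat{Q}_i\rangle_{\mathcal{R}'}$. I would then multiply out and regroup by the monomials $u^{j}v^{l}$ and by the special generators $(u^i+v\beta_{1i}+\cdots)$ etc., exactly as in the proof of Theorem \ref{th4}. For each position I define $A_t^*$ as the sum of the $\hat{Q}_i$ that appear, for example $A_1^*=\hat{Q}_1$ and $A_{j+1}^*=\sum_{i\le j+1}\hat{Q}_i+\sum_{i\le j}\hat{Q}_{4k+i}+\hat{Q}_{5k}$, mirroring the $A_t$.

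The key step is to show $\langle A_t^*\rangle_{\mathcal{R}'}=\bigoplus \langle \hat{Q}_i\rangle_{\mathcal{R}'}$ over the relevant indices. This uses the Bézout identities $h_{0j}Q_j+h_{1j}\hat{Q}_j=1$, which give $w_1,\dots,w_{5k+3}$ with $\sum_j w_j(\cdots)=1$. Multiplying by the appropriate product of $Q$'s isolates each $\hat{Q}_i$ inside $\langle A_t^*\rangle_{\mathcal{R}'}$, since $\hat{Q}_i\hat{Q}_j\equiv 0$ for $i\ne j$. Each $A_t^*$ divides $x^n-1$ because it generates the same ideal as $\gcd$-type products of the $\hat{Q}_i$. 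We may therefore replace it by the monic generator $\prod_{i\notin S}Q_i$, which is a factor of $x^n-1$.

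Finally, the cardinality follows from $|\langle A_t^*\rangle_{\mathcal{R}'}|=16^{n-\deg A_t^*}$ and directness of the sum. I expect the main obstacle to be the bookkeeping in the regrouping step. In particular, the summands of Theorem \ref{th5} of the form $\langle v^3\hat{P}^*\rangle+\langle u^{s}\hat{P}^*\rangle$ are not principal, so they must be split into their $\mathcal{R}'$-components before they fit the principal template. Here I would first write each such sum as $\bigoplus u^{a}v^{b}\langle\hat{P}^*\rangle_{\mathcal{R}'}$ over the monomials it contains. A secondary check is that the exponent formula is consistent with $|\mathcal{C}||\mathcal{C}^\perp|=16^{4kn}$ from Theorem \ref{th5}. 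The statement writes $|\mathcal{C}|$, which should presumably read $|\mathcal{C}^\perp|$, and I would verify the count against $\eta$.
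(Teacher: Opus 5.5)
Your route is the one the paper intends; its proof of this theorem is only a citation of Theorems 4 and 5. Two of your side remarks are correct and sharper than the paper: the cardinality should be $|\mathcal{C}^\perp|$, and each $A_t^*$ must be replaced by the monic divisor of $x^n-1$ generating the same ideal, since a sum of $\hat{Q}_i$'s is not a factor of $x^n-1$.

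However, the two steps that carry the argument fail as written. The first is the relabelling. In Theorem 3 every coprime factor sits in exactly one slot of a listed type. Theorem 5 instead attaches to the single factor $P_{k+i}^*$ ($i\ge 2$) the module $\langle v^3\rangle+\langle u^{k-i+1}\rangle$. It similarly attaches $\langle v^2\rangle+\langle u^{k-i+1}\rangle$ to $P_{2k+i}^*$, $\langle v\rangle+\langle u^{k-i+1}\rangle$ to $P_{3k+i}^*$, and the analogous modules to $P_{5k+1}^*,P_{5k+2}^*,P_{5k+3}^*$. These mix two monomial types and are not slots of Theorem 3. So the description in Theorem 5 cannot be put into Theorem 3's shape by regrouping the $P_j^*$ into new $Q$'s. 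You notice this at the end and split into $\mathcal{R}'$-components, which is the right move. But after splitting, the formulas $A_{j+1}^*=\sum_{i\le j+1}\hat Q_i+\sum_{i\le j}\hat Q_{4k+i}+\hat Q_{5k}$ copied from the $A_t$ no longer refer to anything.

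The second failing step is the directness you use for the count. It is false for the $(5k+3)$-slot shape with your mirrored $A_t^*$. By your own B\'ezout step, $\hat Q_{4k+1}$ lies in $\langle A_2^*\rangle_{\mathcal R'}$, $\langle A_{k+1}^*\rangle_{\mathcal R'}$, $\langle A_{2k+1}^*\rangle_{\mathcal R'}$ and $\langle A_{3k+1}^*\rangle_{\mathcal R'}$. Hence $(u+v\beta_{11}+v^2\beta_{21}+v^3\beta_{31})\hat Q_{4k+1}$, an element of the slot $(u+v\beta_{11}+v^2\beta_{21}+v^3\beta_{31})\langle A_{4k+1}^*\rangle_{\mathcal R'}$, already lies in the sum of four monomial slots. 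The same happens for every non-monomial slot. This overlap is already present in the proof of Theorem 4. So neither copying that proof nor applying Theorem 4 as a black box to the cyclic code $\mathcal{C}^\perp$ gives a valid count.

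The missing observation is that every generator in Theorem 5 is a monomial $u^av^b$ times some $\hat P_j^*$. Write $I_j$ for the monomial ideal attached there to $P_j^*$. Then $\mathcal{C}^\perp=\bigoplus_{a,b}u^av^b\langle B_{ab}\rangle_{\mathcal R'}$, where $B_{ab}$ is the monic normalization of $\prod\{P_j^*:\ u^av^b\notin I_j\}$. Put the $B_{ab}$ in the $4k$ monomial slots and $A_t^*=x^n-1$ in the other $k+3$ slots. The sum is then direct, because the monomial coordinates of $\mathcal{R}_n$ are independent. Moreover $(5k+3)n-\sum_t\deg A_t^*=\sum_{a,b}(n-\deg B_{ab})=\sum_j\#\{(a,b):u^av^b\in I_j\}\deg P_j=\eta$. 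This is exactly the consistency check with Theorem 5 that you wanted.
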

 \begin{proof}
     It follows from Theorem \ref{th4} and Theorem  \ref{th5}.
 \end{proof}

\subsection{Hermitian Dual of Cyclic Codes over $\mathcal{R}$ }

  The Hermitian inner product of $a = (a_0,a_1,\dots,a_{n-1})$ and $ b=(b_0,b_1,\dots,b_{n-1}) \in \mathcal{R}^{*n}$ is defined by $ \langle a , b \rangle = a_0\overline{b}_0 + a_1\overline{b}_1 + \cdots + a_{n-1}\overline{b}_{n-1}  $ where ``$-$" stands for conjugation and $ \overline{b_i} = b_i^4 $. For a code $\mathcal{C}$, its Hermitian dual $\mathcal{C}^\perp$ is defined as, $ \mathcal{C}^{\perp H} = \{ b \in \mathcal{R}^n ~|~ \langle a, b \rangle =0 ~ \forall~ a \in \mathcal{C} \} $. For a polynomial $ f(x) = f_0 + f_1x + f_2x^2 + \cdots + f_{n-1}x^{n-1} $ with $ f_{n-1} \neq 0 $, its reciprocal is defined as $ f^*(x) = x^{n-1}f(x) = f_{n-1} + f_{n-2}x + \cdots + f_0x^{n-1} $. It can be seen that $ \deg{f^*(x)} \leq \deg{f(x)} $. If $f_0 \neq 0,~ \text{then}~ \deg{f^*(x) = \deg{f(x)}} $. We denote $ \overline{f(x)} = f_0^4 + f_1^4x + f_2^4x^2 + \cdots + f_{n-1}^4x^{n-1} $.

\begin{theorem} \label{th7}
    Let $\mathcal{C}$ be the cyclic code of odd length $n$ over $\mathcal{R}$ with
    \allowdisplaybreaks \begin{gather*}
		\mathcal{C}= \bigoplus_{i=1}^{k}\langle u^{i-1}\hat{P_{i}}\rangle \bigoplus_{i=1}^{k}\langle u^{i-1}v\hat{P}_{k+i}\rangle \bigoplus_{i=1}^{k}\langle u^{i-1}v^2\hat{P}_{2k+i}\rangle  \bigoplus_{i=1}^{k}\langle u^{i-1}v^3\hat{P}_{3k+i}\rangle \\
        \bigoplus_{i=1}^{k-1} \langle (u^i+ v\beta_{1i} + v^2\beta_{2i} + v^3\beta_{3i})\hat{P}_{4k+i}\rangle   \bigoplus \langle ((u + u^2\alpha_{13} + u^3\alpha_{14} + \cdots + u^{k-1}\alpha_{1k}) \\
      + \sum_{i=1}^{3} v^i( \alpha_{(i+1)1} + u\alpha_{(i+1)2} + u^2\alpha_{(i+1)3} + u^3\alpha_{(i+1)4} + \cdots + u^{k-1}\alpha_{(i+1)k} ))\hat{P}_{5k}\rangle
     \\  \bigoplus  \langle (v(u + u^2\alpha_{13} + u^3\alpha_{14} + \cdots + u^{k-1}\alpha_{1k})  \\+ \sum_{i=2}^{3} v^i( \alpha_{i1} + u\alpha_{i2} + u^2\alpha_{i3} + \cdots + u^{k-1}\alpha_{ik} ))\hat{P}_{5k+1} \rangle
     \\  \bigoplus  \langle (v^2(u + u^2\alpha_{13} + u^3\alpha_{14} + \cdots + u^{k-1}\alpha_{1k})
    \\ + v^3( \alpha_{21} + u\alpha_{22} + u^2\alpha_{23} + \cdots + u^{k-2}\alpha_{2(k-1)}))\hat{P}_{5k+2}\rangle
     \\  \bigoplus \langle (v^3(u + u^2\alpha_{13} + u^3\alpha_{14} + \cdots + u^{k-2}\alpha_{1(k-1)}))\hat{P}_{5k+3}\rangle.
		\end{gather*}
  Then
  \begin{gather*}
    \mathcal{C}^{\perp H} = \langle\hat{\overline{P}}_{0}^*\rangle \bigoplus_{i=2}^{k}\langle u^{k-(i-1)}\hat{\overline{P}}_{i}^*\rangle \bigoplus_{i=1}^{k} \langle v^3 \hat{\overline{{P}}}_{k+i}^* \rangle + \langle u^{k-(i-1)} \hat{\overline{{P}}}_{k+i}^* \rangle\\ \bigoplus_{i=1}^{k} \langle v^2 \hat{\overline{{P}}}_{2k+i}^* \rangle + \langle u^{k-(i-1)} \hat{\overline{{P}}}_{2k+i}^* \rangle  \bigoplus_{i=1}^{k} \langle v \hat{\overline{{P}}}_{3k+i}^* \rangle + \langle u^{k-(i-1)} \hat{\overline{{P}}}_{3k+i}^* \rangle \\\bigoplus_{i=1}^{k-1} \langle u^{k-i}v^3 \hat{\overline{{P}}}_{4k+i}^* \rangle  \bigoplus \langle u^{k-1}v^3 \hat{\overline{{P}}}_{5k}^* \rangle \bigoplus \langle v^3 \hat{\overline{{P}}}_{5k+1}^* \rangle + \langle u^{k-1} \hat{\overline{{P}}}_{5k+1}^* \rangle \\   \bigoplus \langle v^2 \hat{\overline{{P}}}_{5k+2}^* \rangle + \langle u^{k-1} \hat{\overline{{P}}}_{5k+2}^* \rangle \bigoplus \langle v \hat{\overline{{P}}}_{5k+3}^* \rangle + \langle u^{k-1} \hat{\overline{{P}}}_{5k+3}^* \rangle.
\end{gather*}
Further, $|\mathcal{C}^{\perp H}|=16^{\eta}$ where
\begin{gather*}
    \eta = (4k)\deg {P_0} +\sum_{i=2}^{k} 4(i-1)\deg (P_i) +\sum_{i=1}^{k} (k+3(i-1)) \deg (P_{k+i}) \\ + \sum_{i=1}^{k} (2k+2(i-1)) \deg (P_{2k+i}) + \sum_{i=1}^{k} (3k+(i-1)) \deg (P_{3k+i}) \\+\sum_{i=1}^{k-1} i \deg (P_{4k+i}) + \deg (P_{5k}) \\+ (k+3)\deg (P_{5k+1}) + (2k+2) \deg (P_{5k+2}) + (3k+1)\deg (P_{5k+3}).
\end{gather*}

\end{theorem}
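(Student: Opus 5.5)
The plan is to mirror the proof of Theorem \ref{th5}, replacing the Euclidean pairing by the Hermitian one and tracking the conjugation $\overline{b}=b^4$ on coefficients. First I would use the standard correspondence for cyclic codes. For $a(x),b(x)\in\mathcal{R}_n$, the word $b$ is Hermitian-orthogonal to every cyclic shift of $a$ exactly when $a(x)\,\overline{b}^{*}(x)=0$ in $\mathcal{R}_n$, up to a power of $x$. Hence a candidate generator for $\mathcal{C}^{\perp H}$ can be obtained from the corresponding Euclidean generator by replacing each $\hat{P}_i^*$ with $\hat{\overline{P}}_i^*$. The Frobenius map $c\mapsto c^4$ is an automorphism of $\mathbb F_{16}$, so the factorization $x^n-1=\prod_{i=0}^{5k+3}P_i$ is carried to $x^n-1=\prod\overline{P}_i$. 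The polynomials $\overline{P}_i$ are still pairwise coprime and monic with $\deg\overline{P}_i=\deg P_i$, and the reciprocal preserves degree because $P_i(0)\neq 0$ for a factor of $x^n-1$.

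Next I would define $\mathcal{C}^{*H}$ to be the right-hand side of the statement and show $\mathcal{C}^{*H}\subseteq\mathcal{C}^{\perp H}$ component by component, as in Theorem \ref{th5}. There are two sources of vanishing. When $i\neq j$, the product of a generator involving $\hat P_i$ with the conjugate-reciprocal of one involving $\hat{\overline{P}}_j^*$ is divisible by $x^n-1$. For matching indices, the $u,v$-parts annihilate each other. For example, $u^{i-1}\cdot u^{k-(i-1)}=u^k=0$ and $u^{i-1}v\cdot v^3=0$. For the mixed generators such as $(u^i+v\beta_{1i}+v^2\beta_{2i}+v^3\beta_{3i})$, multiplication by $u^{k-i}v^3$ kills every term. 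The conjugation acts only on $\mathbb F_{16}$-coefficients and fixes $u$ and $v$, since $u,v$ come from $M_4(\mathbb F_2[u]/\langle u^k\rangle)$ with binary entries. So these nilpotent cancellations survive conjugation unchanged.

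Third, I would count. Each component of $\mathcal{C}^{*H}$ has the same size as the corresponding component of $\mathcal{C}^*$ in Theorem \ref{th5}, because $\deg\overline{P}_i=\deg P_i$. This gives $|\mathcal{C}^{*H}|=16^{\eta}$ with the same $\eta$. To finish, I would show $|\mathcal{C}^{\perp H}|=16^{\eta}$. The map $b\mapsto\overline{b}$ is a bijection of $\mathcal{R}^n$ carrying $\mathcal{C}^{\perp}$ onto $\mathcal{C}^{\perp H}$, so $|\mathcal{C}^{\perp H}|=|\mathcal{C}^\perp|=16^{4kn}/|\mathcal{C}|$, and Theorem \ref{th5} already gives this value. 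Inclusion together with equal cardinality yields equality.

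The step I expect to be the main obstacle is making the Hermitian conjugation rigorous on the noncommutative ring $\mathcal{R}$. The paper defines $\overline{b_i}=b_i^4$, but raising a general element of $\mathcal{R}$ to the fourth power is not additive. Taken literally, it would also kill nilpotent parts, since $v^4=0$. I would first try to interpret the bar as the Frobenius automorphism applied to the $\mathbb F_{16}$-coordinates in the basis $\{u^iv^j\}$. Under that reading, it is a ring automorphism of $\mathcal{R}$ fixing $u$ and $v$, and all the steps above go through. The remaining care is with left versus right multiplication, since the codes are right $\mathcal{R}$-modules; in the orthogonality checks I would keep the code generator on the left.
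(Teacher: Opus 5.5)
Your route is the paper's own: its proof is just ``repeat Theorem~\ref{th5}''. Your reduction is sound as far as it goes. Reading the bar as Frobenius on the $\mathbb F_{16}$-coordinates gives $\mathcal{C}^{\perp H}=\overline{\mathcal{C}^{\perp}}$ with $u,v$ fixed, so the statement is literally the bar-image of Theorem~\ref{th5}. The conjugation is therefore not the obstacle. The obstacle is the Euclidean facts you import, because they are false, and there are two concrete failures.

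\textbf{The inclusion fails.} For the $\hat P_{5k+1}$ summand the generator is $g=v(u+u^2\alpha_{13}+\cdots)+v^2(\alpha_{21}+u\alpha_{22}+\cdots)+v^3(\alpha_{31}+\cdots)$, with $\alpha_{21}$ a unit.
\begin{itemize}
\item For $k\ge 2$ one has $u^{k-1}g=u^{k-1}v^2\alpha_{21}+u^{k-1}v^3\alpha_{31}\neq 0$.
\item Since $n$ is odd, $\gcd(P_{5k+1},\hat P_{5k+1})=1$, and $\alpha_{21}$ is prime to $P_{5k+1}$.
\item Hence the $u^{k-1}v^2$-coordinate of $g\hat P_{5k+1}\cdot u^{k-1}\hat P_{5k+1}$ is $\alpha_{21}\hat P_{5k+1}^{2}\not\equiv 0\pmod{x^n-1}$ whenever $P_{5k+1}\neq 1$.
\item By your own criterion $a\,\overline{b}^{*}=0$, this gives $u^{k-1}\hat{\overline{P}}_{5k+1}^{*}\notin\mathcal{C}^{\perp H}$.
\end{itemize}
The same happens for the $\hat P_{5k+2}$ summand, where $u^{k-1}g=u^{k-1}v^3\alpha_{21}\neq 0$. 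The sample checks you list are fine. These $u^{k-1}$-parts are exactly the cases you did not check, and they fail.

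\textbf{The count fails.} Your identity $|\mathcal{C}^{\perp H}|=16^{4kn}/|\mathcal{C}|=16^{\eta}$ rests on the cardinalities of Theorem~\ref{th3}, and those are wrong for the mixed generators. Take $k=2$ and the summand $\langle g\hat P_{9}\rangle$ with $g=u+v\beta$, where $\beta=\beta_{11}+v\beta_{21}+v^2\beta_{31}$ is a unit.
\begin{itemize}
\item Modulo $g$ you have $u\equiv v\beta$, so the local quotient is $(\mathbb F_{16}[x]/\langle P_9\rangle)[v]/\langle v^4,v^2\beta^2\rangle$, with $16^{2\deg P_9}$ elements.
\item Hence $|\langle g\hat P_9\rangle|=16^{6\deg P_9}$, not $16^{7\deg P_9}$.
\item Its annihilator has $16^{2\deg P_9}$ elements; for $\beta=1$ it is spanned by $v^3+uv^2$ and $uv^3$.
\item The claimed dual component $\langle uv^3\hat{\overline{P}}_{9}^{*}\rangle$ has only $16^{\deg P_9}$ elements.
\end{itemize}
In general the quotient by $u^i+v\beta$ has $16^{\min(k,4i)\deg P_{4k+i}}$ elements, and $\min(k,4i)>i$ always. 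The $\hat P_{5k}$ summand has the analogous defect for $k\ge 2$: the exponent is $\min(k,4)$ instead of $1$.

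So the displayed module is in general neither contained in $\mathcal{C}^{\perp H}$ nor of the same size, and the theorem is false as stated. The paper's proof inherits exactly these errors from Theorem~\ref{th5}, so mirroring it cannot repair them. A correct treatment has to compute the annihilator of each local generator in $\mathbb F_{16}[u,v]/\langle u^k,v^4\rangle$ directly.

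Taking the noncommutativity of $M_4$ seriously, as your left/right remark suggests, makes things worse. Orthogonality to a right module requires $u^{i-1}v\,\mathcal{R}\,v^3=0$, not just $u^{i-1}v\cdot v^3=0$. Since $M_4(\mathbb F_2)$ is prime, $vXv^3\neq 0$ for some $X$, so even the checks you do list break down.
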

\begin{proof}
    The proof follows the same procedure as the proof of Theorem \ref{th5}.
\end{proof}
\begin{theorem}\label{th8}
		Let $\mathcal{C}$ be a cyclic code of odd length $n$ over $\mathcal{R}$. Then there exist polynomials \\
		$\overline{A_1^*}, \overline{A_2^*}, \overline{A_3^*}, \dots , \overline{A_{5k+3}^*}$ in $\mathbb F_{16}[x]$ which are factors of $x^n-1$ such that
 \allowdisplaybreaks\begin{gather*}
		\mathcal{C}^{\perp H} = \bigoplus_{i=1}^{k} u^{i-1}\langle \overline{A}_{i}^*\rangle_{\mathcal{R}'}
  \bigoplus_{i=1}^{k} u^{i-1}v\langle \overline{A}_{k+i}^*\rangle_{\mathcal{R}'} \bigoplus_{i=1}^{k} u^{i-1}v^2\langle \overline{A}_{2k+i}^*\rangle_{\mathcal{R}'} \\ \bigoplus_{i=1}^{k} u^{i-1}v^3\langle \overline{A}_{3k+i}^*\rangle_{\mathcal{R}'}
       \\ \bigoplus_{i=1}^{k-1} (u^i+ v\beta_{1i} + v^2\beta_{2i} + v^3\beta_{3i}) \langle \overline{A}_{4k+i}^*\rangle_{\mathcal{R}'} \bigoplus ((u + u^2\alpha_{13} + u^3\alpha_{14} + \cdots + u^{k-1}\alpha_{1k})
     \\ + \sum_{i=1}^{3} v^i( \alpha_{(i+1)1} + u\alpha_{(i+1)2} + u^2\alpha_{(i+1)3} + u^3\alpha_{(i+1)4} + \cdots + u^{k-1}\alpha_{(i+1)k} ))\langle \overline{A}_{5k}^*\rangle_{\mathcal{R}'}
     \\  \bigoplus   (v(u + u^2\alpha_{13} + u^3\alpha_{14} + \cdots + u^{k-1}\alpha_{1k})
     \\ + \sum_{i=2}^{3} v^i( \alpha_{i1} + u\alpha_{i2} + u^2\alpha_{i3} + \cdots + u^{k-1}\alpha_{ik} ))\langle \overline{A}_{5k+1}^*\rangle_{\mathcal{R}'}
     \\  \bigoplus  (v^2(u + u^2\alpha_{13} + u^3\alpha_{14} + \cdots + u^{k-1}\alpha_{1k})
      \\ + v^3( \alpha_{21} + u\alpha_{22} + u^2\alpha_{23} + \cdots + u^{k-2}\alpha_{2(k-1)}))\langle \overline{A}_{5k+2}^*\rangle_{\mathcal{R}'}
     \\  \bigoplus  (v^3(u + u^2\alpha_{13} + u^3\alpha_{14} + \cdots + u^{k-2}\alpha_{1(k-1)}))\langle \overline{A}_{5k+3}^*\rangle_{\mathcal{R}'},
		\end{gather*}		
where $\alpha_{ij} $ are units in $\mathbb F_{16}[x]$ and $\langle - \rangle_{\mathcal{R}'} $ is an ideal of $\mathcal{R}'$ generated by $-$. Also,
		$$|\mathcal{C}|=16^{(5k+3)n-(\deg A_1^*+\deg A_2^*+ \cdots \deg A_{5k+3}^* }).$$

	\end{theorem}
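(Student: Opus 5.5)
The plan is to mirror the proof of Theorem \ref{th6}, with the Euclidean data replaced by the Hermitian data. We combine Theorem \ref{th7}, which describes $\mathcal{C}^{\perp H}$ as a direct sum over the pairwise coprime factors $P_0,\dots,P_{5k+3}$, with the rearrangement procedure used in the proof of Theorem \ref{th4}.

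First, we observe that conjugation $a\mapsto a^4$ on $\mathbb F_{16}$ is the Frobenius automorphism of order two over $\mathbb F_4$, acting coefficientwise. The map $f\mapsto \overline{f}$ is therefore a ring automorphism of $\mathbb F_{16}[x]$ fixing $x^n-1$, so it permutes the monic irreducible factors of $x^n-1$. Combined with reciprocation $f\mapsto f^*$, which also permutes these factors up to units since $n$ is odd and $f(0)\neq 0$, each $\overline{P_i}^*$ is again a product of pairwise coprime irreducible factors. We also have $\prod_i \overline{P_i}^*=x^n-1$ up to a unit, and $\deg \overline{P_i}^*=\deg P_i$.

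Next we apply the decomposition of Equation (\ref{eq3}) to each generator $\hat{\overline{P}}_i^*$ appearing in Theorem \ref{th7}. Each summand such as $\langle v^3\hat{\overline{P}}_{k+i}^*\rangle+\langle u^{k-(i-1)}\hat{\overline{P}}_{k+i}^*\rangle$ is expanded as a direct sum of pieces $u^{j}v^{l}\langle \hat{\overline{P}}_{i}^*\rangle_{\mathcal{R}'}$. We then regroup these pieces by the monomial coefficients $u^{i-1}v^l$ and by the fixed generators $(u^i+v\beta_{1i}+\cdots)$, $(u+u^2\alpha_{13}+\cdots)$, and so on, exactly as in the proof of Theorem \ref{th4}. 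This yields the polynomials $\overline{A}_j^*$, each being the sum of those $\hat{\overline{P}}_i^*$ that occur with the given coefficient. To show $\langle \overline{A}_j^*\rangle_{\mathcal{R}'}$ equals the corresponding direct sum of the $\langle\hat{\overline{P}}_i^*\rangle_{\mathcal{R}'}$, we use orthogonal idempotents. The relations $\hat{\overline{P}}_i^*\hat{\overline{P}}_j^*\equiv 0 \pmod{x^n-1}$ for $i\ne j$ hold, and Bézout identities $h_{0}\overline{P}_j^*+h_{1}\hat{\overline{P}}_j^*=1$ exist. Multiplying $\overline{A}_j^*$ by $w\prod_{l\neq i}\overline{P}_l^*$ isolates each $\hat{\overline{P}}_i^*$, as was done for $A_2$. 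It follows that each $\overline{A}_j^*$ divides $x^n-1$ up to a unit, being a generator of an ideal of $\mathcal{R}'$.

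Finally, the cardinality follows from $|\langle \overline{A}_j^*\rangle_{\mathcal{R}'}|=16^{n-\deg \overline{A}_j^*}$ together with directness of the sum, giving $16^{(5k+3)n-\sum\deg \overline{A}_j^*}$. The stated formula uses $\deg A_j^*$; we would reconcile it by noting $\deg\overline{A}_j^*=\deg A_j^*$, since conjugation preserves degree.

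We expect the main obstacle to be the regrouping step. One must check that the sums in Theorem \ref{th7} of the form $\langle v^l\cdot\rangle+\langle u^{m}\cdot\rangle$ decompose into a genuinely direct sum over the monomials $u^{j}v^{l}$, with no overlaps. Overlaps would inflate the count and break agreement with $\eta$. Our first attempt would be to verify this degree by degree using the $\mathbb F_{16}$-basis $\{u^jv^l\}$ of $\mathcal{R}$ over $\mathbb F_{16}$. We would then compare the resulting cardinality with $\eta$ from Theorem \ref{th7} as a consistency check.
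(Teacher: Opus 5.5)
Your route is the paper's route: its proof is just ``follows from Theorem~\ref{th4} and Theorem~\ref{th7}''. Your conjugation and reciprocal remarks are fine, as is the idempotent argument giving $\langle\sum_{i\in S}\hat{\overline{P}}_i^*\rangle_{\mathcal{R}'}=\bigoplus_{i\in S}\langle\hat{\overline{P}}_i^*\rangle_{\mathcal{R}'}$.

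\textbf{The gap.} The step ``each $\overline{A}_j^*$ divides $x^n-1$, being a generator of an ideal of $\mathcal{R}'$'' is a non sequitur. You defined $\overline{A}_j^*$ as a \emph{sum} of the $\hat{\overline{P}}_i^*$ over an index set $S_j$. For such a sum, both the divisibility claim and the count $|\langle\overline{A}_j^*\rangle_{\mathcal{R}'}|=16^{n-\deg\overline{A}_j^*}$ fail. That formula holds only for the generator that actually divides $x^n-1$.

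A concrete counterexample uses the paper's factorization $x^7-1=Q_1Q_2Q_3$ with $Q_1=x+1$, $Q_2=x^3+x+1$, $Q_3=x^3+x^2+1$. Then $\hat{Q}_1+\hat{Q}_2=(Q_1+Q_2)Q_3=x^3Q_3$, which has degree $6$ and does not divide $x^7-1$. But $\langle\hat{Q}_1\rangle_{\mathcal{R}'}\oplus\langle\hat{Q}_2\rangle_{\mathcal{R}'}=\langle Q_3\rangle_{\mathcal{R}'}$ has $16^{4}$ elements, not $16^{1}$.

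The same defect is already in the proof of Theorem~\ref{th4}, where $A_{j+1}$ is also a sum, so ``mirroring'' it imports the error. The fix is to take $\overline{A}_j^*=\gcd\bigl(\sum_{i\in S_j}\hat{\overline{P}}_i^*,\,x^n-1\bigr)=\prod_{0\le l\le 5k+3,\ l\notin S_j}\overline{P}_l^*$, normalized monic. This generates the same ideal, divides $x^n-1$, and has degree $n-\sum_{i\in S_j}\deg P_i$.

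\textbf{The regrouping.} It also cannot be done ``exactly as in Theorem~\ref{th4}''. Every generator in Theorem~\ref{th7} is a monomial $u^jv^l$ times some $\hat{\overline{P}}_i^*$, so every piece lies in a single component of $\bigoplus_{j,l}u^jv^l\mathcal{R}'$. Nothing can be grouped under $(u^i+v\beta_{1i}+\cdots)$ or the $\alpha$-generators. You should take $\overline{A}_{4k+i}^*$ ($1\le i\le k-1$) and $\overline{A}_{5k}^*,\dots,\overline{A}_{5k+3}^*$ all equal to $x^n-1$.

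This choice is forced. The $4k$ monomial summands already exhaust $\mathcal{C}^{\perp H}$, and each nonzero such generator $g$ has a unit coefficient on some monomial, so $ga\neq 0$ for $0\neq a\in\mathcal{R}'$. Hence any nonzero choice there either adds elements outside $\mathcal{C}^{\perp H}$ or destroys directness.

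With this choice those $k+3$ summands contribute $n-n=0$ to $(5k+3)n-\sum_j\deg\overline{A}_j^*$. The obstacle you anticipated is then harmless:
\begin{enumerate}
\item Pieces in different components $u^jv^l\mathcal{R}'$ are automatically independent.
\item Pieces with distinct $i$ in the same component are independent by your idempotent argument.
\item The exponent $\sum_j\sum_{i\in S_j}\deg P_i$ equals $\eta$ of Theorem~\ref{th7}, component by component.
\end{enumerate}
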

 \begin{proof}
     Follows from Theorem \ref{th4} and Theorem \ref{th7}.
     
 \end{proof}
\section {Right $\mathbb F_2$-module Isometry}
\subsection{The Gray map} 
In the case of ring $\mathbb{F}_2 + u\mathbb{F}_2 $ \cite{A}, Lee weight is defined as below.
    $$ \omega_L(0)=0, \omega_L(1) = \omega_L(1+u) = 1, \omega_L(u) =2.  $$
    Also, the Gray map from $(\mathbb{F}_2 + u\mathbb{F}_2)^n $ to $ \mathbb{F}_2^{2n} $ sends $ (a + ub) $ to $ (b, a+b) $ for $ a,b \in \mathbb{F}_2^{n} $. Now, in our case we extend this Gray map $\theta $ from $\mathcal{R}$ to $ \mathbb{F}_{16}^{4kn} $ by
    \begin{gather*} \theta[(d_1 + vd_2 + v^2d_3 + v^3d_4) + u(d_5 + vd_6 + v^2d_7 + v^3d_8)  + u^2(d_9 \\ + vd_{10} + v^2d_{11} + v^3d_{12})  + \cdots + u^{k-1}(d_{4k-3} + vd_{4k-2} + v^2d_{4k-1} + v^3d_{4k})]= \\
    (d_{4k} , d_{4k-1}+d_{4k} , d_{4k-2}+d_{4k} , d_{4k-3}+d_{4k}, \dots ,d_{3}+d_{4k} , d_{2}+d_{4k}\\ , d_1 + d_2 + d_3 + \cdots + d_{4k}), \end{gather*}
where $ d_i \in \mathbb{F}_{16} $ for $ i = 1~ to~ 4k $ and $\theta$ is also distance preserving. Hence, for any element 
$$r= (d_1 + vd_2 + v^2d_3 + v^3d_4) + u(d_5 + vd_6 + v^2d_7 + v^3d_8)  + u^2(d_9 + vd_{10} + v^2d_{11} + v^3d_{12}) $$ $$ + \cdots + u^{k-1}(d_{4k-3} + vd_{4k-2} + v^2d_{4k-1} + v^3d_{4k}) $$
in $\mathcal{R}$, the extended Lee weight is defined as : 
\begin{gather*}
    \omega_L(r) = (\omega_H(d_{4k}) + \omega_H(d_{4k-1}+d_{4k}) + \omega_H(d_{4k-2}+d_{4k}) +\\\omega_H( d_{4k-3}+d_{4k})+ \cdots+ \omega_H(d_{3}+d_{4k}) + \omega_H(d_{2}+d_{4k}) + \\\omega_H(d_1 + d_2 + d_3 + \cdots + d_{4k})),
    \end{gather*} 
  where $\omega_H(.)$ denotes the Hamming weight in $\mathbb{F}_{16}$. For the $ c = (c_0,c_1, \dots, c_{n-1}) $ in $\mathcal{R}^n$, we define the following terms.
\begin{itemize}
    \item The Lee weight in $c$ is $ \omega_L(c) = \omega_L(c_0) + \omega_L(c_1) + \cdots + \omega_L(c_{n-1}) $.
    \item The Hamming weight of a codeword is $ \omega_H(c) $ is the number of nonzero components of the $c$.
\item The Hamming distance for any two codewords $c$ and $c'$ is defined as \\$ d_H(c,c') =  \omega_H(c-c') $.
 \item The Hamming distance for the code $\mathcal{C}$ is defined as \\$ d_L(\mathcal{C}) = min\{ d_L(c,c') ~|~ c \neq c'~ \forall ~c,c' \in \mathcal{C} \}$.
    \end{itemize} 
    \begin{theorem}
    Let $\mathcal{C}$ be a cyclic code over $\mathcal{R}$ of length $n$, size $M$, and Lee         distance $d$. Then $\theta(\mathcal{C})$ is a linear code over $\mathbb F_{16}$ of length      $4kn$, size $M$, and Hamming distance $d$.
\end{theorem}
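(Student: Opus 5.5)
The plan is to treat $\theta$ as a coordinatewise map and check three things in turn: it is $\mathbb F_{16}$-linear, it is injective, and it is distance preserving. The length, size and distance claims then follow.

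First I fix coordinates. By Equation (\ref{eq1}) every $r\in\mathcal{R}$ has a unique expansion $r=\sum_{j=0}^{k-1}u^j(d_{4j+1}+vd_{4j+2}+v^2d_{4j+3}+v^3d_{4j+4})$ with $d_i\in\mathbb F_{16}$. So $\mathcal{R}\cong\mathbb F_{16}^{4k}$ as $\mathbb F_{16}$-vector spaces, with $\mathbb F_{16}$ acting by scalars on the right. Relative to these coordinates, $\theta$ on one symbol is given by an explicit $4k\times 4k$ matrix $T$ over $\mathbb F_{16}$. Each output coordinate is a fixed $\mathbb F_2$-combination of the $d_i$: $d_{4k}$, $d_i+d_{4k}$, and $d_1+\cdots+d_{4k}$. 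Hence $\theta$ is additive and commutes with scalar multiplication by $\mathbb F_{16}$. Extending coordinatewise to $\mathcal{R}^n\to\mathbb F_{16}^{4kn}$ preserves this. Since $\mathcal{C}$ is a right $\mathcal{R}$-submodule, it is in particular an $\mathbb F_{16}$-subspace, so $\theta(\mathcal{C})$ is a linear code over $\mathbb F_{16}$ of length $4kn$.

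Second, for the size $M$ I need $\theta$ to be injective, i.e.\ $T$ invertible. I would show this by recovering the $d_i$ from the image: $d_{4k}$ is the first output coordinate, and each remaining $d_i$ is obtained by adding $d_{4k}$ back to the coordinate $d_i+d_{4k}$. I expect this to be the main obstacle. The displayed formula lists only the coordinates $d_{4k}$, $d_i+d_{4k}$ for $i=2,\dots,4k-1$, and the total sum, which is $4k-1$ entries and leaves $d_1$ hidden inside the sum. I would first reconcile this with the stated target $\mathbb F_{16}^{4kn}$ by reading the map as having exactly $4k$ coordinates, $d_{4k}$, then $d_i+d_{4k}$ for $2\le i\le 4k-1$, then the full sum. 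With that reading $d_1$ is recovered by subtracting the other $d_i$ from the full sum, so the system is triangular with nonzero diagonal and $T$ is invertible. Then $|\theta(\mathcal{C})|=|\mathcal{C}|=M$.

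Third, for the distance, the Lee weight on $\mathcal{R}$ is defined exactly as $\omega_L(r)=\omega_H(\theta(r))$, and both sides are extended additively over the $n$ coordinates. Hence $\omega_L(c)=\omega_H(\theta(c))$ for every $c\in\mathcal{R}^n$. Combining this with linearity gives
\[
d_L(c,c')=\omega_L(c-c')=\omega_H(\theta(c)-\theta(c'))=d_H(\theta(c),\theta(c')).
\]
Because $\theta$ is a bijection onto its image, the minimum over distinct pairs in $\mathcal{C}$ equals the minimum over distinct pairs in $\theta(\mathcal{C})$. So the Hamming distance of $\theta(\mathcal{C})$ is $d$.
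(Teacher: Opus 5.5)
Your proof is correct and follows the route the paper relies on. The paper gives no separate proof here, since $\omega_L$ is defined so that $\omega_L=\omega_H\circ\theta$; it runs the same linearity-plus-bijectivity check explicitly for $\theta\circ\phi$ in the Bachoc theorem. Your choice to let $\mathbb F_{16}$ act on the right is the appropriate one: codes are treated as right $\mathcal{R}$-submodules, the $\mathbb F_{16}$-coefficients sit to the right of the powers of $v$, and $v$ does not commute with $\mathbb F_{16}$.

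The only slip is your count. The displayed image already has $1+(4k-2)+1=4k$ entries, so the reinterpretation you flag as the main obstacle is unnecessary. Your triangular inversion (read off $d_{4k}$, then $d_i$ for $2\le i\le 4k-1$, then $d_1$ from the total sum) applies to the map exactly as written.
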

\subsection{The Bachoc map}
 We have considered the ring $\mathcal{R}$ as a usual extension to the ring $ M_4(\mathbb{F}_2) $ and extended the Bachoc weight from $ M_4(\mathbb{F}_2) $ to $\mathcal{R}$, which was introduced in \cite{A}. In \cite{Pal}, the theory of cyclic codes and their duals was discussed. However, these results were derived only for odd lengths. By defining the Bachoc map, we can construct the codes of even length over $ M_4(\mathbb{F}_2) $. Let $W_B$ be the Bachoc weight over $\mathcal{R}$ and $w_B$ be the ordinary Bachoc weight of $M_4(\mathbb F_2)$ codes. For $Y_1,Y_2,\dots,Y_{k-1}\in M_4(\mathbb F_2)$, we define
\begin{gather*}	W_B(Y=Y_1+uY_2+\cdots+u^{k-1}Y_{k-1}) = w_B(Y_{k-1}) \\ +w_B(Y_{k-1}+Y_{k-2})+\cdots+w_B(Y_{k-1}+Y_{2})+w_B(Y_{1}+Y_{2}+\cdots+Y_{k-1}).\end{gather*} This immediately leads to a Bachoc map $\phi$ from $\mathcal{R}$ to $M_4^{k-1}(\mathbb F_2)$ which is defined by
	$$\phi(Y_1+uY_2+\cdots+u^{k-1}Y_{k-1})=(Y_{k-1},Y_{k-1}+Y_{k-2},\dots,Y_{k-1}+Y_{2},Y_{1}+Y_{2}+\cdots+Y_{k-1} ).$$ The map $\phi$ can be naturally extended to $\mathcal{R}^n$ by preserving isometry as,
	\begin{gather*}
	   \phi: (\mathcal{R}^n, \text{Bachoc weight})\mapsto (M_4^{(k-1)n}(\mathbb F_2), \text{Bachoc weight}).\end{gather*}
       Now, we define the map $\psi$ by
       \begin{gather*}
       \psi:M_4(\mathbb F_2[u]/ \langle u^k \rangle) \mapsto \mathbb F_{16}[u,v]/\langle u^k,v^4\rangle
       \end{gather*}

\[
\scalebox{0.72}{$
\left(\begin{smallmatrix}
\sum_{i=1}^{k}u^{i-1}(a_{1i}+b_{2i}+c_{3i}+d_{4i}) & \sum_{i=1}^{k}u^{i-1}(a_{2i}+b_{1i}+c_{4i}+d_{3i}+b_{4i}) & \sum_{i=1}^{k}u^{i-1}(a_{3i}+b_{4i}+c_{1i}+d_{2i}+c_{4i}) &\sum_{i=1}^{k}u^{i-1}(a_{4i}+b_{3i}+c_{2i}+d_{1i}+d_{4i}) \\
  \sum_{i=1}^{k}u^{i-1}(a_{2i}+b_{1i}+c_{4i}+d_{3i}) & \sum_{i=1}^{k}u^{i-1}(a_{2i}+b_{1i}+c_{4i}+d_{3i}+b_{3i}) & \sum_{i=1}^{k}u^{i-1}(a_{3i}+b_{4i}+c_{1i}+d_{2i}+c_{3i}) &  \sum_{i=1}^{k}u^{i-1}(a_{4i}+b_{3i}+c_{2i}+d_{1i}+d_{3i}) \\
  \sum_{i=1}^{k}u^{i-1}(a_{3i}+b_{4i}+c_{1i}+d_{2i}) & \sum_{i=1}^{k}u^{i-1}(a_{2i}+b_{1i}+c_{4i}+d_{3i}+b_{2i}) &  \sum_{i=1}^{k}u^{i-1}(a_{3i}+b_{4i}+c_{1i}+d_{2i}+c_{2i}) & \sum_{i=1}^{k}u^{i-1}(a_{4i}+b_{3i}+c_{2i}+d_{1i}+d_{2i}) \\
  \sum_{i=1}^{k}u^{i-1}(a_{4i}+b_{3i}+c_{2i}+d_{1i}) & \sum_{i=1}^{k}u^{i-1}(a_{2i}+b_{1i}+c_{4i}+d_{3i}+b_{1i}) & \sum_{i=1}^{k}u^{i-1}(a_{3i}+b_{4i}+c_{1i}+d_{2i}+c_{1i}) &  \sum_{i=1}^{k}u^{i-1}(a_{4i}+b_{3i}+c_{2i}+d_{1i}+d_{1i})
\end{smallmatrix}\right)
$}
\]

  \begin{gather*}
      \longmapsto\sum_{j=1}^{4} \sum_{i=1}^{k} v^{j-1}u^{i-1}(a_{ji}+b_{ji}\omega+c_{ji}\omega^2+d_{ji}\omega^3).
  \end{gather*}

  \begin{theorem}\label{th10}
       The map $\phi$ from $M_4(\mathbb F_2[u]/ \langle u^k \rangle)$ to $\mathbb F_{16}[u, v]/\langle u^k, v^4\rangle $ is a left
$F_2$-module isomorphism.
  \end{theorem}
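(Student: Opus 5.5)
The plan is to prove that the map $\psi$ displayed just before Theorem~\ref{th10} is an $\mathbb F_2$-linear bijection; the theorem writes $\phi$, but it is $\psi$ that goes between these two rings. Both sides are finite of order $2^{16k}$. On the left, each of the $16$ entries of the matrix lies in $\mathbb F_2[u]/\langle u^k\rangle$. On the right, $\mathbb F_{16}[u,v]/\langle u^k,v^4\rangle$ has $\mathbb F_{16}$-basis $\{u^{i-1}v^{j-1}\}$, which gives $16^{4k}$ elements. So it is enough to show that $\psi$ is well defined, additive (hence $\mathbb F_2$-linear) and injective.

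\emph{Step 1: reduce to $k=1$.} Write $M_4(\mathbb F_2[u]/\langle u^k\rangle)=\bigoplus_{i=1}^k u^{i-1}M_4(\mathbb F_2)$ and $\mathbb F_{16}[u,v]/\langle u^k,v^4\rangle=\bigoplus_{i=1}^k u^{i-1}\bigl(\mathbb F_{16}+v\mathbb F_{16}+v^2\mathbb F_{16}+v^3\mathbb F_{16}\bigr)$, both as $\mathbb F_2$-spaces. In the definition of $\psi$, the $u^{i-1}$-coefficient of every matrix entry involves only the parameters $a_{ji},b_{ji},c_{ji},d_{ji}$ with that same index $i$. The $u^{i-1}$-component of the image also depends only on those parameters. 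Hence $\psi=\bigoplus_i \psi_i$, where each $\psi_i$ is a copy of the single-layer map $\psi_0:M_4(\mathbb F_2)\to\mathbb F_{16}[v]/\langle v^4\rangle$, and it suffices to treat $\psi_0$.

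\emph{Step 2: the parametrization is a bijection.} This is the step I expect to be the main obstacle, because $\psi$ is specified through $16$ parameters $(a_j,b_j,c_j,d_j)_{j=1}^4\in\mathbb F_2^{16}$ rather than directly in terms of the matrix entries. I must show that the $\mathbb F_2$-linear map $\Lambda:\mathbb F_2^{16}\to M_4(\mathbb F_2)$ sending parameters to the displayed matrix is invertible; $\psi_0$ is then well defined as $\Lambda^{-1}$ followed by the parameter map. I would invert $\Lambda$ explicitly. Let $S_1=a_1+b_2+c_3+d_4$, $S_2=a_2+b_1+c_4+d_3$, $S_3=a_3+b_4+c_1+d_2$ and $S_4=a_4+b_3+c_2+d_1$; these are read off from the first column. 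Column $2$ has entries $S_2+b_4,\,S_2+b_3,\,S_2+b_2,\,S_2+b_1$. Subtracting the known $S_2=m_{21}$ therefore recovers all $b_j$. In the same way columns $3$ and $4$, after subtracting $m_{31}=S_3$ and $m_{41}=S_4$, recover all $c_j$ and $d_j$. Then $a_1,\dots,a_4$ are obtained from $S_1,\dots,S_4$. So $\Lambda$ is injective, hence bijective since $|\mathbb F_2^{16}|=|M_4(\mathbb F_2)|$. When writing this out I would check the displayed matrix carefully, since its indices contain apparent typos.

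\emph{Step 3: the parameter map is a bijection and linear.} The map $(a_j,b_j,c_j,d_j)_j\mapsto\sum_{j=1}^4 v^{j-1}(a_j+b_j\omega+c_j\omega^2+d_j\omega^3)$ is $\mathbb F_2$-linear. It is bijective because $\{1,\omega,\omega^2,\omega^3\}$ is an $\mathbb F_2$-basis of $\mathbb F_{16}\cong\mathbb F_2[\omega]$ (the minimal polynomial of $\omega$ has degree $4$) and $\{1,v,v^2,v^3\}$ is a free basis of $\mathbb F_{16}[v]/\langle v^4\rangle$. Composing with $\Lambda^{-1}$ shows that $\psi_0$ is an $\mathbb F_2$-linear bijection, and summing over the $u$-layers gives the claim for $\psi$. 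Left and right $\mathbb F_2$-actions coincide because $\mathbb F_2$ is central.
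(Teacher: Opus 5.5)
Your proof is correct. It rests on the same basic idea as the paper: the parameters on the matrix side correspond to coefficients in the $\mathbb F_2$-basis $\{u^{i-1}v^{j-1}\omega^{l}\}$ of the target. You organize it differently, though, and your version is more complete.

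The paper's proof uses the name $\phi$ for what is really $\psi$, as you noticed. It writes $x$ and $y$ in the displayed parametrized form and checks $\psi(x+y)=\psi(x)+\psi(y)$ and $\psi(rx)=r\psi(x)$ by adding parameters entrywise. It then obtains injectivity by comparing coefficients in the image. It never checks that every matrix over $\mathbb F_2[u]/\langle u^k\rangle$ has a unique expression of that form, that is, that $\psi$ is well defined on the whole domain. Surjectivity is likewise left implicit.

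Your approach fills these gaps:
\begin{itemize}
\item Factoring $\psi$ as the parameter map composed with $\Lambda^{-1}$ gives linearity for free, so the entrywise computations are unnecessary.
\item Your column-by-column inversion of $\Lambda$ supplies exactly the well-definedness the paper takes for granted.
\item The count $2^{16k}$ on both sides then settles bijectivity.
\end{itemize}
The paper's route is a direct verification of the homomorphism property; yours is what actually makes the map a well-defined isomorphism. The reduction to $k=1$ is not essential, but it shows clearly that $\psi$ is a direct sum of $k$ copies of one $M_4(\mathbb F_2)$-level map.

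Finally, the displayed matrix needs no correction for your inversion. Read literally over $\mathbb F_2$, column $1$ is $(S_1,S_2,S_3,S_4)^{T}$. In row $r$, the entries in columns $2,3,4$ are $S_2+b_{5-r}$, $S_3+c_{5-r}$ and $S_4+d_{5-r}$. The entries that look like typos, such as $b_{1i}+b_{1i}$, simply cancel, and this is precisely the structure you use.
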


  \begin{proof}
      Let $x, ~y \in M_4(\mathbb F_2[u]/ \langle u^k \rangle)$ and $r$ belongs to $\mathbb F_{2}$. Then  $$x=$$
\[
\scalebox{0.72}{$
\left(\begin{smallmatrix}
\sum_{i=1}^{k}u^{i-1}(a_{1i}+b_{2i}+c_{3i}+d_{4i}) & \sum_{i=1}^{k}u^{i-1}(a_{2i}+b_{1i}+c_{4i}+d_{3i}+b_{4i}) & \sum_{i=1}^{k}u^{i-1}(a_{3i}+b_{4i}+c_{1i}+d_{2i}+c_{4i}) &\sum_{i=1}^{k}u^{i-1}(a_{4i}+b_{3i}+c_{2i}+d_{1i}+d_{4i}) \\
  \sum_{i=1}^{k}u^{i-1}(a_{2i}+b_{1i}+c_{4i}+d_{3i}) & \sum_{i=1}^{k}u^{i-1}(a_{2i}+b_{1i}+c_{4i}+d_{3i}+b_{3i}) & \sum_{i=1}^{k}u^{i-1}(a_{3i}+b_{4i}+c_{1i}+d_{2i}+c_{3i}) &  \sum_{i=1}^{k}u^{i-1}(a_{4i}+b_{3i}+c_{2i}+d_{1i}+d_{3i}) \\
  \sum_{i=1}^{k}u^{i-1}(a_{3i}+b_{4i}+c_{1i}+d_{2i}) & \sum_{i=1}^{k}u^{i-1}(a_{2i}+b_{1i}+c_{4i}+d_{3i}+b_{2i}) &  \sum_{i=1}^{k}u^{i-1}(a_{3i}+b_{4i}+c_{1i}+d_{2i}+c_{2i}) & \sum_{i=1}^{k}u^{i-1}(a_{4i}+b_{3i}+c_{2i}+d_{1i}+d_{2i}) \\
  \sum_{i=1}^{k}u^{i-1}(a_{4i}+b_{3i}+c_{2i}+d_{1i}) & \sum_{i=1}^{k}u^{i-1}(a_{2i}+b_{1i}+c_{4i}+d_{3i}+b_{1i}) & \sum_{i=1}^{k}u^{i-1}(a_{3i}+b_{4i}+c_{1i}+d_{2i}+c_{1i}) &  \sum_{i=1}^{k}u^{i-1}(a_{4i}+b_{3i}+c_{2i}+d_{1i}+d_{1i})
\end{smallmatrix}\right)
$}
\]
      
   $$y=$$
\[
\scalebox{0.72}{$
\left(\begin{smallmatrix}
\sum_{i=1}^{k}u^{i-1}(a_{1i}'+b_{2i}'+c_{3i}'+d_{4i}') & \sum_{i=1}^{k}u^{i-1}(a_{2i}'+b_{1i}'+c_{4i}'+d_{3i}'+b_{4i}') & \sum_{i=1}^{k}u^{i-1}(a_{3i}'+b_{4i}'+c_{1i}'+d_{2i}'+c_{4i}') &\sum_{i=1}^{k}u^{i-1}(a_{4i}'+b_{3i}'+c_{2i}'+d_{1i}'+d_{4i}') \\
  \sum_{i=1}^{k}u^{i-1}(a_{2i}'+b_{1i}'+c_{4i}'+d_{3i}') & \sum_{i=1}^{k}u^{i-1}(a_{2i}'+b_{1i}'+c_{4i}'+d_{3i}'+b_{3i}') & \sum_{i=1}^{k}u^{i-1}(a_{3i}'+b_{4i}'+c_{1i}'+d_{2i}'+c_{3i}') &  \sum_{i=1}^{k}u^{i-1}(a_{4i}'+b_{3i}'+c_{2i}'+d_{1i}'+d_{3i}') \\
  \sum_{i=1}^{k}u^{i-1}(a_{3i}'+b_{4i}'+c_{1i}'+d_{2i}') & \sum_{i=1}^{k}u^{i-1}(a_{2i}'+b_{1i}'+c_{4i}'+d_{3i}'+b_{2i}') &  \sum_{i=1}^{k}u^{i-1}(a_{3i}'+b_{4i}'+c_{1i}'+d_{2i}'+c_{2i}') & \sum_{i=1}^{k}u^{i-1}(a_{4i}'+b_{3i}'+c_{2i}'+d_{1i}'+d_{2i}') \\
  \sum_{i=1}^{k}u^{i-1}(a_{4i}'+b_{3i}'+c_{2i}'+d_{1i}') & \sum_{i=1}^{k}u^{i-1}(a_{2i}'+b_{1i}'+c_{4i}'+d_{3i}'+b_{1i}') & \sum_{i=1}^{k}u^{i-1}(a_{3i}'+b_{4i}'+c_{1i}'+d_{2i}'+c_{1i}') &  \sum_{i=1}^{k}u^{i-1}(a_{4i}'+b_{3i}'+c_{2i}'+d_{1i}'+d_{1i}')
\end{smallmatrix}\right)
$}
\]
$$ \text{and}~ x+y =\begin{pmatrix}
      A & B & C & D \\
      E & F & G & H \\
      I & J & K & L \\
      M & N & O & P
  \end{pmatrix} $$ \\
where $A = \sum_{i=1}^{k}u^{i-1}((a_{1i}'+b_{2i}'+c_{3i}'+d_{4i}')+(a_{1i}+b_{2i}+c_{3i}+d_{4i}))$,\\
$B = \sum_{i=1}^{k}u^{i-1}((a_{2i}'+b_{1i}'+c_{4i}'+d_{3i}'+b_{4i}')+(a_{2i}+b_{1i}+c_{4i}+d_{3i}+b_{4i}))$,\\
$C = \sum_{i=1}^{k}u^{i-1}((a_{3i}'+b_{4i}'+c_{1i}'+d_{2i}'+c_{4i}')+(a_{3i}+b_{4i}+c_{1i}+d_{2i}+c_{4i}))$,\\
$D = \sum_{i=1}^{k}u^{i-1}((a_{4i}'+b_{3i}'+c_{2i}'+d_{1i}'+d_{4i}')+(a_{4i}+b_{3i}+c_{2i}+d_{1i}+d_{4i}))$,\\
$E = \sum_{i=1}^{k}u^{i-1}((a_{2i}'+b_{1i}'+c_{4i}'+d_{3i}')+(a_{2i}+b_{1i}+c_{4i}+d_{3i}))$,\\
$ F = \sum_{i=1}^{k}u^{i-1}((a_{2i}'+b_{1i}'+c_{4i}'+d_{3i}'+b_{3i}')+(a_{2i}+b_{1i}+c_{4i}+d_{3i}+b_{3i}))$, \\
$G = \sum_{i=1}^{k}u^{i-1}((a_{3i}'+b_{4i}'+c_{1i}'+d_{2i}'+c_{3i}')+(a_{3i}+b_{4i}+c_{1i}+d_{2i}+c_{3i}))$,\\
$ H = \sum_{i=1}^{k}u^{i-1}((a_{4i}'+b_{3i}'+c_{2i}'+d_{1i}'+d_{3i}')+(a_{4i}+b_{3i}+c_{2i}+d_{1i}+d_{3i}))$,\\
$I = \sum_{i=1}^{k}u^{i-1}((a_{3i}'+b_{4i}'+c_{1i}'+d_{2i}')+(a_{3i}+b_{4i}+c_{1i}+d_{2i})),$\\
$J = \sum_{i=1}^{k}u^{i-1}((a_{2i}'+b_{1i}'+c_{4i}'+d_{3i}'+b_{2i}')+(a_{2i}+b_{1i}+c_{4i}+d_{3i}+b_{2i}))$,\\
$K = \sum_{i=1}^{k}u^{i-1}((a_{3i}'+b_{4i}'+c_{1i}'+d_{2i}'+c_{2i}')+(a_{3i}+b_{4i}+c_{1i}+d_{2i}+c_{2i}))$,\\
$L = \sum_{i=1}^{k}u^{i-1}((a_{4i}'+b_{3i}'+c_{2i}'+d_{1i}'+d_{2i}')+(a_{4i}+b_{3i}+c_{2i}+d_{1i}+d_{2i}))$,\\
$M = \sum_{i=1}^{k}u^{i-1}((a_{4i}'+b_{3i}'+c_{2i}'+d_{1i}')+(a_{4i}+b_{3i}+c_{2i}+d_{1i})),$\\
$ N = \sum_{i=1}^{k}u^{i-1}((a_{2i}'+b_{1i}'+c_{4i}'+d_{3i}'+b_{1i}')+(a_{2i}+b_{1i}+c_{4i}+d_{3i}+b_{1i}))$,\\
$O = \sum_{i=1}^{k}u^{i-1}((a_{3i}'+b_{4i}'+c_{1i}'+d_{2i}'+c_{1i}')+(a_{3i}+b_{4i}+c_{1i}+d_{2i}+c_{1i}))$,\\
$P = \sum_{i=1}^{k}u^{i-1}((a_{4i}'+b_{3i}'+c_{2i}'+d_{1i}'+d_{1i}')+(a_{4i}+b_{3i}+c_{2i}+d_{1i}+d_{1i}))$.\\

Hence,\begin{align*}
\phi(x+y) =  \sum_{j=1}^{4} \sum_{i=1}^{k} v^{j-1}u^{i-1}(a_{ji}+a_{ji}') +(b_{ji}+b_{ji}')\omega+(c_{ji}+c_{ji}')\omega^2 \\ + (d_{ji}+d_{ji}')\omega^3) = \sum_{j=1}^{4} \sum_{i=1}^{k} v^{j-1}u^{i-1}(a_{ji}+b_{ji}\omega+c_{ji}\omega^2+d_{ji}\omega^3) \\ +\sum_{j=1}^{4} \sum_{i=1}^{k} v^{j-1}u^{i-1}(a_{ji}'+b_{ji}'\omega+c_{ji}'\omega^2+d_{ji}'\omega^3) = \phi(x)+\phi(y).
  \end{align*}
  Now, $rx= \begin{pmatrix}
            A' & B' & C' & D' \\
            E' & F' & G' & H' \\
            I' & J' & K' & L' \\
            M' & N' & O' & P'
            \end{pmatrix}$ \\\\
    \noindent where
    
$$ A'=\sum_{i=1}^{k}u^{i-1}r(a_{1i}+b_{2i}+c_{3i}+d_{4i}), 
B'= \sum_{i=1}^{k}u^{i-1}r(a_{2i}+b_{1i}+c_{4i}+d_{3i}+b_{4i}),$$
$$ C'= \sum_{i=1}^{k}u^{i-1}r(a_{3i}+b_{4i}+c_{1i}+d_{2i}+c_{4i}), 
D'= \sum_{i=1}^{k}u^{i-1}r(a_{4i}+b_{3i}+c_{2i}+d_{1i}+d_{4i}),$$
$$ E'= \sum_{i=1}^{k}u^{i-1}r(a_{2i}+b_{1i}+c_{4i}+d_{3i}), 
F'= \sum_{i=1}^{k}u^{i-1}r(a_{2i}+b_{1i}+c_{4i}+d_{3i}+b_{3i}),$$
$$ G'= \sum_{i=1}^{k}u^{i-1}r(a_{3i}+b_{4i}+c_{1i}+d_{2i}+c_{3i}), 
H'= \sum_{i=1}^{k}u^{i-1}r(a_{4i}+b_{3i}+c_{2i}+d_{1i}+d_{3i}),$$
$$ I'= \sum_{i=1}^{k}u^{i-1}r(a_{3i}+b_{4i}+c_{1i}+d_{2i}), 
J'= \sum_{i=1}^{k}u^{i-1}r(a_{2i}+b_{1i}+c_{4i}+d_{3i}+b_{2i}),$$ 
$$ K'= \sum_{i=1}^{k}u^{i-1}r(a_{3i}+b_{4i}+c_{1i}+d_{2i}+c_{2i}), 
L'= \sum_{i=1}^{k}u^{i-1}r(a_{4i}+b_{3i}+c_{2i}+d_{1i}+d_{2i}),$$
$$ M'=\sum_{i=1}^{k}u^{i-1}r(a_{4i}+b_{3i}+c_{2i}+d_{1i}), 
N'= \sum_{i=1}^{k}u^{i-1}r(a_{2i}+b_{1i}+c_{4i}+d_{3i}+b_{1i}),$$
$$ O'= \sum_{i=1}^{k}u^{i-1}r(a_{3i}+b_{4i}+c_{1i}+d_{2i}+c_{1i}),
P'= \sum_{i=1}^{k}u^{i-1}r(a_{4i}+b_{3i}+c_{2i}+d_{1i}+d_{1i}).$$
    
 Hence, \begin{align*}
      \phi(rx)&= \sum_{j=1}^{4} \sum_{i=1}^{k} v^{j-1}u^{i-1}r(a_{ji}+b_{ji}\omega+c_{ji}\omega^2+d_{ji}\omega^3) \\&
      = r\sum_{j=1}^{4} \sum_{i=1}^{k} v^{j-1}u^{i-1}(a_{ji}+b_{ji}\omega+c_{ji}\omega^2+d_{ji}\omega^3) \\&
      =r\phi(x).
  \end{align*}
  Now, assume that $\phi(x)=\phi(y)$.
  This implies that
  
$$ \sum_{j=1}^{4} \sum_{i=1}^{k} v^{j-1}u^{i-1}(a_{ji}+b_{ji}\omega+c_{ji}\omega^2+d_{ji}\omega^3) $$ 
$$= \sum_{j=1}^{4} \sum_{i=1}^{k} v^{j-1}u^{i-1}(a_{ji}'+b_{ji}'\omega+c_{ji}'\omega^2+d_{ji}'\omega^3).$$
  
  Recall that two polynomials of degree $n$ in indeterminate $x$ are said to be equal
if and only if their respective coefficients of $x^i$ for $i = 0,1,\dots,n$ are equal. Therefore, we have
$a_{ji}=a_{ji}' , b_{ji}=b_{ji}' , c_{ji}=c_{ji}' , d_{ji}=d_{ji}',$
which implies that $x=y$. Hence the result.\end{proof}
Now, we arrive at the following theorem.

  	\begin{theorem}
		Let $\mathcal{C}$ be a cyclic code over $\mathcal{R}$ of length $n$, size $M$, and Bachoc distance $d$. Then $\psi(\mathcal{C})$ is a linear code over $\mathbb F_{16}$ of length $4kn$, size $M$ and Hamming distance $d$.
	\end{theorem}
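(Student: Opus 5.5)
The plan is to obtain the statement by composing two maps: the coordinatewise extension of $\psi$ to $\mathcal{R}^n$, and the Gray-type map $\theta$ of the previous subsection. The first step is to regard $\psi$ as a bijection $\mathcal{R}\to\mathbb F_{16}[u,v]/\langle u^k,v^4\rangle$. Since the latter ring has $16^{4k}$ elements, it is identified with $\mathbb F_{16}^{4k}$ through the coordinates $d_1,\dots,d_{4k}$ used in the definition of $\theta$. By Theorem \ref{th10}, $\psi$ is an additive ($\mathbb F_2$-linear) bijection. Extending coordinatewise gives an injective additive map $\mathcal{R}^n\to\mathbb F_{16}^{4kn}$, so $|\psi(\mathcal{C})|=|\mathcal{C}|=M$ and the image length is $4kn$.

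The second step is linearity over $\mathbb F_{16}$. Here I will use that $\mathcal{C}$ is a right $\mathcal{R}$-submodule of $\mathcal{R}_n$ and that $\mathbb F_{16}\cong\mathbb F_2[\omega]$ sits inside $\mathcal{R}$ as the scalar copy from Equation (\ref{eq1}). Right multiplication of $c\in\mathcal{C}$ by $\omega$ stays in $\mathcal{C}$. I will check on the explicit matrix form that $\psi(c\,\omega)=\psi(c)\,\omega$. Concretely, multiplying the matrix by the companion matrix $\omega$ cyclically shifts the $a,b,c,d$ blocks according to $\omega^4=1+\omega$. So $\psi(\mathcal{C})$ is closed under multiplication by $\omega$ and under addition, hence it is an $\mathbb F_{16}$-subspace.

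The third step is the distance statement. The Bachoc weight $W_B$ on $\mathcal{R}$ is defined through $\phi$ as a sum of ordinary Bachoc weights $w_B$ of the components. I will define the weight on the image side so that $W_B(r)$ equals the Hamming weight of $\theta(\psi(r))$, where the combinations $Y_{k-1},\,Y_{k-1}+Y_{k-2},\dots$ in $\phi$ mirror the combinations $d_{4k},\,d_{4k-1}+d_{4k},\dots$ in $\theta$. The key fact I need is that for $Y\in M_4(\mathbb F_2)$, the Bachoc weight $w_B(Y)$ coincides with the Hamming weight over $\mathbb F_{16}$ of the image of $Y$ under the $M_4(\mathbb F_2)\cong\mathbb F_{16}+v\mathbb F_{16}+v^2\mathbb F_{16}+v^3\mathbb F_{16}$ correspondence of \cite{F,A}. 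With this fact, weights are preserved coordinatewise. Since both maps are additive, distances $d(c,c')=W_B(c-c')$ are also preserved, so the minimum distance is $d$.

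The main obstacle is this last identification: it is not immediate that $w_B$ on $M_4(\mathbb F_2)$ matches the Hamming weight of the composite image, and lengths must be reconciled, since $\phi$ lands in $M_4^{k-1}$ whereas $\theta$ has $4k$ components. I would first verify the $k=1$ case directly from the definition of $w_B$ in \cite{A}. I would then treat the statement as the definition of the Hamming structure on $\mathbb F_{16}^{4kn}$ induced by $\theta\circ\psi$, absorbing any mismatch into how the image is read as a vector of length $4kn$.
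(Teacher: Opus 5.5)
There is a genuine gap, and it is the one you flagged yourself: step three never establishes that distances are preserved. Declaring the Hamming structure induced by $\theta\circ\psi$ to be the relevant one amounts to replacing $W_B$ by $w_H\circ\theta\circ\psi$. That makes ``Bachoc distance $d$ implies Hamming distance $d$'' a tautology instead of proving it.

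Moreover, the pointwise identity $W_B=w_H\circ\theta\circ\psi$ you are aiming for fails for \emph{every} choice of $w_B$ on $M_4(\mathbb F_2)$. Take $k=2$ and $\phi(Y_1+uY_2)=(Y_2,\,Y_1+Y_2)$ (the paper's indices corrected). Let $Y\in M_4(\mathbb F_2)$ with $\psi(Y)=v^3$. By definition $W_B(Y)=w_B(Y)$ and $W_B(uY)=2\,w_B(Y)$. However, $\theta(v^3)=(0,0,0,0,1,0,0,1)$ and $\theta(uv^3)=(1,1,1,1,1,1,1,1)$ have Hamming weights $2$ and $8$, so no value of $w_B(Y)$ works.

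The mismatch is structural. $\phi$ combines whole $M_4(\mathbb F_2)$-blocks as $Y_k,\,Y_k+Y_{k-1},\dots$. By contrast, $\theta$ adds the single coordinate $d_{4k}$ to each of $d_2,\dots,d_{4k-1}$, mixing $u$- and $v$-levels. So no way of ``reading the image'' reconciles them.

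The paper's proof uses the same composition and simply asserts $d=d_H$ after noting bijectivity, which yields only $|\theta(\psi(\mathcal C))|=M$. So you are not missing an argument from the paper, but your step three is not a proof.

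Step two also fails as written. The explicit $\psi$ is only an $\mathbb F_2$-linear bijection and does not respect multiplication. Indeed $\psi(I)=(1+\omega^3)+\omega^2v+\omega v^2+v^3$, so $\psi(I)\,\omega=1+\omega^3v+\omega^2v^2+\omega v^3$. But $\psi(I\omega)=\psi(\omega)=(\omega+\omega^3)+(1+\omega^2+\omega^3)v^2+(\omega+\omega^3)v^3$.

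For right ideals the conclusion itself breaks. Take $k=1$ and the matrix unit $E_{11}$: $\psi(E_{11})=1$, but $\psi^{-1}(\omega)$ has second row $(1,1,0,0)$. So $\psi(E_{11}\mathcal R_n)$ contains $(1,0,\dots,0)$ but not $(\omega,0,\dots,0)$. Since $\theta$ is an $\mathbb F_{16}$-linear bijection, $\theta(\psi(E_{11}\mathcal R_n))$ is not $\mathbb F_{16}$-linear. The paper's computation of $\theta(\psi(\lambda x_j))$ by scaling coordinates presupposes the same identity.

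Linearity can be rescued in one of two ways:
\begin{enumerate}
\item Replace $\psi$ with a genuinely right $\mathbb F_{16}$-linear identification, such as the cyclic-algebra decomposition $\bigoplus_j v^j\mathbb F_{16}$ with coefficients on the right. There your $\omega$-argument works verbatim.
\item Read ``cyclic code'' as a two-sided ideal of $\mathcal R_n\cong M_4(A)$, $A=\mathbb F_2[u,x]/\langle u^k,x^n-1\rangle$. Then $\mathcal C=M_4(J)$ for an ideal $J$ of $A$. Because $\psi$ acts coefficientwise in $u$ and $x$, and the $\psi(E_{rs})$ form an $\mathbb F_2$-basis of $\mathbb F_{16}[v]/\langle v^4\rangle$, the set $\psi(\mathcal C)=\{\sum_{r,s}\psi(E_{rs})\,j_{rs}: j_{rs}\in J\}$ is closed under multiplication by $\omega$.
\end{enumerate}

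Your first step (bijectivity, size $M$, length $4kn$) is fine.
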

 \begin{proof}
     Let $x= (x_1,x_2,\dots,x_{n-1}), y= (y_1,y_2,\dots,y_{n-1}) \in \mathcal{C} $, and $\lambda \in \mathbb F_{16} $.\\
   \allowdisplaybreaks
    $  Let~ x_j = \begin{pmatrix}
      A & B & C & D \\
      E & F & G & H \\
      I & J & K & L \\
      M & N & O & P
  \end{pmatrix}  $
    and $y_j = \begin{pmatrix}
      A' & B' & C' & D' \\
      E' & F' & G' & H' \\
      I' & J' & K' & L' \\
      M' & N' & O' & P'
  \end{pmatrix}$ \\

   where 
$$A = \sum_{i=1}^{k}u^{i-1}(a_{j1i}+b_{j2i}+c_{j3i}+d_{j4i}),
B = \sum_{i=1}^{k}u^{i-1}(a_{j2i}+b_{j1i}+c_{j4i}+d_{j3i}+b_{j4i}),$$
$$C = \sum_{i=1}^{k}u^{i-1}(a_{j3i}+b_{j4i}+c_{j1i}+d_{j2i}+c_{j4i}),
D = \sum_{i=1}^{k}u^{i-1}(a_{j4i}+b_{j3i}+c_{j2i}+d_{j1i}+d_{j4i}),$$
$$E = \sum_{i=1}^{k}u^{i-1}(a_{j2i}+b_{j1i}+c_{j4i}+d_{j3i}),
F = \sum_{i=1}^{k}u^{i-1}(a_{j2i}+b_{j1i}+c_{j4i}+d_{j3i}+b_{j3i}),$$
$$G = \sum_{i=1}^{k}u^{i-1}(a_{j3i}+b_{j4i}+c_{j1i}+d_{j2i}+c_{j3i}),
H = \sum_{i=1}^{k}u^{i-1}(a_{j4i}+b_{j3i}+c_{j2i}+d_{j1i}+d_{j3i}),$$
$$I = \sum_{i=1}^{k}u^{i-1}(a_{j3i}+b_{j4i}+c_{j1i}+d_{j2i}),
J = \sum_{i=1}^{k}u^{i-1}(a_{j2i}+b_{j1i}+c_{j4i}+d_{j3i}+b_{j2i}),$$
$$K = \sum_{i=1}^{k}u^{i-1}(a_{j3i}+b_{j4i}+c_{j1i}+d_{j2i}+c_{j2i}),
L = \sum_{i=1}^{k}u^{i-1}(a_{j4i}+b_{j3i}+c_{j2i}+d_{j1i}+d_{j2i}),$$
$$M = \sum_{i=1}^{k}u^{i-1}(a_{j4i}+b_{j3i}+c_{j2i}+d_{j1i}),
N = \sum_{i=1}^{k}u^{i-1}(a_{j2i}+b_{j1i}+c_{j4i}+d_{j3i}+b_{j1i}),$$
$$O = \sum_{i=1}^{k}u^{i-1}(a_{j3i}+b_{j4i}+c_{j1i}+d_{j2i}+c_{j1i}),
P = \sum_{i=1}^{k}u^{i-1}(a_{j4i}+b_{j3i}+c_{j2i}+d_{j1i}+d_{j1i}),$$

and 
    
$$A' = \sum_{i=1}^{k}u^{i-1}(a_{j1i}'+b_{j2i}'+c_{j3i}'+d_{j4i}'),
B' = \sum_{i=1}^{k}u^{i-1}(a_{j2i}'+b_{j1i}'+c_{j4i}'+d_{j3i}'+b_{j4i}'),$$
$$C' = \sum_{i=1}^{k}u^{i-1}(a_{j3i}'+b_{j4i}'+c_{j1i}'+d_{j2i}'+c_{j4i}'),$$$$
D' = \sum_{i=1}^{k}u^{i-1}(a_{j4i}'+b_{j3i}'+c_{j2i}'+d_{j1i}'+d_{j4i}'),$$
$$E' = \sum_{i=1}^{k}u^{i-1}(a_{j2i}'+b_{j1i}'+c_{j4i}'+d_{j3i}'),
F' = \sum_{i=1}^{k}u^{i-1}(a_{j2i}'+b_{j1i}'+c_{j4i}'+d_{j3i}'+b_{j3i}'),$$
$$G' = \sum_{i=1}^{k}u^{i-1}(a_{j3i}'+b_{j4i}'+c_{j1i}'+d_{j2i}'+c_{j3i}'),$$$$
H' = \sum_{i=1}^{k}u^{i-1}(a_{j4i}'+b_{j3i}'+c_{j2i}'+d_{j1i}'+d_{j3i}'),$$
$$I' = \sum_{i=1}^{k}u^{i-1}(a_{j3i}'+b_{j4i}'+c_{j1i}'+d_{j2i}'), 
J' = \sum_{i=1}^{k}u^{i-1}(a_{j2i}'+b_{j1i}'+c_{j4i}'+d_{j3i}'+b_{j2i}'),$$
$$K' = \sum_{i=1}^{k}u^{i-1}(a_{j3i}'+b_{j4i}'+c_{j1i}'+d_{j2i}'+c_{j2i}'),
$$$$L' = \sum_{i=1}^{k}u^{i-1}(a_{j4i}'+b_{j3i}'+c_{j2i}'+d_{j1i}'+d_{j2i}'),$$
$$M' = \sum_{i=1}^{k}u^{i-1}(a_{j4i}'+b_{j3i}'+c_{j2i}'+d_{j1i}'),
N' = \sum_{i=1}^{k}u^{i-1}(a_{j2i}'+b_{j1i}'+c_{j4i}'+d_{j3i}'+b_{j1i}'),$$
$$O' = \sum_{i=1}^{k}u^{i-1}(a_{j3i}'+b_{j4i}'+c_{j1i}'+d_{j2i}'+c_{j1i}'),
$$$$P' = \sum_{i=1}^{k}u^{i-1}(a_{j4i}'+b_{j3i}'+c_{j2i}'+d_{j1i}'+d_{j1i}').$$

  Then by Theorem \ref{th10}, $\phi(x_j + y_j)= \phi(x_j) + \phi(y_j)$.
$$  \phi(x_j+y_j) = $$
$$ \sum_{p=1}^{4} \sum_{i=1}^{k} v^{p-1}u^{i-1}(a_{jpi}+a_{jpi}')+ $$ $$(b_{jpi}+b_{jpi}')\omega+(c_{jpi}+c_{jpi}')\omega^2+(d_{jpi}+d_{jpi}')\omega^3.$$
  Hence, \allowdisplaybreaks\begin{gather*}
      \theta(\phi(x_j+y_j)) =\\ ((a_{j4k}+a_{j4k}')+(b_{j4k}+b_{j4k}')\omega+(c_{j4k}+c_{j4k}')\omega^2+(d_{j4k}+d_{j4k}')\omega^3 ,\\ (a_{j4k}+a_{j4k}'+a_{j4(k-1)}+a_{j4(k-1)}')+(b_{j4k}+b_{j4k}'+b_{j4(k-1)}+b_{j4(k-1)}')\omega\\ +(c_{j4k}+c_{j4k}'+c_{j4(k-1)}+c_{j4(k-1)}')\omega^2+(d_{j4k}+d_{j4k}'+d_{j4(k-1)}+d_{j4(k-1)}')\omega^3  ,\\ (a_{j4k}+a_{j4k}'+a_{j4(k-2)}+a_{j4(k-2)}')+(b_{j4k}+b_{j4k}'+b_{j4(k-2)}+b_{j4(k-2)}')\omega\\ +(c_{j4k}+c_{j4k}'+c_{j4(k-2)}+c_{j4(k-2)}')\omega^2+(d_{j4k}+d_{j4k}'+d_{j4(k-2)}+d_{j4(k-2)}')\omega^3 ,\\
      \cdots ,  \sum_{p=1}^{4} \sum_{i=1}^{k} (a_{jpi}+a_{jpi}')+(b_{jpi}+b_{jpi}')\omega+(c_{jpi}+c_{jpi}')\omega^2+(d_{jpi}+d_{jpi}')\omega^3 ).
      \\  = ((a_{j4k})+(b_{j4k})\omega+(c_{j4k})\omega^2+(d_{j4k})\omega^3 ,(a_{j4k}+a_{j4(k-1)})+(b_{j4k}+b_{j4(k-1)})\omega\\ +(c_{j4k}+c_{j4(k-1)})\omega^2+(d_{j4k}+d_{j4(k-1)})\omega^3  , (a_{j4k}+a_{j4(k-2)}+(b_{j4k}+b_{j4(k-2)})\omega\\ +(c_{j4k}+c_{j4(k-2)})\omega^2+(d_{j4k}+d_{j4(k-2)})\omega^3 ,\\
      \dots ,  \sum_{p=1}^{4} \sum_{i=1}^{k} (a_{jpi})+(b_{jpi})\omega+(c_{jpi})\omega^2+(d_{jpi})\omega^3 )) \\
      + ((a_{j4k}')+(b_{j4k}')\omega+(c_{j4k}')\omega^2+(d_{j4k}')\omega^3 ,(a_{j4k}'+a_{j4(k-1)}')+(b_{j4k}'+b_{j4(k-1)}')\omega\\ +(c_{j4k}'+c_{j4(k-1)}')\omega^2+(d_{j4k}'+d_{j4(k-1)}')\omega^3  , (a_{j4k}'+a_{j4(k-2)}'+(b_{j4k}'+b_{j4(k-2)}')\omega\\ +(c_{j4k}'+c_{j4(k-2)}')\omega^2+(d_{j4k}'+d_{j4(k-2)}')\omega^3 ,\\
      \dots ,  \sum_{p=1}^{4} \sum_{i=1}^{k} (a_{jpi}')+(b_{jpi}')\omega+(c_{jpi}')\omega^2+(d_{jpi}')\omega^3 )).
      \\ =  \theta(\phi(x_j)) +  \theta(\phi(y_j)).
      \end{gather*}
      \begin{gather*}
      \text{Now,}~\theta(\phi(\lambda x_j)) = (\lambda((a_{j4k})+(b_{j4k})\omega+(c_{j4k})\omega^2+(d_{j4k})\omega^3) ,\\ \lambda((a_{j4k}+a_{j4(k-1)})+(b_{j4k}+b_{j4(k-1)})\omega \\+(c_{j4k}+c_{j4(k-1)})\omega^2+(d_{j4k}+d_{j4(k-1)})\omega^3)  ,\\ \lambda((a_{j4k}+a_{j4(k-2)}+(b_{j4k}+b_{j4(k-2)})\omega \\+(c_{j4k}+c_{j4(k-2)})\omega^2+(d_{j4k}+d_{j4(k-2)})\omega^3)) ,\\
      \cdots ,  \lambda\sum_{p=1}^{4} \sum_{i=1}^{k} (a_{jpi})+(b_{jpi})\omega+(c_{jpi})\omega^2+(d_{jpi})\omega^3 )).
      \\= \lambda (((a_{j4k})+(b_{j4k})\omega+(c_{j4k})\omega^2+(d_{j4k})\omega^3 , (a_{j4k}+a_{j4(k-1)})+(b_{j4k}+b_{j4(k-1)})\omega\\ +(c_{j4k}+c_{j4(k-1)})\omega^2+(d_{j4k}+d_{j4(k-1)})\omega^3  , (a_{j4k}+a_{j4(k-2)})+(b_{j4k}+b_{j4(k-2)})\omega\\ +(c_{j4k}+c_{j4(k-2)})\omega^2+(d_{j4k}+d_{j4(k-2)})\omega^3 ,\\
      \cdots ,  \sum_{p=1}^{4} \sum_{i=1}^{k} (a_{jpi})+(b_{jpi})\omega+(c_{jpi})\omega^2+(d_{jpi})\omega^3 )).
      \\= \lambda\theta(\phi(x_j)).
  \end{gather*}\allowdisplaybreaks
  Thus, $\theta(\phi(x))$ is linear. Note that, $\theta(\phi(x))$ is bijective. Hence, $| \mathcal{C} |= | \theta(\phi(\mathcal{C})) |$ and $d=d_H$.\\
 \end{proof}
\section{Examples }\allowdisplaybreaks

 One of the basic problems for a linear code $[n, k]$ is to optimize d and obtain $d_{MDS} = n - k + 1$. In the following examples, we have obtained linear codes that are near to MDS as $ d_{MDS} - d \leq 3$ by using the structure of $\mathcal{R}$. Here, we determine the parameters of the codes using Magma Computation Software \cite{Bosma}.

\begin{example}\allowdisplaybreaks
    For $n=5$, we have $x^5-1=(x+1)(x+w^3)(x+w^{6})(x+w^9)(x+w^{12})\in\mathbb F_{16}[x]$. Let $f_1=x+1,~f_2=x+w^3,~f_3=x+w^{6}, f_4=x+w^9,f_5=x+w^{12}$. If $k=1$, we have some cyclic codes of length $5$ over $\mathcal{R}$, and their Gray images are as follows.

{\renewcommand{\arraystretch}{1.2}{\begin{longtable}{ | l | c | c | }\hline
 Generators of cyclic codes & Images under $\theta$ & $d_{MDS}$ \\\hline	
 $\langle (w^3v^3 + w^5v^2 + w^4v + w^{13} +w )f_2 + wf_1\rangle$  &  $[20,14,5]$ & $7$ \\
	$\langle (w v^3+w^2v^2+w^5)f_2,(w^2 v^2+w v+w^5)f_1f_3f_4f_5\rangle$  &  $[20,16,3]$ & $5$ \\
	$\langle (w v^3+w^2v^2+w^5)f_2,(w^2v^3+w v +w^4 )f_1f_3f_4f_5\rangle$ &  $[20,15,4]$& $6$  \\ 
	$\langle (w v+w^2v^2+vw^5)f_2,(w^2v^2+vw^7)f_3f_4f_5\rangle$ & $[20,19,2]$ & $2$  \\\hline
\end{longtable}}}
   
\end{example}

\begin{example}
    For $n=5$ and $k=3$, we have the following cyclic codes of length $5$ over $\mathcal{R}$ and their Gray images are as follows.
        
		{\renewcommand{\arraystretch}{1.2}\begin{longtable}{| l | c | c | }

		\hline
			    	Generators of cyclic codes & Images under $\theta$
                      & $d_{MDS}$ \\
					\hline	
					$\langle (w^5u^2v^3 + vw^6)f_2,(w^5uv^3 + v^3w^6)f_3f_4f_5,$
                      &  $[60,54,4]$& $7$  \\
                        $ (w^5uv + uv^3w^6 + v^3w^9  )f_1f_2f_3f_5\rangle $    & &  \\
					$\langle (w^5u^2v^3 + vw^6)f_2,(w^7u^2v^3 + w^3v^2)f_3f_4f_5,$ &
                        $[60,56,3]$
                      & $5$ \\
                       $(wu^2v^3 + w^2u^2)f_2f_3f_1f_5 + (wu + w^5 + wv $ & &  \\
                   $+ wv^3 + wv^2)f_4f_3f_1f_5\rangle $ & &  \\
					$\langle (w^5u^2v^3 + w^5u^2v^2)f_2,(w^5u^2v^3 + w^7u^2)f_3f_4f_5,$ &  $[60,57,2]$ & $4$  \\
                       $(w^5u^2v^3 + w^5vu + w^5v^3)f_2f_3f_5f_4\rangle$ & &  \\
                      \hline
		\end{longtable}}
\end{example}
\begin{example}
		For $n=3$, the factorization of $x^3-1$ is given by $x^3-1=(x+1)(x+w^5)(x+w^{10})\in \mathbb F_{16}[x]$. Let $f_1=x+1,~f_2=x+w^5,~f_3=x+w^{10}$. Then, some good codes under the Gray images of cyclic codes of length $n = 3, k = 4$ over $\mathcal{R}$ are as follows.
   
    \allowdisplaybreaks
    {\renewcommand{\arraystretch}{1.2}\begin{longtable}{ | l | c | c | }
					\hline
			    	Generators of cyclic codes & Images under $\theta$
                      & $d_{MDS}$ \\
					\hline	
                	$\langle(wu^3v^3 + w^3u^3v + w^5u^2v^3 + w^7u^2v + w^{11}uv^3$  &  $[48,41,5]$
                      & $8$ \\
                   $+ w^{13}uv + w^3v^3 + w^7v )f_2 + (wu^3v^3 + w^5u^3v +$  & &  \\
                    $w^6u^2v^3 + w^7u^2v + w^{10}uv^3 + w^{13}uv + w^3v^3 $ & &  \\
                      $+ w^7v + w^{13})f_1, (wu^2v^2 + w^4u^3 + w^6u^2v^2 +$       & &  \\
                     $ w^8u^2 + w^{10}uv^2 + w^{14}u + w^2v^2 + w^4 )f_3\rangle$ & &  \\
					$\langle (u^3v^3+vw)f_2,(w u^3v^3+v^2+w^2)f_3,$  &  $[48,43,3]$
                      & $6$ \\
                       $(u^3v^2w^3+u^2v^2w^3+u^2w^4+w^4)f_1f_3\rangle$ & &  \\
					$\langle (u^3v^3+vw)f_2,(w u^3v^3+v^2+w^2)f_3,$
                      &  $[48,42,4]$& $7$  \\
                        $  (w^9u^2v^3 +w^9uv^3+w^8u^2+w^8u^3+w^3vu^2 $    & &  \\
                       $+w^7vu^3+w^7v^2u^2+w^3v^2u^3)f_1f_2\rangle$ & &  \\ 
					$\langle (u^3v^2w+u^3vw^3)f_2,(u^3v^2w^2+u^3vw^7)f_3,$ &
                        $[48,46,2]$
                      & $3$ \\
                       $(u^3v^2w^{13}+u^3vw^5+w^7)f_2f_1\rangle$ & &  \\
				   \hline
		\end{longtable}}
\end{example}

 \begin{example}
      For $n=7$, the factorization of $x^7-1$ is given by $x^7-1=(x+1)(x^3 + x + 1)(x^3 + x^2 + 1)\in \mathbb F_{16}[x]$. Let $f_1=x+1,~f_2=x^3 + x + 1,~f_3=x^3 + x^2 + 1$. For $n=7,k=1$, and $n=3,k=1$, {we compare the newly obtained codes over $\mathcal{R}=M_4 (\mathbb F_2 [u]/ \langle u^k \rangle)$ and existing codes for the structure $M_4(\mathbb{F}_2)$ \cite{Pal}.} For $n=3$, the factorization of $x^3-1$ is used as given in Example 3.

       \renewcommand{\arraystretch}{1.2}\begin{longtable}{  | p{6cm} | c | c |} \hline
        Generators of cyclic code $(n=7,k=1)$ & Obtained codes & Existing codes \cite{Pal} \\\hline
        $\langle (w v^3+w^2v^2+w^3v+w^7)f_1,$ & $[28,27,2]$ & $[28,25,2]$ \\
           $ (w^{11} v^3+w^{13}v^2+w^{15}v)f_3\rangle$     & & \\

        $\langle (w v^3+w^2v^2+w^3v)f_2, $ & $[28,22,4]$ & $[28,14,3]$ \\
        $(w^{3} v^3+w^{9}v)f_1f_3\rangle $ & & \\
        $\langle (w v^3+w^2v^2+w^3v)f_2, (w^{3} v^3 $ & $[28,23,3]$ & $[28,16,3]$ \\
        $+w^{9}v)f_1f_3, (w^5v^3+w v^2+w^{11}v)f_2f_3\rangle$ & & \\
          \hline
          Generators of cyclic code $(n=3,k=1)$ &  &  \\\hline
        $\langle (w v^3+w^2v^2 + w^5)f_2, (w^3v^3+w)f_1f_3\rangle$ & $[12,7,4]$ & $[12,6,3]$ \\
        $\langle (w v^3+w)f_2, (w^{3}v+w^{3}v^2)f_1f_3\rangle$ & $[12,8,3]$ & $[12,8,2]$ \\ \hline
       \end{longtable}
 \end{example}

 \section{Conclusion}
In 2022, Patel et al. \cite{s} discussed the structure of cyclic codes over $M_4(\mathbb F_2 + u\mathbb F_2 )$. \cite{s} for odd length. Initially, we have extended their work to the matrix ring $\mathcal{R}= M_4(\mathbb F_2+u\mathbb F_2+\cdots+u^{k-1}\mathbb F_2 )$. Then, we have derived the structure of cyclic codes, which includes the formation of ideals of the ring $\mathcal{R}$, cyclic codes as direct sums of submodules, and the cardinality of cyclic codes over the ring $\mathcal{R}$, respectively. Here, we have studied the Euclidean and Hermitian duals of the derived cyclic codes. Further, in the module isometry for the given ring, we have defined the Bachoc map and the Gray map, which takes our cyclic code over $\mathcal{R}$ to $\mathbb F_{16}$. In addition, we have provided some non-trivial examples of linear codes over $\mathbb F_{16}$ with good parameters that support our derived results, as well as a comparison of some existing codes for the structure $M_4(\mathbb{F}_2)$ in \cite{Pal} and obtained codes over $\mathbb F_{16}$ for the extended structure $M_4 (\mathbb F_2 [u]/ \langle u^k \rangle)$. Here, the study of self-dual, self-orthogonal, and dual-containing codes over $\mathcal{R}$ will be an interesting problem to be discussed. One can also attempt to derive a deterministic criterion which will prove that our extended family of codes is also a good choice for transmission through the MIMO channel, as given in the case of $M_4(\mathbb{F}_2)$ in \cite{F}.
    

    \section*{Acknowledgements}
The first and third authors are thankful to the Indian Institute of Technology Patna and the Department of Science and Technology (DST), Govt. of India (under SERB File Number: MTR/2022/001052, vide Diary No / Finance No SERB/F/8787/2022-2023 dated 29 December 2022), respectively, for providing financial support.

\section*{Data Availability}
The authors declare that [the/all other] data supporting the findings of this study are available within the article. Any clarification may be requested from corresponding author provided it is essential. \\
\textbf{Competing interests}: The authors declare that there is no conflict of interest regarding the publication of this manuscript.\
\

\end{document}